 \definecolor{BLACK}{gray}{0}
 \definecolor{WHITE}{gray}{1}
 \definecolor{RED}{rgb}{1,0,0}
 \definecolor{GREEN}{rgb}{0,1,0}
 \definecolor{BLUE}{rgb}{0,0,1}
 \definecolor{CYAN}{cmyk}{1,0,0,0}
 \definecolor{MAGENTA}{cmyk}{0,1,0,0}
 \definecolor{YELLOW}{cmyk}{0,0,1,0}
  \theoremstyle{plain}
  \newtheorem{prop}{\protect\propositionname}
\newcommand{\be }{\begin {equation}} \newcommand{\ee }{\end {equation}}
\newcommand{\ket}[1]{|#1\rangle}
\newcommand{\bra}[1]{\langle #1|}
\providecommand{\propositionname}{Proposition}
  \providecommand{\propositionname}{Proposition}
\begin{document}

\title{Coherence in quantum estimation}

\author{Paolo Giorda}

\email{magpaolo16@gmail.com}

\affiliation{Consorzio Nazionale Interuniversitario per la Scienze fisiche della
Materia (CNISM), Italy}

\author{Michele Allegra}

\email{mallegra@sissa.it}

\affiliation{Scuola Internazionale Superiore di Studi Avanzati (SISSA), I-34136
Trieste, Italy}
\begin{abstract}
The geometry of quantum states provides a unifying framework for estimation
processes based on quantum probes, and it allows to derive the ultimate
bounds of the achievable precision. We show a relation between the
statistical distance between infinitesimally close quantum states
and the second order variation of the coherence of the optimal measurement
basis with respect to the state of the probe. In Quantum Phase Estimation
protocols, this leads to identify coherence as the relevant resource
that one has to engineer and control to optimize the estimation precision.
Furthermore, the main object of the theory i.e., the Symmetric Logarithmic
Derivative, in many cases allows to identify a proper factorization
of the whole Hilbert space in two subsystems. The factorization allows:
to discuss the role of coherence vs correlations in estimation protocols;
to show how certain estimation processes can be completely or effectively
described within a single-qubit subsystem; and to derive lower bounds
for the scaling of the estimation precision with the number of probes
used. We illustrate how the framework works for both noiseless and
noisy estimation procedures, in particular those based on multi-qubit
GHZ-states. Finally we succinctly analyze estimation protocols based
on zero-temperature critical behavior. We identify the coherence that
is at the heart of their efficiency, and we show how it exhibits the
non-analyticities and scaling behavior proper of a large class of
quantum phase transitions. 
\end{abstract}
\maketitle

\section{Introduction}

Precision in single parameter estimation processes can be strikingly
enhanced with the use of quantum probes \cite{Wineland_AtomicClock1,Wineland_AtomicClock2,Wineland_AtomicClock3}.
Therefore the search for new and increasingly efficient quantum estimation
schemes is at the basis of the development of several technologies,
and it is an arduous theoretical and experimental challenge \cite{ParisEstimationBook,Wisemanbook}.
Two paradigmatic examples are Quantum Phase Estimation (QPE) and Criticality-Enhanced
Quantum Estimation (CEQE). In the first case the goal is to determine
the phase $\lambda$ of a unitary evolution $e^{-i\lambda G}$ generated
by a fixed operator $G$. QPE is essential for several applications
such as interferometry \cite{Inteferometry1,Interferometry2,Interferometry3},
spectroscopy \cite{spectroscopy1,spectroscopy2}, magnetic sensing
\cite{Sensing1,Sensing2,Sensing3,Sensing4} and atomic clocks \cite{Clocks1,Clocks2}.CEQE
instead exploits the critical behavior of systems undergoing a quantum
phase transition (QPT) to drastically enhance the estimation precision
of the parameter driving the transition; the latter is in general
a dynamic parameter (such as a coupling constant) of a complex many-body
quantum system \cite{ZanardiCEQE,estimationcriticality,ParisDicke}.
In both cases the precision's ultimate bounds can be established by
means of quantum estimation\emph{ }theory \cite{ParisEstimationBook,Wisemanbook,BraunsteinCaves,HayashiInference}.

A fundamental open question is whether it is possible to identify
a single relevant resource underlying the optimal efficiency of all
these estimation tasks.

In answering this question one is led to consider different aspects.
First of all the estimation processes are dynamical, and one may expect
that rather than the static properties of the state probe, what matters
is their dynamical change. Secondly, since the probes are quantum,
one should focus on the prominent resources that distinguish quantum
from classical systems: coherence and correlations. The choice of
coherence is a natural and intuitive one: many estimation protocols
are indeed interference experiments \cite{Inteferometry1,Interferometry2,Interferometry3}.
But which is the \textit{relevant} coherence? And how to quantify
the latter in a consistent and general way?

On the other hand, as for QPE, many authors have focused on quantum
correlations \cite{Quantum enahnced measurements,MetrologyReview,Pezze}.
In particular, entanglement has been often indicated as the key to
achieve a better asymptotic scaling of the sensitivity with the number
of probes used. However, the relevance of quantum correlations can
be and has been questioned in different ways. For example, when the
estimation process is not affected by any noise, protocols based on
completely uncorrelated probes (e.g. multi-round single qubit protocols)
are able to reach the same sensitivity achieved by protocols based
on highly quantum correlated probes (such as GHZ multi-qubit states)
\cite{Quantum metrology,MetrologyReview,Multiroundexp,Multiroundtheory1,Multiroundtheory2,Multiroundtheory4,Mutliroundtheory3}.
This indicates that in the noiseless case quantum correlations are
not an intrinsic prerequisite for efficient QPE. Furthermore, the
presence of noise in the evolution of the system is in general highly
detrimental for the estimation processes, and the sole presence of
quantum correlations is not in general a sufficient condition to counteract
its effects and achieve an enhancement of the estimation sensitivity
\cite{Multiroundtheory4,Acin,AcinAsymptoticEntanglement}. Finally,
there is a more conceptual difficulty in identifying quantum correlations
as a key resource for estimation. Indeed, correlations are typically
defined once a specific tensor product structure (TPS) i.e., a factorization
of the Hilbert space $\mathcal{H}$ in subsystems, is chosen to describe
the whole system. But in general there can be many inequivalent TPSs
and correspondingly many different kinds of (multipartite) quantum
correlations; unless the problem at hand allows to identify a specific
TPS in a unique way, it is not clear which among the various possibilities
should be the relevant one. A possible way of ``ruling out'' the
relevance of certain (quantum) correlations could be the following.
Suppose there is at least one partition of the Hilbert space $\mathcal{H}=\mathcal{H}_{A}\otimes\mathcal{H}_{B}$
such that the whole estimation procedure can be described within the
subsystem $\mathcal{H}_{A}$ alone. Then one can trace out subsystem
$\mathcal{H}_{B}$ and (quantum) correlations between $\mathcal{H}_{A}$
and $\mathcal{H}_{B}$ never show up in the description of the whole
procedure. If the description of the process in the chosen TPS is
equivalent to the original one, then one may argue that those bipartite
correlations do not play any role for the estimation task, and one
has therefore to look for other resources at the basis of the estimation
sensitivity.

As for CEQE, on the other hand, the estimation protocols are based
on the occurrence of zero temperature transitions in the underlying
many-body system. While QPTs have been thoroughly characterized via
(bipartite/multipartite) quantum correlations \cite{entanglement&QPTs,entanglementcriticality,bipartiteentanglementQPT,entanglementdivergence1,entanglementdivergence2},
when one focuses on the use of criticality for enhancing the estimation
sensitivity the main concept of the theory is the information-geometric
notion of statistical distinguishability between neighboring ground
states, rather than correlations \cite{GuFidelityReview,ParisDicke,estimationcriticality,ZanardiCEQE}.

The notion of statistical distinguishability is ubiquitous in all
the above mentioned processes. Indeed, a unified description of both
noiseless/noisy QPE tasks and QPTs is provided by the powerful mathematical
language of information geometry \cite{Zyczkowski,HayashiInference,GuFidelityReview}
in terms of infinitesimal state discrimination. It is therefore natural
to look for connections between the bounds on the sensitivity achievable
in quantum estimation and some fundamental feature of the underlying
quantum system within the information geometry framework. The main
quantity that allows to connect geometry, estimation and QPTs is the
Quantum Fisher Information (QFI). The latter on one hand is proportional
to the statistical geometrical distance between neighboring quantum
states, and on the other hand it provides, via the Quantum Cramer
Rao theorem, the ultimate bounds on parameter estimation with quantum
probes.

In this work, we explore the existing connection between QFI and a
primary resource of quantum probes: \emph{coherence} \cite{speakablecoherence,wintercohernce,coherenceresource1,coherenceresource2,OurCoherence,zanardicoherence,MarvianSpekkens,frozencohernce,Piani,StreltsovGenuineCoherence,StreltsovReviewCoherence}.
In particular, we find a relation between the QFI and the \emph{curvature
of the coherence of the measurement basis that gives the optimal discrimination.
}\textit{\emph{Indeed, coherence is a basis dependent feature and
we show that }}\emph{ the relevant basis is given by the eigenvectors
of}\textit{ the main object of the Cramer-Rao approach to quantum
estimation: the Symmetric Logarithmic Derivative (SLD) $L$}\textit{\emph{.}}\emph{
}The relation found allows in the first place to highlight a possibly
new physical interpretation of the statistical-geometric distance
between infinitesimally close pure and mixed quantum states. On the
other hand, it allows to \textit{identify and quantify the relevant
resource that must be engineered, controlled, exploited and preserved
in QPE and CEQE protocols in order to achieve the highest possible
precision}. The main focus of our work will be QPE. In developing
our theoretical framework, we will show how it can be applied to instances
of both noiseless and noisy QPE protocols. In all treated cases the
relation between coherence and QFI holds independently of any Hilbert
space partition.However, we will argue that in many cases it is possible
to select in a unique way a proper TPS tailored to the problem at
hand. The relevant TPS ($TPS^{R}$) is again suggested by the Symmetric
Logarithmic Derivative.The factorization of the Hilbert space induced
by $L$ allows to neatly examine different aspects of estimation protocols.
In the first place it is possible to find a connection between the
QFI and bipartite classical (rather than quantum) correlations between
the set of observables that are relevant for the process \cite{MIMO}.\textsl{\emph{
Within our perspective the relation between the achievable estimation
precision and coherence vs correlations can be easily discussed. While
coherence is in general fundamental for the process, we will argue
that in many cases quantum correlations such as entanglement and discord
can be seen as irrelevant or even detrimental. Parallelly, we will
show that}} upon adopting the $TPS^{R}$, in many relevant examples
the whole estimation procedure can entirely or effectively be described
only within a subsystem. In such cases, whatever the dimension of
the original quantum system, the estimation process is seen to be
equivalent to a single qubit (multi-round) one. Notable examples are
procedures based on multi-qubit GHZ states or certain class of NOON
states \cite{Vogel}. In particular we will discuss why highly entangled
states, such as GHZ, can lead to a substantial enhancement of the
estimation sensitivity even in presence of some kind of noisy processes.
Finally, we will show that the description given by the $TPS^{R}$
may allow to easily derive meaningful lower bounds on the scaling
of the precision with the number of probes $M$ used or with the dimension
$N$ of the Hilbert space.

In the last part of our work we will turn our attention to CEQE. Here
we succinctly explore the consequences of the found link between information
geometry and coherence for the fidelity approach to Quantum Phase
Transitions, and for the estimation protocols based or zero-temperature
criticality. We show that the non-analiticities that characterize
the approach of a many-body system to a critical point, and that are
at the basis of CEQE, can be interpreted as the divergent behavior
of a specific coherence function.

The paper is organized as follows. In Section \ref{sec:QFI-and-coherence.}
we derive the relation between QFI and coherence. In Section \ref{sec:Coherence-in-phase}
we apply the relation found to two- and $N$-dimensional systems and
we introduce the $TPS^{R}$. With the aid of the latter we discuss
the role of coherence vs (quantum) correlations, and we analyze specific
relevant cases such as protocols based on GHZ and NOON states. In
Section \ref{sec:Coherence-and-noise} we discuss how our framework
can be applied to noisy estimation processes. In particular we thoroughly
discuss a relevant example based on multi-qubit GHZ states. In Section
\ref{sec:Coherence,-Quantum-Phase} we finally assess the role of
coherence in CEQE. For the sake of clarity and conciseness, the details
of the calculations leading to our main results can be found in the
Appendices.

\section{QFI and coherence.\label{sec:QFI-and-coherence.} }

The problem of identifying the ultimate precision in the estimation
of a given parameter $\lambda$ can be described within the Quantum
Crámer-Rao formalism. In this context, an unknown $\lambda\in\mathbb{R}$
parametrizes a family of quantum states $\rho_{\lambda}$ of an $N-$dimensional
quantum system. Given any (unbiased) estimator $\hat{\lambda}$ of
$\lambda$, the ultimate bound in terms of the variance of $\hat{\lambda}$
reads 
\begin{equation}
\Delta^{2}\hat{\lambda}\ge\left(M\ QFI\right)^{-1}
\end{equation}
where $M$ is the number of independent copies of $\rho_{\lambda}$
and 
\begin{equation}
QFI\left(\rho_{\lambda}\right)=Tr\left[\rho_{\lambda}L_{\lambda}^{2}\right]
\end{equation}
is the Quantum Fisher Information. The latter is expressed in terms
of the Symmetric Logarithmic Derivative (SLD) $L_{\lambda}$, the
Hermitian operator that satisfies the equation 
\begin{equation}
\partial_{\lambda}\rho_{\lambda}=\left(\rho_{\lambda}L_{\lambda}+L_{\lambda}\rho_{\lambda}\right)/2
\end{equation}
The bound in general can be attained by implementing an experiment
projecting the state onto the eigenbasis $\mathcal{B}_{\boldsymbol{\mathbf{\alpha}}}^{\lambda}=\left\{ \ket{\alpha^{\lambda}}\right\} _{\alpha=1}^{N}$
of $L_{\lambda}$. The $QFI$ is the maximum of the Fisher Information
$FI\left(\mathcal{B}_{\boldsymbol{\mathbf{x}}};\rho_{\lambda}\right)=\sum_{x}\left(\partial_{\lambda}p_{x}^{\lambda}\right)^{2}/p_{x}^{\lambda}$
over all possible experiments (orthonormal bases) $\mathcal{B}_{\boldsymbol{\mathbf{x}}}=\left\{ \ket{x}\right\} _{x=1}^{N}$,
where $p_{x}^{\lambda}=\left\langle x|\rho_{\lambda}|x\right\rangle $
is the probability of obtaining the outcome $x$: 
\begin{equation}
QFI(\rho_{\lambda})=\max_{\mathcal{B}_{\boldsymbol{\mathbf{x}}}}FI\left(\mathcal{B}_{\boldsymbol{\mathbf{x}}};\rho_{\lambda}\right)=FI\left(\mathcal{B}_{\boldsymbol{\mathbf{\alpha}}}^{\lambda};\rho_{\lambda}\right)
\end{equation}
The above formalism is common to all single-parameter quantum estimation
processes, and the $QFI$ was shown to bear a fundamental information-geometrical
meaning \cite{HayashiInference,Zyczkowski,BraunsteinCaves} since
it is proportional to the Bures metric $g_{\lambda}^{Bures}$: 
\begin{equation}
QFI(\rho_{\lambda})=4g_{\lambda}^{Bures}.
\end{equation}
$g_{\lambda}^{Bures}$ provides the infinitesimal geometric distance
$ds^{2}(\rho_{\lambda},\rho_{\lambda+\delta\lambda})$ between two
neighboring quantum states and their statistical distinguishability;
thus, it measures how well the states, and thus the parameter $\lambda$,
can be discriminated.

We now show that $QFI(\rho_{\lambda})$ can in general be connected
to the variation of the coherence of the basis $\mathcal{B}_{\boldsymbol{\mathbf{\alpha}}}^{\lambda}$
with respect to the state $\rho_{\lambda}$ when the latter undergoes
an infinitesimal change $\rho_{\lambda}\rightarrow\rho_{\lambda+\delta\lambda}$,
with $\delta\lambda\ll1$. In general, the coherence of given basis
$\mathcal{B}_{\boldsymbol{\mathbf{x}}}$ with respect to a state $\rho$
can be measured by the relative entropy of coherence \cite{coherenceresource1,OurCoherence}
\begin{equation}
Coh_{\mathcal{B}_{\boldsymbol{\mathbf{x}}}}(\rho)=-\mathcal{V}(\rho)+H\left(p_{x}\right)\label{eq:coherence_definition}
\end{equation}
where $\mathcal{V}(\rho)$ is the von Neumann entropy of $\rho$ and
$H\left(p_{x}\right)$ is the Shannon entropy of the probability distribution
$p_{x}=\left\langle x|\rho|x\right\rangle $. Now, it is well known
that the QFI can be expressed as 
\begin{equation}
QFI(\rho_{\lambda})=\left[\partial_{\delta\lambda}^{2}D(p_{\alpha}^{\lambda+\delta\lambda}|p_{\alpha}^{\lambda})\right]_{\delta\lambda=0}
\end{equation}
where $D(p_{\alpha}^{\lambda+\delta\lambda}|p_{\alpha}^{\lambda})$
is the relative entropy between the probability distributions $p_{\alpha}^{\lambda}=\left\langle \alpha^{\lambda}|\rho_{\lambda}|\alpha^{\lambda}\right\rangle $
and $p_{\alpha}^{\lambda+\delta\lambda}=\left\langle \alpha^{\lambda}|\rho_{\lambda+\delta\lambda}|\alpha^{\lambda}\right\rangle $;
the latter being the probabilities of the measurement defined by $\mathcal{B}_{\boldsymbol{\mathbf{\alpha}}}^{\lambda}$
realized on $\rho_{\lambda}$ and $\rho_{\lambda+\delta\lambda}$
respectively. The latter equation can be also written as

\begin{equation}
QFI(\rho_{\lambda})=\left[-\partial_{\delta\lambda}^{2}H(p_{\alpha}^{\lambda+\delta\lambda})+\partial_{\delta\lambda}^{2}\mathcal{X}(p_{\alpha}^{\lambda+\delta\lambda}|p_{\alpha}^{\lambda})\right]_{\delta\lambda=0}\label{eq: QFI Relative entropy}
\end{equation}
where $\mathcal{X}(p_{\alpha}^{\lambda+\delta\lambda}|p_{\alpha}^{\lambda})$
is the cross entropy of the two distributions. On the other hand,
from (\ref{eq:coherence_definition}) we obtain 
\begin{equation}
-\left[\partial_{\delta\lambda}^{2}Coh_{\mathcal{B}_{\boldsymbol{\mathbf{\alpha}}}^{\lambda}}(\rho_{\lambda+\delta\lambda})\right]_{\delta\lambda=0}=-\left[\partial_{\delta\lambda}^{2}H(p_{\alpha}^{\lambda+\delta\lambda})\right]_{\delta\lambda=0}+\left[\partial_{\delta\lambda}^{2}\mathcal{V}(\rho_{\lambda+\delta\lambda})\right]_{\delta\lambda=0}\label{eq: Coherence second derivative}
\end{equation}
By comparing (\ref{eq: QFI Relative entropy}) and (\ref{eq: Coherence second derivative})
we obtain the following
\begin{prop}
For a general estimation processes, the QFI is related to the second
order variation of the coherence of the Symmetric Logarithmic Derivative
eigenbasis $\mathcal{B}_{\boldsymbol{\mathbf{\alpha}}}^{\lambda}$
as: 
\begin{equation}
-\left[\partial_{\delta\lambda}^{2}Coh_{\mathcal{B}_{\boldsymbol{\mathbf{\alpha}}}^{\lambda}}(\rho_{\lambda+\delta\lambda})\right]_{\delta\lambda=0}=f(\rho_{\lambda})+QFI(\rho_{\lambda})\label{eq: QFI as second order variation of coherence}
\end{equation}
with $f(\varrho_{\lambda})=-\left[\partial_{\delta\lambda}^{2}\mathcal{X}(p_{\alpha}^{\lambda+\delta\lambda}|p_{\alpha}^{\lambda})\right]_{\delta\lambda=0}+\left[\partial_{\delta\lambda}^{2}\mathcal{V}(\rho_{\lambda+\delta\lambda})\right]_{\delta\lambda=0}$. 
\end{prop}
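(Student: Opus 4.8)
The plan is to obtain the stated relation purely by eliminating the Shannon-entropy term common to the two second-derivative identities already established in (\ref{eq: QFI Relative entropy}) and (\ref{eq: Coherence second derivative}). The single algebraic fact underlying everything is the elementary decomposition of the Kullback--Leibler relative entropy into cross entropy minus Shannon entropy, $D(p|q)=\mathcal{X}(p|q)-H(p)$, applied to the outcome distributions $p_\alpha^{\lambda+\delta\lambda}$ and $p_\alpha^\lambda$ generated by the fixed SLD eigenbasis $\mathcal{B}_{\boldsymbol{\mathbf{\alpha}}}^{\lambda}$.

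First I would recall that, because $\mathcal{B}_{\boldsymbol{\mathbf{\alpha}}}^{\lambda}$ is the optimal measurement basis, the classical Fisher information of the probabilities $p_\alpha^{\lambda+\delta\lambda}$ saturates the quantum bound, so that $QFI(\rho_\lambda)=[\partial^2_{\delta\lambda}D(p_\alpha^{\lambda+\delta\lambda}|p_\alpha^\lambda)]_{\delta\lambda=0}$; inserting the relative-entropy decomposition and differentiating twice at $\delta\lambda=0$ is precisely what yields (\ref{eq: QFI Relative entropy}). I would then solve that identity for the Shannon-entropy contribution, $-[\partial^2_{\delta\lambda}H(p_\alpha^{\lambda+\delta\lambda})]_{\delta\lambda=0}=QFI(\rho_\lambda)-[\partial^2_{\delta\lambda}\mathcal{X}(p_\alpha^{\lambda+\delta\lambda}|p_\alpha^\lambda)]_{\delta\lambda=0}$, and substitute this into the right-hand side of (\ref{eq: Coherence second derivative}), which was itself obtained by differentiating the coherence definition (\ref{eq:coherence_definition}) with the basis held fixed. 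Collecting the surviving cross-entropy and von Neumann terms into $f(\rho_\lambda)$ reproduces (\ref{eq: QFI as second order variation of coherence}) at once.

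Since the calculation is a one-line substitution, the real content lies in the hypotheses that make the two input identities legitimate, and this is where I expect the only genuine care to be needed. The crucial point is that the eigenbasis $\mathcal{B}_{\boldsymbol{\mathbf{\alpha}}}^{\lambda}$ is frozen at the value $\lambda$ while the state is varied to $\lambda+\delta\lambda$: this is exactly the configuration for which the SLD basis is optimal, so that the classical Fisher information of $p_\alpha^{\lambda+\delta\lambda}$ equals $QFI(\rho_\lambda)$ rather than merely bounding it, and the coherence in (\ref{eq: Coherence second derivative}) must be differentiated with the same frozen basis for the two $\partial^2_{\delta\lambda}H$ terms to cancel. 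The remaining obstacle is regularity rather than combinatorics: one needs $H(p_\alpha^{\lambda+\delta\lambda})$, $\mathcal{X}(p_\alpha^{\lambda+\delta\lambda}|p_\alpha^\lambda)$ and $\mathcal{V}(\rho_{\lambda+\delta\lambda})$ all twice differentiable at $\delta\lambda=0$, which holds for a smooth family $\rho_\lambda$ with probabilities bounded away from zero but can fail where eigenvalues of $\rho_\lambda$ cross and $\mathcal{V}$ loses smoothness; I would therefore state the result away from such degeneracies, where the SLD and its eigenbasis are themselves well defined.
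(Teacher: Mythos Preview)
Your proposal is correct and follows exactly the paper's own argument: the paper derives (\ref{eq: QFI Relative entropy}) from the relative-entropy decomposition $D=\mathcal{X}-H$, derives (\ref{eq: Coherence second derivative}) by differentiating the coherence definition (\ref{eq:coherence_definition}) with the SLD basis held fixed, and then obtains Proposition~1 by eliminating the common $\partial_{\delta\lambda}^{2}H(p_\alpha^{\lambda+\delta\lambda})$ term between the two. Your additional remarks on the regularity assumptions (smoothness of $\rho_\lambda$ and avoidance of eigenvalue crossings) are sensible refinements that the paper leaves implicit.
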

Proposition 1 is central in our analysis, as it establishes a link
between the optimal precision in an estimation process, the geometry
of quantum states and the coherence of the optimal measurement basis.
The found connection is further strengthened whenever $f(\rho_{\lambda})=0$
and $Coh_{\mathcal{B}_{\boldsymbol{\mathbf{\alpha}}}^{\lambda}}(\rho_{\lambda+\delta\lambda})$
has a critical point in $\delta\lambda=0$, so that the\textit{ $QFI$
can be expressed as the curvature of $Coh_{\mathcal{B}_{\boldsymbol{\mathbf{\alpha}}}^{\lambda}}(\rho_{\lambda+\delta\lambda})$
around a maximum}. It turns out that these conditions are always verified
for pure states. As proven in Appendix \ref{sec: Appendix Fubini-Study-metric-and}
for pure states $\rho_{\lambda}=|\psi_{\lambda}\rangle\langle\psi_{\lambda}|$
the following holds:
\begin{prop}
If $\rho_{\lambda}$ is pure, then

\begin{equation}
-\left[\partial_{\delta\lambda}^{2}Coh_{\mathcal{B}_{\boldsymbol{\mathbf{\alpha}}}^{\lambda}}(\rho_{\lambda+\delta\lambda})\right]_{\delta\lambda=0}=4g_{\lambda}^{FS}\label{eq: Cohernce vs Fubini Study -pure states}
\end{equation}
i.e., the second order variation of $-Coh_{\mathcal{B}_{\boldsymbol{\mathbf{\alpha}}}^{\lambda}}(\rho_{\lambda+\delta\lambda})$
is proportional to the Fubini-Study metric $g_{\lambda}^{FS}$, the
restriction of the Bures metric to the projective space $\mathcal{PH}$
of pure states. 
\end{prop}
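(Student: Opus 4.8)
The plan is to read off the result from Proposition~1 by showing that the correction term $f(\rho_\lambda)$ vanishes identically along a pure family, after which the identity follows from the fact---quoted in the excerpt---that the Bures metric restricts to the Fubini--Study metric on $\mathcal{PH}$, so that $QFI(\rho_\lambda)=4g_\lambda^{Bures}=4g_\lambda^{FS}$. Since $f(\rho_\lambda)=-[\partial_{\delta\lambda}^2\mathcal{X}]_{0}+[\partial_{\delta\lambda}^2\mathcal{V}]_{0}$ (here $[\,\cdot\,]_0$ denotes evaluation at $\delta\lambda=0$), the task reduces to proving that both of these second derivatives vanish.

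The von~Neumann term is immediate. If the whole family consists of pure states $\rho_{\lambda+\delta\lambda}=\ket{\psi_{\lambda+\delta\lambda}}\bra{\psi_{\lambda+\delta\lambda}}$ (as for a unitary QPE evolution), then $\mathcal{V}(\rho_{\lambda+\delta\lambda})\equiv 0$, so $[\partial_{\delta\lambda}^2\mathcal{V}]_{0}=0$ and the coherence $Coh_{\mathcal{B}_\alpha^\lambda}(\rho_{\lambda+\delta\lambda})$ collapses to the Shannon entropy $H(p_\alpha^{\lambda+\delta\lambda})$ alone.

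The substance is the cross-entropy term, and the key step is to pin down the SLD eigenbasis. For a pure state one may take $L_\lambda=2\partial_\lambda\rho_\lambda=2(\ket{\partial_\lambda\psi_\lambda}\bra{\psi_\lambda}+\ket{\psi_\lambda}\bra{\partial_\lambda\psi_\lambda})$, which satisfies the defining equation because normalization forces $\mathrm{Re}\braket{\psi_\lambda}{\partial_\lambda\psi_\lambda}=0$. Writing $\ket{\phi}$ for the normalized component of $\ket{\partial_\lambda\psi_\lambda}$ orthogonal to $\ket{\psi_\lambda}$ and $c$ for its norm, one finds $L_\lambda=2c(\ket{\phi}\bra{\psi_\lambda}+\ket{\psi_\lambda}\bra{\phi})$, a rank-two operator whose nonzero eigenvectors are $\ket{\alpha_\pm}=\tfrac{1}{\sqrt{2}}(\ket{\psi_\lambda}\pm\ket{\phi})$, while every vector orthogonal to the support $\mathrm{span}\{\ket{\psi_\lambda},\ket{\partial_\lambda\psi_\lambda}\}$ is a zero eigenvector. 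Projecting $\rho_\lambda$ gives $p_\pm^\lambda=\tfrac12$ and $p_\alpha^\lambda=0$ for every remaining outcome, so on the support of $p^\lambda$ the cross entropy reduces to $\mathcal{X}=\log 2\,(p_+^{\lambda+\delta\lambda}+p_-^{\lambda+\delta\lambda})$.

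It remains to control the probability that leaks out of this two-dimensional support. Because the first-order displacement $\ket{\partial_\lambda\psi_\lambda}$ already lies inside the support, the overlap of $\ket{\psi_{\lambda+\delta\lambda}}$ with any zero-eigenvector $\ket{\chi}$ starts only at second order, whence $p_\chi^{\lambda+\delta\lambda}=O(\delta\lambda^4)$ and $p_+^{\lambda+\delta\lambda}+p_-^{\lambda+\delta\lambda}=1-O(\delta\lambda^4)$. Therefore $[\partial_{\delta\lambda}^2\mathcal{X}]_{0}=\log 2\cdot[\partial_{\delta\lambda}^2(1-O(\delta\lambda^4))]_{0}=0$, giving $f(\rho_\lambda)=0$ and hence the claim. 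I expect this degenerate, zero-probability sector---where the $\log p_\alpha^\lambda$ entering $\mathcal{X}$ is formally singular---to be the main obstacle; the resolution is precisely that probability enters it only at fourth order in $\delta\lambda$, so those terms are invisible to the second derivative at the origin. As an independent check one can bypass Proposition~1 and expand the entropy directly: in the gauge $\braket{\psi_\lambda}{\partial_\lambda\psi_\lambda}=0$ one gets $p_\pm^{\lambda+\delta\lambda}=\tfrac12(1\pm s)$ with $s=2c\,\delta\lambda+O(\delta\lambda^2)$, so that $H=\log 2-\tfrac12 s^2+O(\delta\lambda^3)=\log 2-2c^2\delta\lambda^2+\ldots$ and $-[\partial_{\delta\lambda}^2 Coh_{\mathcal{B}_\alpha^\lambda}]_{0}=4c^2=4(\braket{\partial_\lambda\psi_\lambda}{\partial_\lambda\psi_\lambda}-|\braket{\psi_\lambda}{\partial_\lambda\psi_\lambda}|^2)=4g_\lambda^{FS}$.
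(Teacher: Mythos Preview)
Your proof is correct and follows essentially the same route as the paper's (Appendix~A): identify the SLD eigenbasis as $\ket{\alpha_\pm}=\tfrac{1}{\sqrt 2}(\ket{\psi_\lambda}\pm\ket{\hat v^\perp})$ together with an arbitrary completion, observe $p_\pm^\lambda=\tfrac12$ and $p_\chi^{\lambda+\delta\lambda}=O(\delta\lambda^4)$ for kernel directions, and conclude that the second derivative of the Shannon entropy alone gives $QFI=4g_\lambda^{FS}$. The only cosmetic difference is that you frame the argument as ``show $f(\rho_\lambda)=0$ and invoke Proposition~1'' by controlling the cross entropy via $\partial_{\delta\lambda}^2(p_+^{\lambda+\delta\lambda}+p_-^{\lambda+\delta\lambda})|_0=0$, whereas the paper computes the second derivatives $\partial_{\delta\lambda}^2 p_\pm|_0$ explicitly (they are equal and opposite) and evaluates the coherence derivative directly; your closing ``direct check'' is exactly the paper's computation.
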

Therefore $-\left[\partial_{\delta\lambda}^{2}Coh_{\mathcal{B}_{\boldsymbol{\mathbf{\alpha}}}^{\lambda}}(\rho_{\lambda+\delta\lambda})\right]_{\delta\lambda=0}$
determines the geometry of pure quantum states and the relative estimation
bounds that can be derived within the Cramer-Rao formalism. For pure
states $L_{\lambda}$ has only two nonzero eigenvalues, corresponding
to the eigenvectors $|\psi{}_{\lambda}\rangle$ and $\frac{d}{d\lambda}|\psi{}_{\lambda}\rangle$
. Thus the estimation process in fact happens in the single-qubit
space spanned by $|\psi{}_{\lambda}\rangle$ and $\frac{d}{d\lambda}|\psi{}_{\lambda}\rangle$,
and what matters is the change of the coherence in this subspace.
For mixed states, in general $f(\rho_{\lambda})\neq0$. However, as
we will prove below by means of specific examples, in many cases of
interest one has $f(\rho_{\lambda})\ll QFI(\rho_{\lambda})$ so that
the relation $-\left[\partial_{\delta\lambda}^{2}Coh_{\mathcal{B}_{\boldsymbol{\mathbf{\alpha}}}^{\lambda}}(\rho_{\lambda+\delta\lambda})\right]_{\delta\lambda=0}\thickapprox QFI(\rho_{\lambda})\approx4g_{\lambda}^{Bures}$
approximately holds and the variation of coherence is the leading
term that determines the geometry of mixed quantum states and the
relative estimation bounds.

We finally observe that Eq. (\ref{eq: QFI as second order variation of coherence})
and Eq. (\ref{eq: Cohernce vs Fubini Study -pure states}) are very
general since they hold whatever the process that induce the infinitesimal
change $\rho_{\lambda}\rightarrow\rho_{\lambda+\delta\lambda}$. In
the following Section we specialize our analysis to unitary phase
estimation processes in which $\rho_{\lambda+\delta\lambda}=U_{\delta\lambda}\rho_{\lambda}U_{\delta\lambda}^{\dagger}$,
where $U_{\delta\lambda}$ is a unitary operator. In Section \ref{sec:Coherence-and-noise}
we extend the discussion to some relevant non-unitary evolutions.
\\

\section{Coherence in phase estimation; pure and mixed probe states \label{sec:Coherence-in-phase}}

In this section we analyze the above results in the case of unitary
Quantum Phase Estimation processes where 
\begin{equation}
\rho_{\lambda}=\exp\left(-i\lambda G\right)\rho_{0}\exp\left(i\lambda G\right),
\end{equation}
$G$ is a Hermitian traceless operator, and the unknown phase $\lambda$
is the parameter to be estimated. In this case $QFI$ is independent
of $\lambda$, and it is sufficient to address the estimation problem
for $\lambda=0$ \cite{ParisEstimationBook}.

\subsection{The single-qubit case}

We start by analyzing the single-qubit case in which $\rho_{0}$,
$G$ and the generic measurement basis $\mathcal{B}$ can be defined
in the Bloch sphere formalism in terms of the vectors $\vec{z},\hat{\gamma},\hat{b}$
respectively as follows. Without loss of generality we choose the
single qubit state 
\begin{equation}
\varrho_{0}=(1+\vec{z}\cdot\boldsymbol{\sigma})/2
\end{equation}
where $\vec{z}=z\hat{z}=z(0,0,1)$, $0\le z\le1$ and $\boldsymbol{\sigma}=\left(\sigma_{x},\sigma_{y},\sigma_{z}\right)$
is the vector of Pauli matrices; and the phase generator 
\begin{equation}
G=\gamma\left(\hat{\gamma}\cdot\boldsymbol{\sigma}\right)
\end{equation}
with $\hat{\gamma}=(\sin\delta,0,\cos\delta)$, such that its eigenbasis
lies in the $\hat{x}\hat{z}$ plane, forming an angle $0\leq\delta\leq\frac{\pi}{2}$
with $\hat{z}$. \\
A generic measurement basis $\mathcal{B}_{\hat{b}}$ is defined by
the projectors $\Pi_{\pm}^{\hat{b}}=(1\pm\hat{b}\cdot\boldsymbol{\sigma})/2$
with $\hat{b}=\{\sin\theta\cos\phi,\sin\theta\sin\phi,\cos\theta\}$.
For a mixed state ($z<1$) the FI in $\lambda=0$ is given by (for
the proof, see appendix \ref{sec:Appendix II -Results for N=00003D00003D2}):
\begin{equation}
FI(\mathcal{B}_{\hat{b}},\rho_{0},G)=4\frac{(\vec{\gamma}\times\vec{z}\cdot\hat{b})^{2}}{1-(\vec{z}\cdot\hat{b})^{2}}.\label{eq: Fisher Qbit general measurement}
\end{equation}
The maximisation of $FI$ over the measurement basis has a unique
solution (\cite{BraunsteinCaves}) and leads to the choice of the
eigenbasis $\mathcal{B}_{\hat{\alpha}}$ of the SLD, that in our case
corresponds to choosing $\hat{b}=\hat{\alpha}=\{0,1,0\}\propto\hat{\gamma}\times\hat{z}$.
As for the coherence of $\mathcal{B}_{\hat{\alpha}}$ one has that
\begin{equation}
\left[\partial_{\lambda}Coh_{\mathcal{B}_{\alpha}}(\rho_{\lambda})\right]_{\lambda=0}=0
\end{equation}

\begin{equation}
FI(\mathcal{B}_{\alpha},\rho_{0})=QFI(\rho_{0},G)=-\left(\partial_{\lambda}^{2}Coh_{\mathcal{B}_{\alpha}^{0}}(\rho_{\lambda})\right)_{\lambda=0}
\end{equation}
and therefore one obtains result (\ref{eq: QFI as second order variation of coherence}),
with $f(\rho_{\lambda})=0$ (proof in Appendix \ref{sec:Appendix II -Results for N=00003D00003D2}).
Since $\mathcal{B}_{\hat{\alpha}}$ is unique one has that \emph{for
single qubit mixed states }\textit{the necessary and sufficient condition
for attaining the Cramer-Rao bound is the maximization of} \textit{the
coherence of the measurement basis with respect to the state $\rho_{0}$,
and the QFI coincides with its second order variation. }\\
The optimization of $QFI(\rho_{0},G)$ with respect to $G$ leads
to the choice of $\hat{\gamma}\cdot\hat{z}=0$ that corresponds to
the maximization of the coherence of the eigenbasis of $G$ with respect
to $\rho_{0}$, as has been highlighted in \cite{zanardicoherence,speakablecoherence,MarvianSpekkens}.
Taking into account both maximizations, one has $QFI=2z^{2}Tr[G^{2}]$.
Therefore our treatment allows to interpret the optimal estimation
procedure as the one that takes advantage of the full strength of
$G$ in order to variate the relevant coherence i.e., that of the
basis $\mathcal{B}_{\alpha}$. \\
For pure states the measurements axes $\hat{b}$ leading to $QFI$
are not unique. However, $\hat{b}=\hat{\alpha}$ is the only choice
that allows to attain the highest sensitivity in $\lambda$, and at
the same time the lowest sensitivity with respect to small changes
in the measurment angles $\delta\theta,\delta\phi$, possibly due
to imperfections of the measurement apparatus, or to the impurity
of the initial state (see Appendix \ref{sec:Appendix II -Results for N=00003D00003D2}).

\subsection{The $N$-dimensional case\label{sub: QFI and Coherence in  N dimensional-case}}

We now pass to analyze the general case $N=\dim\mathcal{H}>2$, where
there is room for discussing the role of coherence vs correlations
in estimation processes such as QPE. In order to do so we need to
find a direct connection between $QFI$ and the correlations relevant
for the estimation. In this subsection we first describe the main
points of our approach and give general formal results. We will then
illustrate the results by means of specific examples in Sections \ref{sub: Example: states with Max QFI},
\ref{sub: Main Example discordant} and \ref{sub: Main Examples:-GHZ-states}.
\\

The relation between coherence and the $QFI$ given by (\ref{eq: QFI as second order variation of coherence})
holds for any $N$ and irrespectively of any \emph{local} structure
of the given Hilbert space $\mathcal{H}_{N}$. Correlations, instead,
are typically defined between subsystems i.e., when a specific tensor
factorization of the Hilbert space is chosen. An $N-$dimensional
Hilbert space in general admits several (possibly inequivalent) factorizations
in subsystems $\mathcal{H}_{n_{i}}$ 
\[
\mathcal{H}_{N}=\otimes_{_{i}}\mathcal{H}_{_{i}}
\]
with $\Pi_{i}n_{i}=N,\ n_{i}=dim\mathcal{H}_{i}$. Accordingly different
definitions of bi- or multi-partite (quantum) correlations are possible.
The different factorizations are called \emph{tensor product structures}
(TPS). Which decomposition is ``relevant'' or useful should in general
be suggested by the problem at hand.

The first step of our approach is therefore to identify such relevant
factorization. It turns out that, under some hypotheses, for QPE one
can use the eigendecomposition of $L_{0}$ to uniquely identify a
factorization of the Hilbert space into the product of a single qubit
and an $N/2-$dimensional subspace 
\begin{equation}
\mathcal{H}_{N}\sim\mathcal{H}_{2}\tilde{\otimes}\mathcal{H}_{N/2}\label{eq:referenceTPS}
\end{equation}
We will refer to (\ref{eq:referenceTPS}) as the \textit{reference
TPS} , indicate it with $TPS^{R}$ and, for sake of clarity, use for
the corresponding tensor product operator the symbol $\tilde{\otimes}$
to distinguish it from other TPS (e.g., the standard TPS on $M$ qubits).
As we will show below, in $TPS^{R}$, the SLD can be written as $L_{0}=O_{2}\tilde{\otimes}O_{N/2}$
where $O_{2}$, $O_{N/2}$ are operators acting locally on $\mathcal{H}_{2}$
and $\mathcal{H}_{N/2}$ respectively. Therefore the eigenvectors
of $O_{2}$, $O_{N/2}$ form a product basis $\mathcal{B}_{2}\tilde{\otimes}\mathcal{B}_{N/2}$
and the relative projectors $\left\{ \Pi_{\pm}\otimes\Pi_{k}\right\} ,\ k=1,..,N/2$
define: a global von Neumann experiment on $\mathcal{H}_{N}$, whose
outcomes are distributed according a joint probability distribution
$p_{\pm,k}^{\lambda}=Tr\left[\Pi_{\pm}\tilde{\otimes}\Pi_{k}\rho_{\lambda}\right]$;
and local experiments with outcomes distributed according to the marginals
$p_{\pm}^{\lambda}=Tr\left[\Pi_{\pm}\tilde{\otimes}\mathbb{I}_{N/2}\rho_{\lambda}\right]$
and $p_{k}^{\lambda}=Tr\left[\mathbb{I}_{2}\tilde{\otimes}\Pi_{k}\rho_{\lambda}\right]$.
\\

The second step of our approach is based on the use of a relation
between coherence and (classical) correlations that can be found by
applying the definition of coherence (\ref{eq:coherence_definition})$ $$ $
to the case of product bases \cite{MIMO}. In particular for $\mathcal{B}_{2}\tilde{\otimes}\mathcal{B}_{N/2}$
the coherence function can be written as :

\begin{eqnarray}
Coh_{\mathcal{B}_{\pm,k}}(\rho_{\lambda}) & = & -\mathcal{V}(\rho_{\lambda})+H(p_{\pm}^{\lambda})+H(p_{k}^{\lambda})-\mathcal{M}_{L_{0}}^{\lambda}\left(p_{\pm,k}^{\lambda}\right)\label{eq: Coh in terms of MI}
\end{eqnarray}
where 
\begin{equation}
\mathcal{M}_{L_{0}}^{\lambda}\left(p_{\pm,k}^{\lambda}\right)=H\left(p_{\pm}^{\lambda}\right)+H\left(p_{k}^{\lambda}\right)-H\left(p_{\pm,k}^{\lambda}\right)\label{eq:definition of mutualinfo}
\end{equation}
is the classical mutual information for the probability distribution
$p_{\pm,k}^{\lambda}$. We notice that expression (\ref{eq: Coh in terms of MI})
was used in \cite{MIMO} where it was shown that the efficiency of
a communication protocol such as remote state preparation requires
the maximization of the correlations between some relevant observables
i.e., the maximization of the relative mutual information. For certain
kinds of two-qubit states, relation (\ref{eq: Coh in terms of MI})
expresses a general trade-off between correlations and coherence:
when the former is maximized the latter is correspondingly minimized.
For remote state preparation the relation is between static resources
stored in the system state. Here we will see that in dynamical processes
such as quantum estimation protocols the relation is between the changes
of correlations and coherence. 

Let us now proceed to derive the general results that allow to define
the $TPS^{R}$ and to state the relations between $QFI$, coherence
and correlations that follows from (\ref{eq: Coh in terms of MI}).
The $TPS^{R}$ construction builds on the properties of the eigendecomposition
of $L_{0}$ when some hypotheses on $\rho_{0},G,L_{0}$ are satisfied: 
\begin{prop}
Under the following hypotheses: $i)$ $N$ is even; $ii)$ the initial
diagonal state $\rho_{0}=\sum_{n}p_{n}\ket{n}\bra{n}$ is full rank;
$iii)$ $\langle n|G|m\rangle\in\mathbb{R}\ \forall n,m$ i.e., $G$
has purely real matrix elements when expressed in the eigenbasis of
$\rho_{0}$; $iv)$ and $L_{\lambda=0}$ is full rank then: 

3.1 $L_{0}$ is diagonal in a basis 
\begin{equation}
\mathcal{B}_{\boldsymbol{\mathbf{\alpha}=}(\pm,k)}=\{\ket{\alpha_{i,k}}\},\qquad i=\pm,k=1,..,N/2\label{eq:SLDbasis-1}
\end{equation}
with eigenvalues that are opposite in pairs 
\begin{equation}
\alpha_{\pm,k}=\pm\alpha_{+,k}\in\mathbb{R}\backslash\left\{ 0\right\} \label{eq:oppositeEigenvalues-1}
\end{equation}

3.2 The Hilbert space can be decomposed as $\mathcal{H}_{N}=\mathcal{H}_{2}\tilde{\otimes}\mathcal{H}_{N/2}$;
the eigenvectors of the SLD can be written as \textup{ 
\begin{equation}
\ket{\alpha_{i,k}}=\ket{i}\tilde{\otimes}\ket{k},\qquad i=\pm,k=1,..,N/2
\end{equation}
} and the SLD in its diagonal form can be written as 
\begin{equation}
L_{0}=S_{y}\tilde{\otimes}\sum_{k=1,..,N/2}\alpha_{+}^{k}\Pi_{k}\label{eq: SLD TPSR}
\end{equation}
where $S_{y}=\Pi_{+}-\Pi_{-}$ is a Pauli matrix acting locally on
the single qubit sector $\mathcal{H}_{2}$ and $O{}_{N/2}=\sum_{k=1,..,N/2}\alpha_{+}^{k}\Pi_{k}$
is an operator that depends on the eigevalues of $L_{0}$ and acts
locally onsubsystem $\mbox{\ensuremath{\mathcal{H}}}_{N/2}$ 
\end{prop}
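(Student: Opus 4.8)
The plan is to write the SLD explicitly at $\lambda=0$ in the eigenbasis of $\rho_0$, recognise its matrix as $i$ times a real antisymmetric matrix, and then let the canonical form of such matrices deliver both parts of the statement at once. First I would use that for the unitary family $\rho_\lambda=e^{-i\lambda G}\rho_0 e^{i\lambda G}$ the derivative at the origin is $\partial_\lambda\rho_\lambda|_{\lambda=0}=-i[G,\rho_0]$, and substitute it into the defining relation $\partial_\lambda\rho_\lambda=(\rho_\lambda L_\lambda+L_\lambda\rho_\lambda)/2$. Taking matrix elements in the eigenbasis $\{\ket n\}$ of $\rho_0=\sum_n p_n\ketbra nn$, and using that $\rho_0$ is full rank so that $p_n+p_m>0$, I obtain
\begin{equation}
\sandwich{n}{L_0}{m}=-2i\,\sandwich{n}{G}{m}\,\frac{p_m-p_n}{p_n+p_m}.
\end{equation}
Hypothesis $iii)$ then makes every entry purely imaginary, so $L_0=iM$ with $M$ real; and since $\sandwich{n}{G}{m}=\sandwich{m}{G}{n}$ (Hermitian with real entries) while the factor $(p_m-p_n)/(p_n+p_m)$ changes sign under $n\leftrightarrow m$, the matrix $M$ is real and antisymmetric.

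Second, I would invoke the real canonical form of antisymmetric matrices. Because $N$ is even (hypothesis $i)$) and $L_0$, hence $M$, is full rank (hypothesis $iv)$), a real orthogonal transformation brings $M$ to block-diagonal form with exactly $N/2$ nonsingular $2\times2$ blocks, each equal to $\mu_k$ times the real antisymmetric generator $i\sigma_y$, with $\mu_k\neq0$. Multiplying by $i$, in the same real orthonormal basis $L_0$ is block diagonal with blocks $-\mu_k\sigma_y$, whose eigenvalues are $\pm\mu_k$. This already establishes part $3.1$: the spectrum of $L_0$ is real, nonzero, and opposite in pairs.

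For part $3.2$ I would read this block decomposition as a tensor product. Labelling the two basis vectors that span the $k$-th block by the qubit index $i=\pm$, i.e. setting $\ket{\alpha_{i,k}}=\ket i\tilde\otimes\ket k$, the operator acting as $\sigma_y$ inside every block is $S_y\tilde\otimes\mathbb I_{N/2}$, whereas the block-dependent scalar is carried by the diagonal operator $O_{N/2}=\sum_k\alpha_+^k\Pi_k$ on $\mathcal H_{N/2}$ with $\alpha_+^k=-\mu_k$. This realises the factorisation $\mathcal H_N\simeq\mathcal H_2\tilde\otimes\mathcal H_{N/2}$, produces the factorised eigenvectors, and yields $L_0=S_y\tilde\otimes\sum_k\alpha_+^k\Pi_k$ exactly as claimed.

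I expect the only genuine content beyond routine bookkeeping to be the appeal to the antisymmetric canonical form together with the regrouping of its $2\times2$ blocks into a qubit-times-remainder structure. The step deserving most care is checking that hypotheses $iii)$ and $iv)$ are precisely what is required: $iii)$ guarantees the purely imaginary, antisymmetric shape of $L_0$ (without it $M$ would acquire a symmetric part and the spectrum need not be paired), while $iv)$ excludes zero blocks, so that each of the $N/2$ sectors carries a genuine qubit and the identification $\mathcal H_2\tilde\otimes\mathcal H_{N/2}$ is well defined.
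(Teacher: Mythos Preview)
Your proposal is correct and follows essentially the same route as the paper. For part 3.1 the argument is identical: compute the matrix elements of $L_0$ in the eigenbasis of $\rho_0$, observe that hypothesis $iii)$ makes $L_0=i\tilde L_0$ with $\tilde L_0$ real antisymmetric, and invoke the real canonical form of skew-symmetric matrices (together with $i)$ and $iv)$) to obtain $N/2$ nonsingular $2\times2$ blocks with paired eigenvalues $\pm\alpha_k$.

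For part 3.2 there is a mild difference in presentation rather than content. You obtain the tensor product structure by simply relabelling the eigenbasis as $\ket{\alpha_{i,k}}=\ket{i}\tilde\otimes\ket{k}$ and reading off $L_0=S_y\tilde\otimes\sum_k\alpha_+^k\Pi_k$ from the block form. The paper instead builds the $TPS^R$ in the observable-induced language: it explicitly writes down commuting subalgebras $\mathcal A_2\cong u(2)$ and $\mathcal A_{N/2}\cong u(N/2)$ out of sums of projectors $\ketbra{\alpha_{i,k}}{\alpha_{j,h}}$ and checks independence and completeness. Your shortcut is perfectly legitimate (any bijection of a basis onto a product basis defines a unitary $\mathcal H_N\simeq\mathcal H_2\otimes\mathcal H_{N/2}$), and it gets to the factorised form of $L_0$ more quickly; the paper's algebraic construction is more explicit about which operators act locally on each factor, which it then uses downstream when decomposing $QFI$ in Proposition~4.
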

The proof of $3.1)$ is given in the Appendix \ref{sec Appendix :SLD-and-coherence}
while the proof of $3.2)$ is given in Appendix \ref{sec: Appendix SLD-induced-TPS-for N dimensional states}.
The first result depends on the fact that under the stated hypotheses
$L_{0}$ is a Hermitian anti-symmetric operator. Result $3.2)$ relies
on the fact that a possible way to induce a TPS is based on the observables
of the system \cite{Virtual,ObservableInduced}. Indeed, suppose one
has a set of sub-algebras of Hermitian operators $\mathcal{A}_{2}$,
$\mathcal{A}_{N/2}$ satisfying the following conditions: $i)$ commutativity,
i.e., $\left[\mathcal{A}_{2},\mathcal{A}_{N/2}\right]=0$; $ii)$
completeness, i.e. the product of the observables belonging to $\mathcal{A}_{2},\mathcal{A}_{N/2}$
allows to generate the full set of Hermitian operators over $\mathcal{H}_{N}$.
Then $\mathcal{A}_{2}\vee\mathcal{A}_{N/2}\cong\mathcal{A}_{N}$ and
one can induce a factorization $\mathcal{H}_{N}=\mathcal{H}_{2}\tilde{\otimes}\mathcal{H}_{N/2}$
such that: each $O_{2}\in\mathcal{A}_{2}$, $O_{N/2}\in\mathcal{A}_{N/2}$
acts locally on $\mathcal{H}_{2}$ and $\mathcal{H}_{N/2}$ respectively;
the composition (product) of operators can be written as $O_{2}O_{N/2}\left(\mathcal{H}_{2}\tilde{\otimes}\mathcal{H}_{N/2}\right)=O_{N/2}O_{2}\left(\mathcal{H}_{2}\tilde{\otimes}\mathcal{H}_{N/2}\right)=\left(O_{2}\mathcal{H}_{2}\right)\tilde{\otimes}\left(O_{N/2}\mathcal{H}_{N/2}\right)$.
In Appendix \ref{sec: Appendix SLD-induced-TPS-for N dimensional states}
we show how the algebras $\mathcal{A}_{2},\mathcal{A}_{N/2}$ can
be explicitly constructed by taking appropriate sums of projectors
$\Pi_{\pm,k}=|\alpha_{i,k}\rangle\langle\alpha_{i,k}|$ onto the eigenbasis
$\mathcal{B}_{\boldsymbol{\mathbf{\alpha}=}(\pm,k)}$ given by result
$3.1)$.

Having defined $TPS^{R}$, we can now enounce the main result of this
section. The relation between $QFI$, coherence and $\mathcal{M}_{L_{0}}^{\lambda}$
can be stated in the following way : 
\begin{prop}
Under the same hypotheses of Proposition 3 one has: 

4.1 The relation $QFI=\left[-\partial_{\lambda}^{2}Coh_{\mathcal{B}_{\pm,k}}(\rho_{\lambda})\right]_{\lambda=0}+f(\rho_{\lambda})$
is attained in correspondence of a critical point of $Coh_{\mathcal{B}_{\pm,k}}(\rho_{\lambda})$
i.e., $\left[\partial_{\delta\lambda}Coh_{\mathcal{B}_{\pm,k}}(\rho_{\lambda})\right]_{\delta\lambda=0}=0;$ 

4.2 $\mathcal{M}_{L_{0}}^{\lambda=0}=0$ and $\left(\partial_{\lambda}\mathcal{M}_{L_{0}}^{\lambda}\right)_{\lambda=0}=0$
i.e., the observables $S_{y}$ and $O_{N/2}$ are uncorrelated for
$\lambda=0$, and therefore $\lambda=0$ is a minimum for $\mathcal{M}_{L_{0}}^{\lambda}$
and $\left(\partial_{\lambda}^{2}\mathcal{M}_{L_{0}}^{\lambda}\right)_{\lambda=0}\ge0$ 

4.3 the QFI can be written as 
\begin{eqnarray}
QFI & = & FI_{2}+\left(\partial_{\lambda}^{2}\mathcal{M}_{L_{0}}^{\lambda}\right)_{\lambda=0}\label{eq: QFI in terms of partial2 MI}
\end{eqnarray}
where $FI_{2}=\left[\sum_{i=\pm}\frac{\left(\partial_{\lambda}p_{i}\right)^{2}}{p_{i}}\right]_{\lambda=0}$ 

4.4 Given the single qubit reduced density matrix $\xi_{\lambda}=Tr_{\mathcal{H}_{N/2}}\left[\rho_{\lambda}\right]$
one has 
\begin{eqnarray*}
FI_{2}\le & QFI\left(\xi_{\lambda}\right) & \le QFI
\end{eqnarray*}

\end{prop}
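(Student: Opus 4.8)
The plan is to reduce all four sub-claims to two elementary facts about the joint outcome distribution $p_{\pm,k}^\lambda=Tr[\Pi_\pm\tilde{\otimes}\Pi_k\,\rho_\lambda]$ of the optimal von Neumann experiment near $\lambda=0$. The first is a first-order identity obtained by sandwiching the SLD equation $\partial_\lambda\rho_\lambda|_{0}=(\rho_0 L_0+L_0\rho_0)/2$ between the eigenvectors $\ket{\alpha_{\pm,k}}=\ket{\pm}\tilde{\otimes}\ket{k}$: since $L_0\ket{\alpha_{\pm,k}}=\pm\alpha_{+}^{k}\ket{\alpha_{\pm,k}}$ by Proposition 3.1, one gets $\partial_\lambda p_{\pm,k}^\lambda|_{0}=\pm\alpha_{+}^{k}\,p_{\pm,k}^{0}$. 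The second is the pairing symmetry $p_{+,k}^{0}=p_{-,k}^{0}$. I would derive the latter from hypothesis $iii)$: as noted in the proof of Proposition 3.1 the SLD is Hermitian and antisymmetric (equivalently $iL_0$ is real antisymmetric), so its $\pm$-paired eigenvectors are complex conjugates of one another in the real eigenbasis of $\rho_0$; since $\rho_0$ is real and diagonal, $\langle\alpha_{-,k}|\rho_0|\alpha_{-,k}\rangle$ is the complex conjugate of $\langle\alpha_{+,k}|\rho_0|\alpha_{+,k}\rangle$, and both are real and hence equal. This symmetry is the one place where the hypotheses of Proposition 3 are genuinely used, and I expect it to be the main obstacle; everything downstream is bookkeeping.

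Given these two facts, 4.1 and 4.2 follow quickly. For 4.1 the coherence relation is just Proposition 1 applied to the product basis $\mathcal{B}_{\pm,k}$, which by Proposition 3.2 is the SLD eigenbasis, so the new content is the vanishing of the first derivative. Unitarity keeps $\mathcal{V}(\rho_\lambda)$ constant, hence $\partial_\lambda Coh_{\mathcal{B}_{\pm,k}}|_{0}=\partial_\lambda H(p_{\pm,k})|_{0}=-\sum_{\pm,k}(\partial_\lambda p_{\pm,k})\log p_{\pm,k}^{0}=-\sum_k\alpha_{+}^{k}\big(p_{+,k}^{0}\log p_{+,k}^{0}-p_{-,k}^{0}\log p_{-,k}^{0}\big)$, which vanishes termwise by the symmetry. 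For 4.2, summing the first-order identity over $\pm$ gives $\partial_\lambda p_{k}^\lambda|_{0}=\alpha_{+}^{k}(p_{+,k}^{0}-p_{-,k}^{0})=0$, while summing over $k$ gives $p_{\pm}^{0}=1/2$ and $p_{k}^{0}=2p_{+,k}^{0}$; therefore $p_{\pm,k}^{0}=p_{\pm}^{0}p_{k}^{0}$, i.e. $\mathcal{M}_{L_0}^{0}=0$. Non-negativity of the mutual information then forces $\lambda=0$ to be a global minimum, so $\partial_\lambda\mathcal{M}_{L_0}|_{0}=0$ and $\partial_\lambda^2\mathcal{M}_{L_0}|_{0}\ge0$.

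For 4.3 I would start from $QFI=FI(\mathcal{B}_{\pm,k};\rho_0)=\sum_{\pm,k}(\partial_\lambda p_{\pm,k})^2/p_{\pm,k}$ (the SLD basis is optimal) and expand $\partial_\lambda^2\mathcal{M}_{L_0}|_{0}=\partial_\lambda^2\big[H(p_\pm)+H(p_k)-H(p_{\pm,k})\big]_{0}$ with the elementary identity $\partial_\lambda^2 H(p)|_{0}=-\sum_x(\partial_\lambda^2 p_x)\log p_x^{0}-FI(p)$. Thanks to the product structure $p_{\pm,k}^{0}=p_\pm^{0}p_k^{0}$, the logarithm $\log p_{\pm,k}^{0}$ splits additively and, upon marginalizing $\sum_k\partial_\lambda^2 p_{\pm,k}=\partial_\lambda^2 p_\pm$ and likewise in $k$, the three families of $(\partial_\lambda^2 p)\log p^{0}$ terms cancel, leaving $\partial_\lambda^2\mathcal{M}_{L_0}|_{0}=FI(p_{\pm,k})-FI(p_\pm)-FI(p_k)$. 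The already-established $\partial_\lambda p_k|_{0}=0$ kills $FI(p_k)$, and with $FI_2=FI(p_\pm)$ and $QFI=FI(p_{\pm,k})$ this rearranges to $QFI=FI_2+\partial_\lambda^2\mathcal{M}_{L_0}|_{0}$.

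Finally, 4.4 is a double application of the Cram\'er--Rao machinery with no further computation. The marginals satisfy $p_\pm^\lambda=Tr_{\mathcal{H}_2}[\Pi_\pm\,\xi_\lambda]$, so $FI_2$ is exactly the classical Fisher information of the fixed $S_y$-measurement performed on the reduced qubit family $\xi_\lambda$; since $QFI(\xi_\lambda)$ is the maximum of the Fisher information over all measurements, $FI_2\le QFI(\xi_\lambda)$. The upper bound $QFI(\xi_\lambda)\le QFI$ is the monotonicity (data-processing) inequality for the QFI applied to the fixed CPTP map $Tr_{\mathcal{H}_{N/2}}$ that sends $\rho_\lambda$ to $\xi_\lambda$.
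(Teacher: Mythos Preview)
Your proposal is correct and follows essentially the same route as the paper's proof in Appendices~C and~D: both hinge on the first-order identity $(\partial_\lambda p_{\pm,k})_{0}=\pm\alpha_{+}^{k}p_{\pm,k}^{0}$ obtained from the SLD equation, together with the pairing symmetry $p_{+,k}^{0}=p_{-,k}^{0}$, and then deduce 4.1--4.3 from the resulting product structure $p_{\pm,k}^{0}=p_{\pm}^{0}p_{k}^{0}$; 4.4 is handled identically in both via optimality of the SLD measurement and monotonicity of the QFI under partial trace.

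Two minor presentational differences are worth noting. For the symmetry $p_{+,k}^{0}=p_{-,k}^{0}$, the paper argues via the antisymmetry of $[\rho_0,G]$ in the real orthogonal basis $\{\ket{l_n}\}$, whereas you use the cleaner observation that $\ket{\alpha_{-,k}}=\ket{\alpha_{+,k}}^{*}$ (which indeed holds by the explicit construction $\ket{\alpha_{\pm,k}}=(\ket{l_{2k-1}}\pm i\ket{l_{2k}})/\sqrt{2}$, so your argument is safe even in the degenerate case). For 4.3, the paper equates two computations of $\partial_\lambda^{2}Coh$---one via $H(p_{\pm,k})$ and one via the mutual-information decomposition---while you expand $\partial_\lambda^{2}\mathcal{M}$ directly and cancel the $(\partial_\lambda^{2}p)\log p^{0}$ terms using $\log p_{\pm,k}^{0}=\log p_{\pm}^{0}+\log p_{k}^{0}$; the two computations are algebraically equivalent.
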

The proof is given in the last part of Appendix \ref{sec: Appendix SLD-induced-TPS-for N dimensional states}.
Proposition 4 allows to give a new interpretation of the $QFI$ and
has the following several different consequences. \\
Result $4.1)$ shows that the Quantum Cramer-Rao bound is achieved
in correspondence of a critical point of the coherence of the eigenbasis
of the SLD with respect to $\rho_{\lambda}$. This typically corresponds
to a\textit{ maximum.} On the other hand, result $4.2)$ shows that
the correlations between the relevant observables defined by $L_{0}=S_{y}\tilde{\otimes}O_{N/2}$
are minimized. These results mirror the general trade-off between
correlations and coherence mentioned above \cite{MIMO}. Here the
variation of coherence is maximized in correspondence of a minimum
of the correlations between the relevant observables. As we discuss
below and in the following examples the minimization of correlations,
and in particular their complete absence, has some relevant consequences
for the representation of the estimation procedure and its efficiency.
 \\
As for result $4.3)$, Equation (\ref{eq: QFI in terms of partial2 MI})
shows that the $QFI$ can be expressed in terms of two contributions.
The first term $FI_{2}=\left[\sum_{i}\left(\partial_{\lambda}p_{i}\right)^{2}/p_{i}\right]_{\lambda=0}$
is the Fisher information of a single qubit. Indeed, since 
\begin{eqnarray*}
p_{\pm}^{\lambda} & = & Tr_{\mathcal{H}_{N}}\left[\Pi_{\pm}\otimes\mathbb{I}_{N/2}\rho\right]=Tr_{\mathcal{H}_{2}}\left[\Pi_{\pm}\xi_{\lambda}\right]
\end{eqnarray*}
$FI_{2}$ is the Fisher Information corresponding to the measurement
of the local $S_{y}$ onto the reduced density matrix $\xi_{\lambda}=Tr_{\mathcal{H}_{N/2}}\left[\rho_{\lambda}\right]$.
The other term is given by the second order variation of the correlations
$\mathcal{M}_{L_{0}}^{\lambda}=\mathcal{M}\left(p_{\pm,k}^{\lambda}\right)$
between the observables defined by the eigendecompostion of $L_{0}$
via $TPS^{R}$. We notice that on one hand the connection found is
between the $QFI$ and specific\textit{ classical correlations} rather
then quantum ones. On the other hand, since the estimation process
is a dynamical one, the connection involves a variation of those correlations
with $\lambda$. Indeed, since $\mathcal{M}\left(p_{\pm,k}^{\lambda=0}\right)=0$,
the relevant correlations have a minimum in $\lambda=0$, and the
efficiency of the estimation protocol depends on the ``acceleration''
with which those correlations are changed by the unitary evolution
that impresses the phase onto the state. \\
The last result $4.4)$ derives on one hand from the fact that in
general the basis defined by $S_{y}$ \textit{\emph{does not correspond}}
to eigenbasis of the SLD for $\xi_{\lambda}$, therefore in general
$FI_{2}\le QFI(\xi_{\lambda})$; and on the other hand from the fact
that an estimation process realized on a subsystem gives in general
a lower precision than an estimation realized on the whole system,
hence $QFI(\xi_{\lambda})\le QFI$. The relevance of result $4.4)$
stems from the fact that, if one is able to evaluate $\xi_{\lambda}$,
then by working on a single qubit system, independently on the dimension
$N$, one can easily found a lower bound on the $QFI$ i.e., $QFI(\xi_{\lambda})\le QFI$.
This becomes quite relevant whenever one is interested in evaluating
the scaling behavior of $QFI$ with $N$. We will see an example of
how property $4.4)$ can be fruitfully used in Section \ref{sub: Non-commuting-noise},
where we show that for certain noisy estimation processes based on
GHZ-sates, $\xi_{\lambda}$ can be evaluated and the scaling behavior
of $QFI$ can be deduced by means of $QFI(\xi_{\lambda})$.

We close this subsection with a few comments. Overall, the decomposition
given by (\ref{eq: QFI in terms of partial2 MI}), allows to unambiguously
express the $QFI$ in terms of a single qubit Fisher Information and
classical correlations. This result is particularly interesting whenever
$\left(\partial_{\lambda}^{2}\mathcal{M}_{L_{0}}^{\lambda}\right)_{\lambda=0}=0$,
so that $QFI=FI_{2}=QFI(\xi_{\lambda})$. In such cases the $TPS^{R}$
construction shows that the estimation process is effectively a single-qubit
one, whatever the dimension $N$ of the original Hilbert space; and
no (quantum) correlations in the probe state are involved in the process.
This turns out to be the case in some relevant examples we discuss
in the next section. 

As a final remark, let us discuss the generality of the results obtained
in Proposition 3 and 4. We notice that although seemingly restrictive,
the hypotheses stated in Proposition 3 are actually quite general.
Indeed, on one hand $N$ even includes all multi-qubit states. On
the other hand, as it can be seen from the following general expression
for $QFI$ (\cite{ParisEstimationBook}) 
\[
QFI=2\sum_{n\neq m}\frac{(p_{n}-p_{m})^{2}}{(p_{n}+p_{m})}\left|\langle n|G|m\rangle\right|^{2}
\]
where $\left\{ p_{n},\ket{n}\right\} $ are the eigenvalues and eigenvectors
of $\rho_{0}$, for each Hemitian operator $\tilde{G}\in\mathbb{C}^{N}\times\mathbb{C}^{N}$
one can always find a corresponding $G$ such that $\langle n|G|m\rangle=|\langle n|\tilde{G}|m\rangle|\in\mathbb{R}\ \forall n,m$,
and $Tr\left[\tilde{G}^{2}\right]=Tr\left[G^{2}\right]$, and such
that both operators have the same $QFI$. Therefore, as for the estimation
process, the choice of $\tilde{G}$ or $G$ is equivalent and by using
$G$ our results can be applied. As for the requirement that $L_{0}$
is full rank, it can be easily relaxed as it will be shown in the
specific examples below.

\subsection{Examples: states with maximal QFI\label{sub: Example: states with Max QFI}}

We first focus on the situation in which given $\rho_{0}$ one seeks
the operator with fixed norm $\gamma,$ 
\[
G\in\mathcal{O}_{\gamma}=\left\{ G\ |\ Tr[G^{2}]=2\gamma^{2}\right\} 
\]
that allows to obtain the best estimation precision i.e., $QFI_{Max}=\max_{G\in\mathcal{O}_{\gamma}}(\rho_{0},G)$.
We now state the main results, while the detailed calculations are
reported in Appendix \ref{sec: Appendix Examples N dimensional-mixed states Maximal QFI}.
Suppose $\rho_{0}=\sum_{n=1}^{N}p_{n}\ket{n}\bra{n}$ such that its
eigenvalues are arranged in decreasing order, then the optimal operator
reads 
\[
\bar{G}=G_{1N}\left(\ket{1}\bra{N}+\ket{N}\bra{1}\right)
\]
and 
\[
QFI_{Max}=4\gamma^{2}\frac{(p_{1}-p_{N})^{2}}{(p_{1}+p_{N})}
\]
In this case $L_{0}$ is not full-rank, so the basis $\mathcal{B}_{\mathbf{\alpha=\pm},k}$
and the relative $TPS^{R}$ are no longer unique. However, for all
choices of bases $\mathcal{B}_{\mathbf{\alpha=\pm},k}$, Eq. (\ref{eq: QFI as second order variation of coherence})
holds with $f(\rho_{\lambda=0})=0$, and the maximal $QFI$ can be
expressed in terms of the variation of the coherence of any of such
bases as: 
\begin{eqnarray*}
QFI_{Max} & = & -\left[\partial_{\lambda}^{2}Coh_{\mathcal{B}_{\alpha}}(\rho_{\lambda})\right]_{\lambda=0}=4g_{\lambda}^{Bures}
\end{eqnarray*}
As for the decomposition defined in Proposition 4, $QFI_{Max}$ can
be expressed in terms of two contributions whose values, due to the
non-uniqueness of $\mathcal{B}_{\mathbf{\alpha=\pm},k}$ depend on
the specific basis chosen. For all choices of basis, one has that
\begin{eqnarray*}
FI_{2} & \ge & QFI_{Max}\cdot(p_{1}+p_{N})\\
\left(\partial_{\lambda}^{2}\mathcal{M}_{L_{0}}^{\lambda}\right)_{\lambda=0} & \leq & QFI_{Max}\cdot(1-p_{1}-p_{N})
\end{eqnarray*}
In general, when the initial state is mixed, the action of $\bar{G}$
which is relevant for the estimation is partially to change a single
qubit coherence, as measured by $FI_{2}$, and partially to change
the relevant correlations such that $\left(\partial_{\lambda}^{2}\mathcal{M}_{L_{0}}^{\lambda}\right)_{\lambda=0}\neq0$.
In the pure state limit $p_{1}\rightarrow1,p_{N}\to0$, the contribution
of the correlations drops to zero, $FI_{2}\rightarrow QFI$ and one
recovers the result (\ref{eq: Cohernce vs Fubini Study -pure states}).
For pure states the estimation process is therefore in essence a single-qubit
one, where the action of $\bar{G}$ that is relevant for the estimation
is only focused to change  a single qubit coherence, while the correlations
disappear from the picture; and the  $QFI_{Max}=4\gamma^{2}$  is
the maximal $QFI$ one can obtain at fixed $N$ and $Tr\left[G^{2}\right]$.

\subsection{Examples: class of separable states \label{sub: Main Example discordant}}

We now introduce a family of states that allow us to discuss different
aspects of estimation protocols. On one hand we identify coherence
as the relevant resource for the estimation and we discuss the possible
role of different kinds of quantum correlations such as entanglement
and discord \cite{DiscordReveiw}. On the other hand the following
analysis will also give us the opportunity of studying estimation
protocols based on mixtures of GHZ and NOON states and to lay the
foundation for our later discussion about noisy estimation processes
based on GHZ probes (section \ref{sub: Non-commuting-noise}).

The states we focus on can be written as 
\begin{equation}
\rho_{0}=\sum_{k=1}^{N/2}p_{k}\tau_{k}\otimes\Pi_{k}\label{eq: General discordant state}
\end{equation}
defined by $\left\{ \tau_{k},p_{k}\right\} _{k=1}^{N/2}$, where 
\[
\tau_{k}=\left(\mathbb{I}+\vec{n}_{k}\cdot\vec{\sigma}\right)/2,\qquad\vec{n}_{k}=h_{k}(\cos\phi_{k},\sin\phi_{k},0)
\]
are single qubit states, with $\vec{n}_{k}=h_{k}(\cos\phi_{k},\sin\phi_{k},0)$
the relative Bloch vector lying in the $xy$ plane and $\left|\vec{n}_{k}\right|=h_{k}\le1\ \forall k$.

We now choose as shift operator $G=\sigma_{z}\otimes\mathbb{I}_{N/2}$.
In order to identify the resource that is relevant for the estimation
of $\lambda$ we can work in the original TPS in which the state is
defined and later discuss the results in $TPS^{R}$ (details of the
calculations can be found in Appendix \ref{sec: Appendix Examples N dimensional-mixed states Maximal QFI}.
Since the states are block diagonal, the SLD simply reads $ $$L=\oplus_{k}L_{k}$
\cite{Wang SLD and BlockDiagonalStates} with single-qubit contributions
given by $L_{k}=2i\eta\hat{\alpha}_{k}\cdot\bar{\sigma}$ and $\hat{\alpha}_{k}=\hat{n}_{k}\times\hat{z}$.
One has that for such states $f(\rho_{\lambda})=0$ and 
\begin{eqnarray}
QFI & = & \sum_{k}p_{k}QFI_{k}\nonumber \\
QFI_{k} & = & -\partial_{\lambda}^{2}Coh_{\mathcal{B}_{k}}(\tau_{k})=4h_{k}^{2}\label{eq: QFI discordant states}
\end{eqnarray}
i.e., the $QFI$ \textit{is the weighted sum of the second order variations
of single qubit coherences}. Therefore the estimation process is in
essence a single qubit one and \textit{it can be seen as an estimation
process carried over in parallel on $N/2$ qubits}. \\
In general $\vec{n}_{k}\neq\vec{n}_{h},\ h\neq k$, the state $\rho_{0}$,
with respect to the original TPS has zero bipartite entanglement but
non-zero discord. However, the essential resource is coherence rather
than discord (as also suggested in \cite{DiscordPhaseEstimation}).
Indeed, if one fixes the value of $QFI$, the latter can be achieved
by a whole class of states defined by $\left\{ \tau_{k},p_{k}\right\} _{k=1}^{N/2}$,
with different $N$, different purities, very different amount of
quantum correlations (as measured by the discord) between the single
qubit and the $N/2-$dimensional system; and most importantly\textit{
irrespectively of the latter}. Moreover, suppose one compares two
states $\rho_{1}$ and $\rho_{2}$ such that they differ only for
the direction of the Block vector $\vec{n}_{h}$ pertaining to a given
$\tau_{h}$. Suppose for example that in $\rho_{2}$, $\vec{n}_{h}$
does not lye in the $xy$ plane, then $\vec{n}_{h}\cdot\hat{z}\neq0$
and $QFI(\rho_{2})\le QFI(\rho_{1})$ (see Appendix \ref{sec: Appendix Examples N dimensional-mixed states Maximal QFI}):
the presence of the kind of discord implied by this choice of $\vec{n}_{h}$
in $\rho_{2}$ would therefore be \textit{detrimental} for the estimation
process.

As for the interpretation of the result in the reference $TPS^{R}$
we restrict to the simple case in which all single qubit states $\tau_{k}$
are pure. Given the eigenvectors $\ket{\alpha_{\pm,k}}$ corresponding
to each $L_{k}$ one can define 
\begin{eqnarray*}
\ket{\alpha_{\pm,k}} & = & \ket{\pm}\tilde{\otimes}\ket{k}.
\end{eqnarray*}
and one has that in general $\tilde{\otimes}\neq\otimes$ i.e., the
$TPS^{R}$ induced by $L_{0}$ is in general different from the original
TPS. In order to derive the decomposition (\ref{eq: QFI in terms of partial2 MI})
it is sufficient to evaluate $p_{\pm,k}^{\lambda}=\bra{\alpha_{\pm,k}}\rho_{\lambda}\ket{\alpha_{\pm,k}}=\bra{\pm}\tilde{\otimes}\bra{k}\rho_{\lambda}\ket{\pm}\tilde{\otimes}\ket{k}$;
then one can easily find (Appendix \ref{sec: Appendix Examples N dimensional-mixed states Maximal QFI})
that for this general class of states $\left(\partial_{\lambda}^{2}\mathcal{M}_{L_{0}}^{\lambda}\right)_{\lambda=0}=0$
so that $FI_{2}=QFI$ and no correlations are involved in the estimation
process.

In the following we will focus on two subclasses of states of the
type (\ref{eq: General discordant state}) that have been proposed
for use in phase estimation setups. \\

\subsection{Examples: GHZ states\label{sub: Main Examples:-GHZ-states}}

Pure GHZ states are a prototypical case often presented in the literature
\cite{MetrologyReview,Quantum metrology}, in which the precision
in the estimation of $\lambda$ is shown to have an Heisenberg scaling
\cite{MetrologyReview}, and in which the role of entanglement has
been often discussed. GHZ-like states are of the kind 
\begin{equation}
\ket{GHZ_{k}^{\pm}}=\left(\ket{k}_{M}\pm\ket{\bar{k}}_{M}\right)/\sqrt{2}
\end{equation}
where $\ket{k}_{M}\equiv\ket{k_{M},..,k_{1}},\ k_{i}\in\{0,1\}$ is
an $M$-qubit state, $k=\sum_{i=1}^{M}2^{(i-1)}k_{i}$, $k=0,..,2^{M-1}-1$
and $\ket{\bar{k}}_{M}=\ket{\bar{k}_{M},..,\bar{k}_{1}}$ represents
the binary logical negation of $k$. The set $\{\ket{GHZ_{k}^{\pm}}\}$
forms the GHZ-basis for $\mathcal{H}_{2^{M}}$. In order to apply
our framework we discuss the following general mixed state

\begin{equation}
\sum_{k=0}^{N/2-1}p_{k}\ket{GHZ_{k}^{+}}\bra{GHZ_{k}^{+}}\label{eq: GHZ diagonal mixed state}
\end{equation}
The state is a mixture of the GHZ basis states $\ket{GHZ_{k}^{+}}$
and for $p_{0}=1$ one has the typical example of pure GHZ state $\ket{GHZ_{0}^{+}}=\left(\ket{00...0}+\ket{11...11}\right)/\sqrt{2}$
discussed in the literature. As shift operator one typically chooses
$U_{\lambda}=\exp-i\lambda G$ with $G=\sum_{h=1}^{M}\sigma_{z}^{h}$.
The above class of states allows us to develop our treatment of the
estimation problem directly in the $TPS^{R}$ representation, to relate
the $QFI$ to the relevant coherence variations and to analyze its
decomposition according to (\ref{eq: QFI in terms of partial2 MI}).

We start our analysis by noticing that $G\ket{GHZ_{k}^{+}}=\left(M-2|k|\right)\ket{GHZ_{k}^{-}},$
thus the evolution does not couple the various sectors $k$. Here
$|k|$ is the number of ones in the binary representation of $k$
and $\left(M-2|k|\right)$ is the difference between the number of
zeros and ones; since $k=0,..,2^{M-1}-1$, then $|k|\in\left\{ 0,..,M-1\right\} $.
Correspondingly the SLD has a block diagonal form $L=\oplus_{k}L_{k}$
with 
\begin{equation}
L_{k}=2ip_{k}(M-2|k|)\left(\ket{GHZ_{k}^{+}}\bra{GHZ_{k}^{-}}-\ket{GHZ_{k}^{-}}\bra{GHZ_{k}^{+}}\right).
\end{equation}
In each block the eigenvectors are $\ket{\alpha_{\pm,k}}=\left(\ket{GHZ_{k}^{+}}\pm i\ket{GHZ_{k}^{-}}\right)/\sqrt{2}$.
By writing $\ket{\alpha_{\pm,k}}\doteq\ket{\pm}\tilde{\otimes}\ket{k}$
we define the $TPS^{R}$ and the whole Hilbert space can be written
as $\mathcal{H}_{2}\otimes\mathcal{H}_{N/2}$ with $N/2=2^{M-1}$.Each
of the GHZ basis states appearing in (\ref{eq: GHZ diagonal mixed state})
can be written as: 
\begin{equation}
\ket{GHZ_{k}^{+}}=\frac{\left(\ket{+}+\ket{-}\right)}{\sqrt{2}}\tilde{\otimes}\ket{k}=\ket{0}_{z}\tilde{\otimes}\ket{k}\label{eq: GHZ TPSR}
\end{equation}
In this $TPS^{R}$ one has that: 
\begin{eqnarray*}
G & = & S_{x}\tilde{\otimes}\sum_{k}\left(M-2|k|\right)\Pi_{k}\\
L & =2S_{y}\tilde{\otimes} & \sum_{k}\left(M-2|k|\right)\Pi_{k}
\end{eqnarray*}
where $S_{x},S_{y},S_{z}$ are the Pauli operators acting on $\mathcal{H}_{2}$
(see Appendix \ref{sec: Appendix Examples N dimensional-mixed states Maximal QFI}
for a derivation). The initial state reads 
\begin{eqnarray}
\rho_{0} & = & \ket{0}_{zz}\bra{0}\tilde{\otimes}\sum_{k}p_{k}\Pi_{k},\label{eq: GHZ mixed state in TPS}
\end{eqnarray}
(where $S_{z}\ket{0}_{z}=\ket{0}_{z}$), it is a product state in
the $TPS^{R}$, it is in general mixed and it is therefore a special
kind of the states (\ref{eq: General discordant state}).

The evolved state is 
\begin{eqnarray*}
\rho_{\lambda} & =\sum_{k=0}^{N/2-1} & p_{k}\left\{ \exp\left[-i\lambda\left(M-2|k|\right)\ S_{x}\right]\ket{0}_{zz}\bra{0}\exp\left[-i\lambda\left(M-2|k|\right)\ S_{x}\right]\right\} \tilde{\otimes}\Pi_{k}.
\end{eqnarray*}
We start by analyzing the pure state case $p_{0}\rightarrow1$, i.e.,
$\ket{GHZ_{0}^{+}}=\left(\ket{00...0}+\ket{11...11}\right)/\sqrt{2}$.
By tracing out the $\mathcal{H}_{N/2}$ system 
\begin{equation}
Tr_{\mathcal{H}_{N/2}}\left[U_{\lambda}\ket{\psi_{0}}\bra{\psi_{0}}U_{\lambda}^{\dagger}\right]=\exp\left(-i\lambda MS_{x}\right)\ket{0}_{zz}\bra{0}\exp\left(i\lambda MS_{x}\right).\label{eq: GHZ evolution pure state}
\end{equation}
one can clearly see that estimation process based on the multi-qubit
$\ket{GHZ_{0}^{+}}$ is completely equivalent to a single qubit multi-round
strategy \cite{Multiroundtheory4} where a single qubit operator $\exp\left(-i\lambda\sigma_{x}\right)$
is applied $M$ times to the initial state $\ket{0}_{z}$. In both
cases the link with coherence can be found within our approach and
\begin{equation}
QFI=\left[-\partial_{\lambda}^{2}Coh_{\mathcal{B}_{\alpha}}(\ket{0^{\lambda}}_{zz}\bra{0^{\lambda}})\right]_{\lambda=0}=4M^{2}
\end{equation}
where $\ket{0^{\lambda}}_{z}=\exp\left(-i\lambda MS_{x}\right)\ket{0}_{z}$.
The Heinsenberg scaling with $M$ therefore has the very same root
both in the single qubit multi-round protocols and multi-qubit GHZ
pure state case: \textit{it is the ability of $G$ to vigorously change
the relevant single-qubit coherence that allows for such scaling}.
\\
As for the decomposition (\ref{eq: QFI in terms of partial2 MI})
we notice that the evolution takes place in the $k=0$ sector thus
no correlations between subsystem $\mathcal{H}_{2}$ and $\mathcal{H}_{N/2}$
are created and one has that $QFI=FI_{2}$ and $\left(\partial_{\lambda}^{2}\mathcal{M}_{L_{0}}^{\lambda}\right)_{\lambda=0}=0$.
Given the previous discussion one may therefore argue that correlations,
and in particular entanglement, play no role in the estimation process.
As shown by this simple example, by selecting the proper $TPS^{R}$
one can formally establish an equivalence between two seemingly different
estimation procedures and find the common resource that is at the
basis of their efficiency.

We now pass to analyze the general mixed state (\ref{eq: GHZ diagonal mixed state}).
Within the given $TPS^{R}$ representation it is manifest that during
the whole evolution the operator $U_{\lambda}$ does correlate subsystems
$\mathcal{H}_{2}$ and $\mathcal{H}_{N/2}$, but it never creates
entanglement. Since $L=\oplus L_{k}$ one again obtains

\begin{equation}
QFI=\sum_{k}p_{k}QFI_{k}=4\sum_{k}\left(M-2|k|\right)^{2}p_{k}=4\Delta^{2}G
\end{equation}
where $\Delta^{2}G$ is the variance of $G$. As for the relation
with coherence, for each sector $k$ one has 
\begin{eqnarray*}
QFI_{k} & = & \left[-\partial_{\lambda}^{2}Coh_{\mathcal{B}_{\alpha}}(\tau_{k}^{\lambda})\right]_{\lambda=0}=4\left(M-2|k|\right)^{2}
\end{eqnarray*}
where $\tau_{k}^{\lambda}=\exp\left(-i\lambda\left(M-2|k|\right)S_{x}\right)\ket{0}_{zz}\bra{0}\exp\left(i\lambda\left(M-2|k|\right)S_{x}\right)$.
Therefore, in each sector the $QFI_{k}$ is given by the second order
variation of a single qubit coherence, and the estimation process
can be seen as a parallelized version of a single qubit multi-round
one, where in each of the $2^{M-1}$ sectors labeled by $k$ the single
qubit $\ket{0}_{z}$ is rotated in the $zy$ plane by $\exp\left(-i\lambda(M-2|k|)\ S_{x}\right)$.
The global $QFI$ is therefore an average of the sectors' single qubit
$QFI_{k}$'s or equivalently of sectors' single qubit second order
coherences variations.

We now discuss the decomposition of the $QFI$ according to Proposition
4. Here $L_{0}$ is full rank and one can uniquely write (see Appendix
\ref{sec: Appendix Examples N dimensional-mixed states Maximal QFI})

\begin{equation}
FI_{2}=\left(M-2\sum_{k}|k|p_{k}\right)^{2}
\end{equation}
such that in general $\left(\partial_{\lambda}^{2}\mathcal{M}_{L_{0}}^{\lambda}\right)_{\lambda=0}=QFI-FI_{2}\ge0$.
In order to give a physical interpretation of the various terms composing
the $QFI$ we define the following operators: 
\begin{equation}
O_{a}=S_{a}\otimes\sum_{k}\left(M-2|k|\right)\Pi_{k}
\end{equation}
with $S_{a},\ a=x,y,z$ Pauli matrices on $\mathcal{H}_{2}$, and
$Tr\left[O_{a}O_{b}^{\dagger}\right]=0$ when $a\neq b$. We have
that $O_{x}=G$, $2O_{y}=L_{0}$ and it holds $QFI=4\Delta^{2}O_{x}=4\Delta^{2}O_{y}$.
As for the operator $O_{z}$, the latter is diagonal in the GHZ-basis
and therefore it commutes with $\rho_{0}$. Furthermore $O_{z}$ has
the same set of eigenvalues of $O_{x}=G$ i.e, $\left\{ M-2|k|\right\} $.
Since $k=0,..,N/2-1=2^{M-1}-1$ we now can interpret the $\mathcal{H}_{2^{M-1}}=span\left\{ \ket{k}\right\} $
sector of the of the $TPS^{R}$ as an $M-1$ qubit system, such that
$\ket{k}=\ket{k_{M-1},..,k_{1}}$ where $\left(k_{M-1},..,k_{1}\right)$
is the $\left(M-1\right)$-digits binary representation of $k$. In
this way the operator $\sum_{k}\left(M-2|k|\right)\Pi_{k}$ can be
written as 
\begin{eqnarray*}
\sum_{k}\left(M-2|k|\right)\Pi_{k} & = & \mathbb{I}_{2^{M-1}}+\hat{S}_{z}
\end{eqnarray*}
where $\hat{S}_{z}=\sum_{t=0}^{M-1}\sigma_{z}^{t}$ is the total angular
momentum along the $z$ direction for an $\left(M-1\right)$ qubits
system. $\hat{S}_{z}$ is diagonal in the $\ket{k}$ basis and its
eigenvalues are $\left\{ M-1-2|k|\right\} _{k=0}^{M-1}$. With this
representation the shift operator can written as 
\begin{eqnarray}
G & = & S_{x}\otimes\mathbb{I}_{2^{M-1}}+S_{x}\otimes\hat{S}_{z}\label{eq: GHZ G as Interaction Hamiltonian}
\end{eqnarray}
while 
\begin{eqnarray}
O_{z} & = & S_{z}\otimes\mathbb{I}_{2^{M-1}}+S_{z}\otimes\hat{S}_{z}\label{eq: Oz as interaction qubit-spin system}
\end{eqnarray}
The above picture based on $TPS^{R}$ allows to see that the dynamical
evolution enacted by $G$ is fully equivalent to as ``system-bath''
interaction where: the role of the system is played by the single
qubit, its initial state being $\ket{0}_{z}$; the role of the bath
by the $(M-1)$-qubits system defined above, its initial state being
$\sum_{k}p_{k}\Pi_{k}$; and $G$ as described in (\ref{eq: GHZ G as Interaction Hamiltonian})
can be seen as a system-bath interaction Hamiltonian. If one is able
to prepare the state (\ref{eq: GHZ mixed state in TPS}) and to realize
the interaction Hamiltonian (\ref{eq: GHZ G as Interaction Hamiltonian})
the equivalence given by the description in the $TPS^{R}$ provides
\textit{an alternative way of enacting the estimation procedure that
is based on the preparation of (mixtures) of $M$ qubits product states
and does not require the preparation of initial states that are highly
entangled such as the $GHZ$ ones}. While the kind of interaction
described by (\ref{eq: GHZ G as Interaction Hamiltonian}) may be
difficult to realize in a $M$-qubit system, the above picture suggest
that in order to implement the described estimation process one could
resort to an analogous interaction between a single qubit and a spin-$j$
systems prepared in an eigenstate of the corresponding $\hat{S}_{z}$.

As for  the decomposition of $QFI$, it turns out that the relevant
quantities can be simply expressed in terms of $O_{z}$ as 
\begin{eqnarray*}
FI_{2} & = & 4\left\langle O_{z}\right\rangle ^{2}\\
\left(\partial_{\lambda}^{2}\mathcal{M}_{L_{0}}^{\lambda}\right)_{\lambda=0} & = & 4\Delta^{2}O_{z}
\end{eqnarray*}
and by using (\ref{eq: Oz as interaction qubit-spin system}) one
has 
\begin{eqnarray*}
\left\langle O_{z}\right\rangle  & = & 1+\left\langle S_{z}\otimes\hat{S}_{z}\right\rangle \\
\Delta^{2}O_{z} & = & \Delta^{2}\left(S_{z}\otimes\hat{S}_{z}\right)
\end{eqnarray*}
where we have used the fact that for the state (\ref{eq: GHZ mixed state in TPS})
$\left\langle \left(\mathbb{I}_{2}\otimes\hat{S}_{z}\right)\right\rangle =\left\langle S_{z}\otimes\hat{S}_{z}\right\rangle .$Therefore
the value of $FI_{2}$ is determined by the average of interaction
operator $S_{z}\otimes\hat{S}_{z}$ while $\left(\partial_{\lambda}^{2}\mathcal{M}_{L_{0}}^{\lambda}\right)_{\lambda=0}$
is determined by its variance thus providing a physical interpretation
of these quantities. In the above picture, the effect of the system-bath
interaction is to correlate the two subsystems $\mathcal{H}_{2}$
and $\mathcal{H}_{2^{M-1}}$ and the effect on the single qubit is
a conditional rotation around the $\hat{x}$ axis that depends on
$\left(M-2|k|\right)$. Furthermore, when $p_{0}=1$ the initial state
is pure and it can be represented as $\ket{\psi_{0}}=\ket{GHZ_{0}^{+}}$
or in the $TPS^{R}$ as $\ket{\psi_{0}}=\ket{0}_{z}\tilde{\otimes}\ket{0}$.
In this case the bath does not evolve under the action of $G$ and
the two subsystems \textit{remains uncorrelated during the evolution},
such that in particular $\left(\partial_{\lambda}^{2}\mathcal{M}_{L_{0}}^{\lambda}\right)_{\lambda=0}=\Delta^{2}\left(S_{z}\otimes\hat{S}_{z}\right)=0$.
The evolution can now be represented in the single qubit sector only
as (\ref{eq: GHZ evolution pure state}) such that 
\begin{eqnarray*}
QFI & = & FI_{2}=4\left\langle O_{z}\right\rangle ^{2}
\end{eqnarray*}
and the $QFI$ can be alternatively interpreted as: the square of
the average value of $O_{z}$; the variance of the system-bath interaction
Hamiltonian $O_{x}=G$; or the second order variation of a single
qubit coherence. Therefore, once the evolution has taken place, as
for the estimation process \textit{one only needs to realize the single
qubit measurement} defined by the $L_{0}$ eigenbasis $\left\{ \ket{0}_{y},\ket{1}_{y}\right\} $,
and the estimation precision is again provided by the second order
variation of a single qubit coherence.

We conclude this section by discussing the possibility of achieving
a quasi-Heisenberg scaling of $QFI$ with GHZ mixed states of the
kind (\ref{eq: GHZ diagonal mixed state}) and large number of qubits
$M$ or alternatively, thanks to the $TPS^{R}$ representation, with
a single qubit system coupled to a $M-1$ qubit one. We first notice
that the scaling can be achieved whenever the distribution $\{p_{k}\}$
is mostly concentrated in the sectors $k$ such that $|k|\ll M$ and/or
$|k|\approx M$ . This happens despite the fact that the initial state
is mixed and, provided it satisfies the previous conditions whatever
the distribution $\{p_{k}\}$ i.e., whatever the (quantum) correlations
that may be built by the operator $U_{\lambda}$ between the subsystems
$\mathcal{H}_{2}$ and $\mathcal{H}_{N/2}$. This is exactly the kind
of situation we will encounter when in Section \ref{sec:Coherence-and-noise}
where we discuss noisy estimation schemes base on GHZ states.\\

\subsection{Examples: (mixed) NOON states}

Another class of states of the type (\ref{eq: General discordant state})
that are worth mentioning are the following general (mixed) NOON states

\begin{equation}
\rho=p_{0}\ket{0,0}\bra{0,0}+\sum_{k=1}^{\infty}p_{k}\left\{ \left[\ket{k,0}\bra{k,0}+\ket{0,k}\bra{0,k}\right]+\left[\eta_{k}\ket{k,0}\bra{0,k}+\eta_{k}^{\ast}\ket{0,k}\bra{k,0}\right]\right\} 
\end{equation}
where $\ket{k,0}=\ket{k}_{a}\ket{0}_{b},\ket{0,k}=\ket{0}_{a}\ket{k}_{b}$
are Fock states of a two-modes $(a,b)$ quantum optical system with
$k$-photons in each mode respectively and $\ket{0,0}$ is the vacuum
contribution. These states have been proposed for use in quantum phase
estimation and have been thoroughly analyzed in Ref. \cite{Vogel}.
Within each sector $k$ the state can be represented as $\tau_{k}\otimes\Pi_{k}$
with 
\begin{equation}
\tau_{k}=\left(\begin{array}{cc}
1/2 & \eta_{k}\\
\eta_{k}^{*} & 1/2
\end{array}\right)
\end{equation}
i.e. as a single qubit state lying in the $xy$ plane with Bloch vector
$\bar{\eta}_{k}=|\eta_{k}|\left(\cos\arg\eta_{k},\sin\arg\eta_{k},0\right)$;
where $|\eta_{k}|$ determines the purity of state \cite{GenoniNatureSinleQubit}.
When now one applies a phase shift by means, for example, of the single
mode operator $U(\lambda)=e^{-i\lambda a^{\dagger}a}$, one has 
\[
\eta_{k}\ket{k,0}\bra{0,k}\rightarrow e^{-ik\lambda}\eta_{k}\ket{k,0}\bra{0,k}\qquad\eta_{k}^{\ast}\ket{k,0}\bra{0,k}\rightarrow e^{ik\lambda}\eta_{k}^{\ast}\ket{k,0}\bra{0,k}
\]
such that, aside for the vacuum contribution, the evolved state is
analogous to (\ref{eq: General discordant state}) and reads 
\begin{equation}
\rho_{\lambda}=p_{0}\ket{0,0}\bra{0,0}+\sum_{k=1}^{\infty}p_{k}\left(\begin{array}{cc}
1/2 & e^{-ik\lambda}\eta_{k}\\
e^{ik\lambda}\eta_{k}^{*} & 1/2
\end{array}\right)\otimes\ket{k}\bra{k}
\end{equation}
Again the process can be seen as a single qubit multi-round like one.
Indeed, in each sector $k$ the phase shift amounts to a rotation
of the intial Bloch vector $\bar{\eta}_{k}$ of an angle $k\lambda$
in the $xy$ plane. In general the estimation process is realized
by enacting measurements represented by operators of the kind $A_{M}=\ket{0,M}\bra{M,0}+\ket{M,0}\bra{0,M}$,
which can be realized by means of interference measurements in a Mach-Zender
interferometer set up. In our picture $A_{M}$ corresponds to measuring
the operator $S_{x}\otimes\ket{M}\bra{M}$. In this set up the measurement
is supposed to be fixed; in terms of the single qubit system, measuring
$A_{M}$ corresponds to measuring along the $\hat{x}$ axis. This
measurement is optimal only when for the state $\tau_{M}^{\lambda}$
it projects onto the eigenstates of $L_{\lambda}^{M}$ i.e., the SLD
pertaining to the sub-block $M$. But this happens only for specific
values of the phase $\lambda$ i.e., when the condition $\left(\arg\eta_{k}-k\lambda\right)\mod\pi/2\approx0$
is verified with sufficiently good approximation\cite{Vogel}, and
the Bloch vector of the rotated state $\tau_{M}^{\lambda}$ is sufficiently
close to the $\widehat{y}$ axis. The efficiency of the protocol depends
on $QFI=p_{M}QFI_{M}$ where $QFI_{M}$ is the usual single qubit
$QFI$ that corresponds to the second order variation of the coherence
of the $L_{\lambda}^{M}$ eigenbasis and $p_{M}$ is the probability
of projecting the state onto the sector $M$. In order to obtain a
super-SQL scaling on one hand $p_{M}\left|\bar{\eta}_{k}\right|^{2}$
must be sufficiently greater than $1/M$; this in particular is true
for a pure NOON state. On the other hand the condition $\left(\arg\eta_{k}-k\lambda\right)\mod\pi/2\approx0$
should be satisfied with sufficient accuracy.

\section{Coherence in phase estimation in presence of noise\label{sec:Coherence-and-noise}}

We now pass to analyze the situation in which the estimation protocols
are affected by noise. In general the estimation processes may be
``noisy'' for different reasons: the imperfections of the state
preparation procedure or the coupling of the probe state to the surrounding
environment. Since in experiments typically one or both of these situations
actually occur, noisy estimation processes have been and are being
object of intense study\cite{AlipourNoisyMetrology,Acin,davidovich,Demkowicz2,Demkowicz,DurNoisyMetrology,Kolodynski,TsangMetrology,MetrologyReview,WangGHZNoise,GenoniNoisyEsitmation}.
In the following, we extend the framework laid down in previous sections
to some specific cases of noisy evolution. This will allow us on one
hand to describe the connection between phase estimation and coherence
in presence of noise, and on the other hand to exemplify our approach
in specific relevant examples.

\subsection{Noise in state preparation and ``commuting noise''}

The extension of our approach is straightforward in at least two cases.
The first case is when the noise acts \emph{before} the phase encoding
starts, i.e., the shift operator is applied to a mixed state $\rho_{0}$
that is the result of a noisy process or an imperfect state preparation
procedure. Examples of this situation may well be the mixed states
such as mixed GHZ and NOON states, analyzed in the previous section.
In this case one can straightforwardly apply the results of Sec. \ref{sec:QFI-and-coherence.}
and \ref{sec:Coherence-in-phase} for mixed states, which hold for
any dimension $N$: in particular, Eqs. (\ref{eq: QFI as second order variation of coherence})
and (\ref{eq: QFI in terms of partial2 MI}) hold without change.
A second case where our approach applies is when the system is coupled
to a noisy environment but the map describing the overall process
$\Lambda_{\lambda,\gamma}\left[\rho_{0}\right]$ is given by the composition
of two different \textit{commuting} maps $\Lambda_{\lambda},\Lambda_{\gamma}$,
enacting the coherent ($\Lambda_{\lambda}$) and decoherent ($\Lambda_{\gamma}$)
part of the evolution (here $\gamma$ is a generic parameter characterizing
the decoherent process). Then $\Lambda_{\lambda,\gamma}\left[\rho_{0}\right]=\Lambda_{\gamma}\left[\Lambda_{\lambda}\left[\rho_{0}\right]\right]=\Lambda_{\lambda}\left[\Lambda_{\gamma}\left[\rho_{0}\right]\right]$
and the estimation process is equivalent to applying the coherent
shift $\lambda$ to the decohered state $\rho_{\gamma}=\Lambda_{\gamma}\left[\rho_{0}\right]$.
This scenario includes several nontrivial cases of noisy evolutions
(neat examples are given in \cite{Demkowicz,Kolodynski,Demkowicz2}).
It also includes the situation in which the unitary evolution is encoded
in a decoherence free subsystem\cite{LidarDFS,ZanardiSFS}, i.e.,
in which the unitary evolution enacting the phase shift $\Lambda_{\lambda}$
takes places in a subspace that is left invariant by the noisy map
$\Lambda_{\gamma}$. In this case, if the initial pure state $\rho_{0}^{enc}$
belongs to the decoherence-free subsystem, it is left untouched by
the noise ($\rho_{0}^{enc}=\Lambda_{\gamma}\left[\rho_{0}^{enc}\right]$)
and the estimation process can take place by means of a proper encoding
$\Lambda_{\lambda}^{enc}$ of the phase shift. The latter corresponds
to a unitary process and our arguments on the role of coherence can
be easily applied

\subsection{``Non-commuting'' noise\label{sub: Non-commuting-noise}}

The above examples show that the approach developed in previous sections
can be straightforwardly extended to whole classes of noisy evolutions.
But what happens when the noisy and the coherent map do not commute?
In such general case the picture changes, mainly because the eigenvalues
$\left\{ \epsilon_{i}^{\lambda}\right\} $ of the output state of
the evolution $\rho_{\lambda}=\Lambda_{\lambda,\gamma}(\rho_{0})$
possibly depend on $\lambda$, and therefore $\mathcal{V}(\rho_{\lambda})$
is not conserved. Then formula (\ref{eq: QFI as second order variation of coherence})
holds with 
\begin{equation}
f(\rho_{\lambda})=f^{\chi}(\rho_{\lambda})+f^{\epsilon}(\epsilon_{i}^{\lambda})-QFI^{c}\label{eq: second order variation Coh - general noisy case}
\end{equation}
where $f^{\chi}(\rho_{\lambda})=-\left[\partial_{\delta\lambda}^{2}\mathcal{X}(p_{\alpha}^{\lambda+\delta\lambda}|p_{\alpha}^{\lambda})\right]_{\delta\lambda=0}$is
the usual term appearing in the noise-free case while the additional
terms 
\begin{eqnarray*}
f^{\epsilon}(\epsilon_{i}^{\lambda}) & = & -\sum_{i}\left(\partial_{\delta\lambda}^{2}\epsilon_{i}^{\lambda+\delta\lambda}\right)_{\delta\lambda=0}\log\epsilon_{i}^{\lambda}\\
QFI^{c} & = & \sum_{i}\left(\partial_{\delta\lambda}^{2}\epsilon_{i}^{\lambda+\delta\lambda}\right)_{\delta\lambda=0}^{2}/\epsilon_{i}^{\lambda}
\end{eqnarray*}
come from the variation of the eigenvalues of $\rho_{\lambda}$ with
$\lambda$. The term $QFI^{c}$ is the ``classical contribution''
to the $QFI$ due to the first order variation of the $\epsilon_{i}^{\lambda}$
with $\lambda$\cite{BraunsteinCaves}. Since $QFI=QFI^{c}+QFI^{Q}$
where $QFI^{Q}$ is the part depending on the variation of the eigenvectors
of $\rho_{\lambda}$ with $\lambda$, relation (\ref{eq: QFI as second order variation of coherence})
can be further simplified as 
\begin{equation}
-\left(\partial_{\delta\lambda}^{2}Coh\right)_{\delta\lambda=0}=f^{\epsilon}(\epsilon_{i}^{\lambda})+f^{\chi}(\rho_{\lambda})+QFI^{Q}\label{eq: Simplified second order variation Coh - general noisy case}
\end{equation}
Discussing the previous relation in the most general case and drawing
general conclusions on complex decoherent processes is a compelling
but rather hard task. In the following we focus on a relevant example
of noisy estimation processes that has been diffusely explored in
the literature: phase estimation based on the $M$-qubit GHZ state
in the presence of noise (\cite{WangGHZNoise}).

The relevance of this case stems on one hand from the fact that, as
previously discussed, GHZ states in the noise-free case allow to attain
the Heisenberg limit $QFI\propto M^{2}$. On the other hand, it is
has been shown \cite{Acin} that GHZ states allow to overcome the
Standard Quantum Limit (SQL) $QFI\propto M$ even in the presence
of noise (for specific kinds of noisy maps). In the following we show
how our approach based on coherence applies and how the results seen
in Section \ref{sub: Main Examples:-GHZ-states} can be extended.

The specific setting we describe in the following has been put forward
in \cite{Acin} and we now briefly review it. Assume that the $M$-qubits
systems undergoes a coherent phase shift and is subject to Pauli diagonal
noise. Then its evolution is governed by the Markovian master equation
\begin{equation}
\partial_{t}\rho=\mathcal{H}(\rho)+\mathcal{L}(\rho)\label{eq: MasterEquationLocal}
\end{equation}
where the unitary part enacting the phase shift is the same used for
the pure state case and it is given by 
\begin{equation}
\mathcal{H}(\rho)=-\frac{i\omega}{2}[\sum_{h}\sigma_{z}^{h},\rho].\label{eq: Coherent part GHZ multi qubit}
\end{equation}
while the decoherent part is given by 
\begin{equation}
\mathcal{L}(\rho)=-\frac{\gamma}{2}\sum_{h}\left[\rho-\alpha_{x}\sigma_{x}^{h}\rho\sigma_{x}^{h}-\alpha_{y}\sigma_{y}^{h}\rho\sigma_{y}^{h}-\alpha_{z}\sigma_{z}^{h}\rho\sigma_{z}^{h}\right]\label{eq: Decoherent part GHZ multi qubit}
\end{equation}
where $\sigma_{a}^{h},\ a=x,y,z$ are Pauli matrices acting on the
$h$-th qubit, $\alpha_{a}\ge0$ and $\sum_{a}\alpha_{a}=1$. Here
the goal is to estimate the frequency $\omega$ with the best possible
precision. The latter satisfies the Quantum Cramer-Rao bound 
\[
\delta^{2}\omega\ T\ge\left(QFI/t\right)^{-1}
\]
where the limits to the precision depend on the instant of time $t$
in which the estimation takes place. There are two extremal cases
of noise relevant for the discussion. In the ``parallel'' case ($\alpha_{z}=1$)
the noise acts locally on each qubit along the $z$ direction. In
the ``transverse'' case ($\alpha_{x}=1$) the noise acts locally
on each qubit along the $x$ direction.

In \cite{Acin} the Authors showed, in agreement with \cite{Demkowicz},
that in the parallel case one can only achieve a precision that scales
in terms of the number of qubits at most as $M^{-1}$, the SQL. On
the other hand, the Authors were able to show that, in the purely
transverse case, by optimizing the time at which the estimation takes
place such that $t_{opt}=(3/\gamma\omega^{2}M)^{1/3}$, then one can
overcome the SQL and obtain $\delta^{2}\omega\ T\ge M^{-5/3}$. In
the following we show how these two cases can be cast in our framework.

In order to analyze the dynamics we use the same $TPS^{R}$ Eq. (\ref{eq: GHZ TPSR})
defined for the noise-free GHZ case in Section (\ref{sub: Main Examples:-GHZ-states}).
Indeed, as we show in Appendix \ref{sec:Appendix F QFI GHZ states noisy},
for any direction of the noise (i.e., for any value of $\alpha_{x},\alpha_{y},\alpha_{z}$)
the state of the $M$-qubit system $\rho_{\omega,\gamma}(t)$, solution
of the master equation, can be written in $TPS^{R}$ representation
as 
\begin{equation}
\rho_{\omega,\gamma}(t)=\sum_{k}p_{k}(t)\tau_{k}(\omega,t)\tilde{\otimes}\ket{k}\bra{k}\label{eq: Acin global state}
\end{equation}
where: $\tau_{k}(\omega,t)$ are single-qubit density matrices ; $p_{k}(t)$
is the probability of finding the system in the $k$-sector and it
does not depend on $\omega$ (for a detailed expression of $\rho_{\omega,\gamma}(t)$
see Appendix \ref{sec:Appendix F QFI GHZ states noisy} or Ref. \cite{Acin}).
Before continuing our analysis we notice that, in principle the SLD
$L_{\omega}(t)$ is a function of time and so are its eigenvectors.
The latter can be numerically evaluated and the relative $TPS^{R}(t)$
determined. The procedure is in general involved. However, having
realized that in $TPS^{R}=TPS^{R}(0)$ (\ref{eq: GHZ TPSR}) the state
(\ref{eq: Acin global state}) has a very simple structure, we proceed
within this picture. Its usefulness can be further justified by noticing
that, since in general $t_{opt}\ll1$, one may argue that the dynamics
taking place in the single qubit factor identified by $TPS^{R}(0)$
well represents the one taking place in the actual single qubit factor
defined by $TPS^{R}(t_{opt})$.

Thanks to the introduced representation,  we can first notice that
the coherent dynamics only affects the single qubit subsystem $\mathcal{H}_{2}$.
The state of the system $\rho_{\omega,\gamma}(t)$ has, for any $\left(t,\omega\right)$,
the generic form introduced in Section \ref{sub: Main Example discordant}.
Therefore 
\begin{equation}
QFI(t,\omega)=\sum_{k}p_{k}(t)QFI_{k}(t,\omega)=\left\langle QFI_{k}\right\rangle (t,\omega)\label{eq: QFI as average over k_M}
\end{equation}
i.e., the QFI is the average of the single qubit $QFI_{k}$ (in the
following we adopt the simplified notation $\sum_{k}p_{k}f_{k}\rightarrow\left\langle f_{k}\right\rangle $).
Equation (\ref{eq: QFI as average over k_M}) is valid for all kind
of noise directions (i.e., all allowed values of $\alpha_{x},\alpha_{y},\alpha_{z}$)
and it neatly shows the qubit-like nature of the estimation process,
allowing us to investigate the connection between the QFI and single
qubit coherences.

We start by focusing on the case of ``parallel'' noise. When $\alpha_{z}=1,\alpha_{y}=\alpha_{x}=0$,
the coherent and decoherent part of the evolution commute and furthermore,
since $\ket{\psi_{0}}=\ket{GHZ_{0}^{+}}=\frac{1}{\sqrt{2}}(\ket{+}+\ket{-})\tilde{\otimes}\ket{0}$,
the evolution takes place entirely in the $k=0$ sector. The master
equation can then be reduced to a master equation for $\tau_{0}$
as (see Appendix \ref{sec:Appendix F QFI GHZ states noisy}) 
\begin{equation}
\partial_{t}\tau_{0}=-\frac{iM\omega}{2}[S_{x},\tau_{0}]-\frac{M\gamma}{2}\left[\tau_{0}-S_{x}\tau_{0}S_{x}\right]\label{eq: master equation for rho_k_0 parallel noise}
\end{equation}
where $S_{x}$ is the Pauli operators acting on the single qubit factor
$\mathcal{H}_{2}$ as $S_{x}\ket{\pm}=\ket{\mp}$. Just as in the
noiseless GHZ case, the estimation process is a single qubit one and
our approach applies. In particular, the eigenvalues of $\tau_{0}$
depend on $\left(t,\gamma\right)$ but not on $\omega$, and $f(\tau_{0}(\omega))=0$;
therefore equation (\ref{eq: Simplified second order variation Coh - general noisy case})
reduces to 
\begin{equation}
-\left(\partial_{\omega_{1}}^{2}Coh\right)_{\omega_{1}=\omega}=QFI^{Q}.
\end{equation}
The Quantum Fisher Information $QFI(t,\omega)=QFI_{k=0}(t,\omega)=-\left(\partial_{\omega_{1}}^{2}Coh\left[\tau_{0}(t,\omega_{1})\right]\right)_{\omega_{1}=\omega}$
can be written for all $\left(t,\omega\right)$ as a second order
variation of the coherence of the measurement basis. The $TPS^{R}$
picture allows to understand why in the ``parallel'' case the estimation
precision cannot beat the SQL. While the phase imprinted, $M\omega$,
is proportional to the number of qubits, the whole decoherence acts
on the single qubit with a strength $M\gamma$ also proportional to
the number of qubits, thus neutralizing the enhancement in the precision
provided by $M\omega$. The same result can be explained by noticing
that in the parallel case the coherent (\ref{eq: Coherent part GHZ multi qubit})
and the decoherent (\ref{eq: Decoherent part GHZ multi qubit}) map
commute, and therefore the estimation process is therefore fully equivalent
to a single qubit multiround one enacted on a highly decohered state.

We now pass to analyze the ``transverse'' noise case. Its relevance
stems from the fact that in \cite{Acin} it was shown that it is possible
to attain a precision in the estimation of $\omega$ that scales with
the number of qubits as $M^{-5/3}$, i.e. to attain a quasi-Heisenberg
scaling. When $\alpha_{x}=1,\alpha_{y}=\alpha_{z}=0$ the coherent
and decoherent part of the state evolution no longer commute. In order
to discuss the role of coherence in the estimation process and to
investigate the connection with single qubit coherences we then have
to use formula (\ref{eq: Simplified second order variation Coh - general noisy case}).
Therefore in general one expects that $QFI_{k}\neq-\partial_{\omega}^{2}Coh\left[\tau_{k}(t=t_{opt},\omega)\right]$.
However, our numerical results show that the behavior of the terms
in (\ref{eq: Simplified second order variation Coh - general noisy case})
is the following. For fixed $\left(t=t_{opt}\right)$ the term $\left\langle QFI_{k}^{Q}\right\rangle $
grows as $M^{5/3}$ and it has the same scaling as $\left\langle QFI_{k}\right\rangle $.
On the other hand , the remaining contributions in Eq. (\ref{eq: second order variation Coh - general noisy case})
do not scale with $M$ and remain small: 
\begin{equation}
\left|\left\langle f_{k}^{\epsilon}(\epsilon_{i}^{\lambda})\right\rangle \right|,\left|\left\langle QFI_{k}^{c}\right\rangle \right|,\left|\left\langle f^{\chi}(\rho_{k})\right\rangle \right|<1\ll\left\langle QFI_{k}^{Q}\right\rangle 
\end{equation}
Therefore, for large $M$, we have: 
\begin{equation}
QFI\approx\left\langle -\partial_{\omega}^{2}Coh\left[\tau_{k}(\omega,t=t_{opt})\right]\right\rangle \approx\left\langle QFI_{k}^{Q}\right\rangle .\label{eq: Average QFI approx Average Coherence}
\end{equation}
In Fig. \ref{fig:QFIscaling}(a) we present, for $\gamma=\omega=1$,
the scaling with $M$ of $\left(\left\langle QFI_{k}(\omega,t=t_{opt})\right\rangle /t_{opt}\right)^{-1}$
and $\left(\left\langle -\partial_{\omega}^{2}Coh\left[\tau_{k}(\omega,t=t_{opt})\right]\right\rangle /t_{opt}\right)^{-1}$,
together with the asymptotic scaling $\left(QFI/t_{opt}\right)^{-1}=(9/8\ \gamma\omega^{2})^{1/3}N^{-5/3}$
that can be predicted with the channel extension method \cite{Acin}.
As shown in the plots, the two quantities have the same asymptotic
behavior leading to the expected super-SQL scaling on the precision
$\delta^{2}\omega\ T$. This result neatly shows that, in the transverse
case, what matters for the achievement of a super-SQL scaling of the
estimation precision is the second order variation of single qubit
coherences. If we now look at the state (\ref{eq: Acin global state})
we see that the single qubit subsystem $\mathcal{H}_{2}$, the only
one relevant for the estimation process, is never entangled with the
subsystem $\mathcal{H}_{2^{M-1}}$ during the whole evolution. 
\begin{figure}
\subfigure{\includegraphics[scale=0.5]{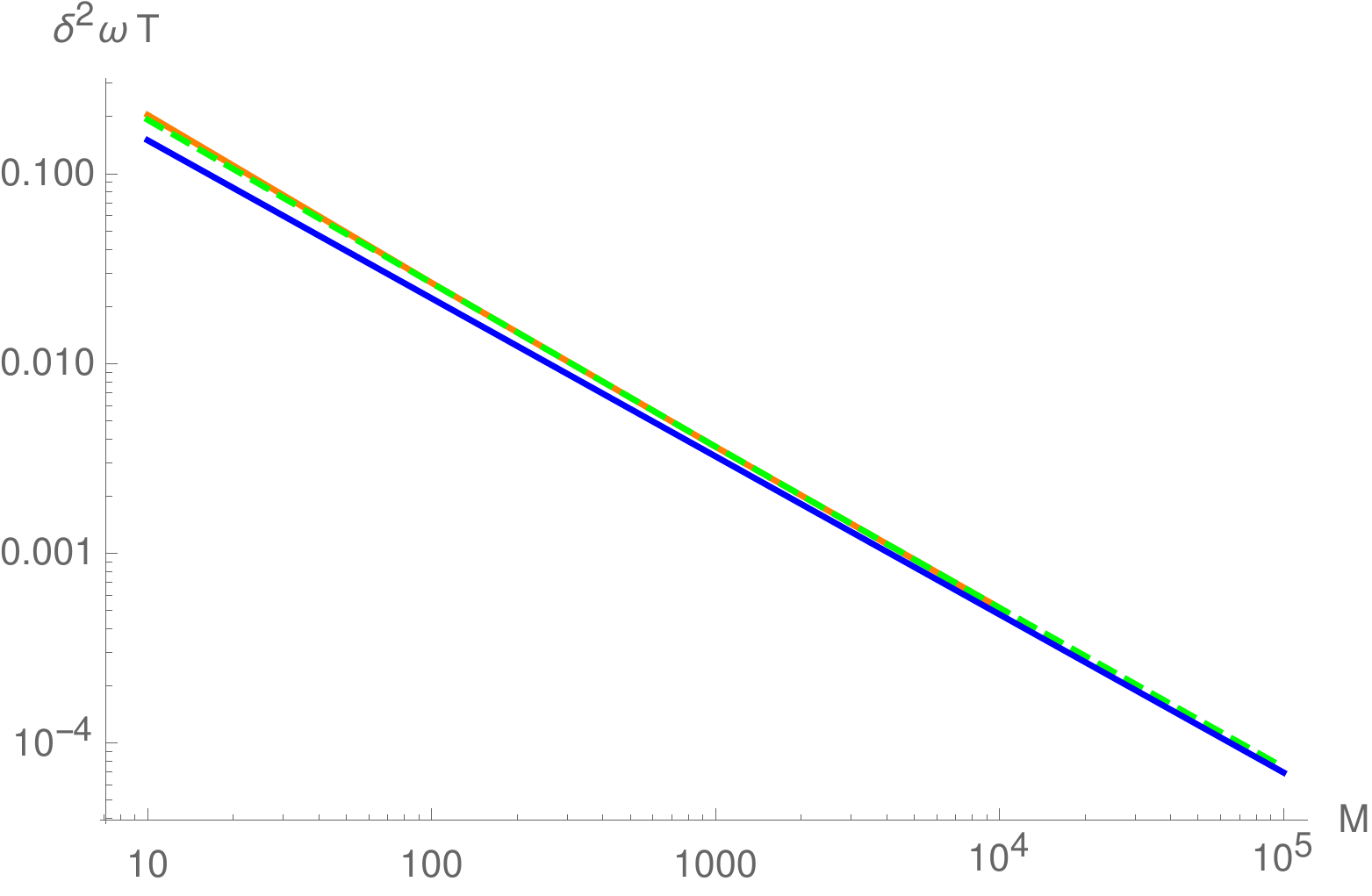}}\subfigure{\includegraphics[scale=0.3]{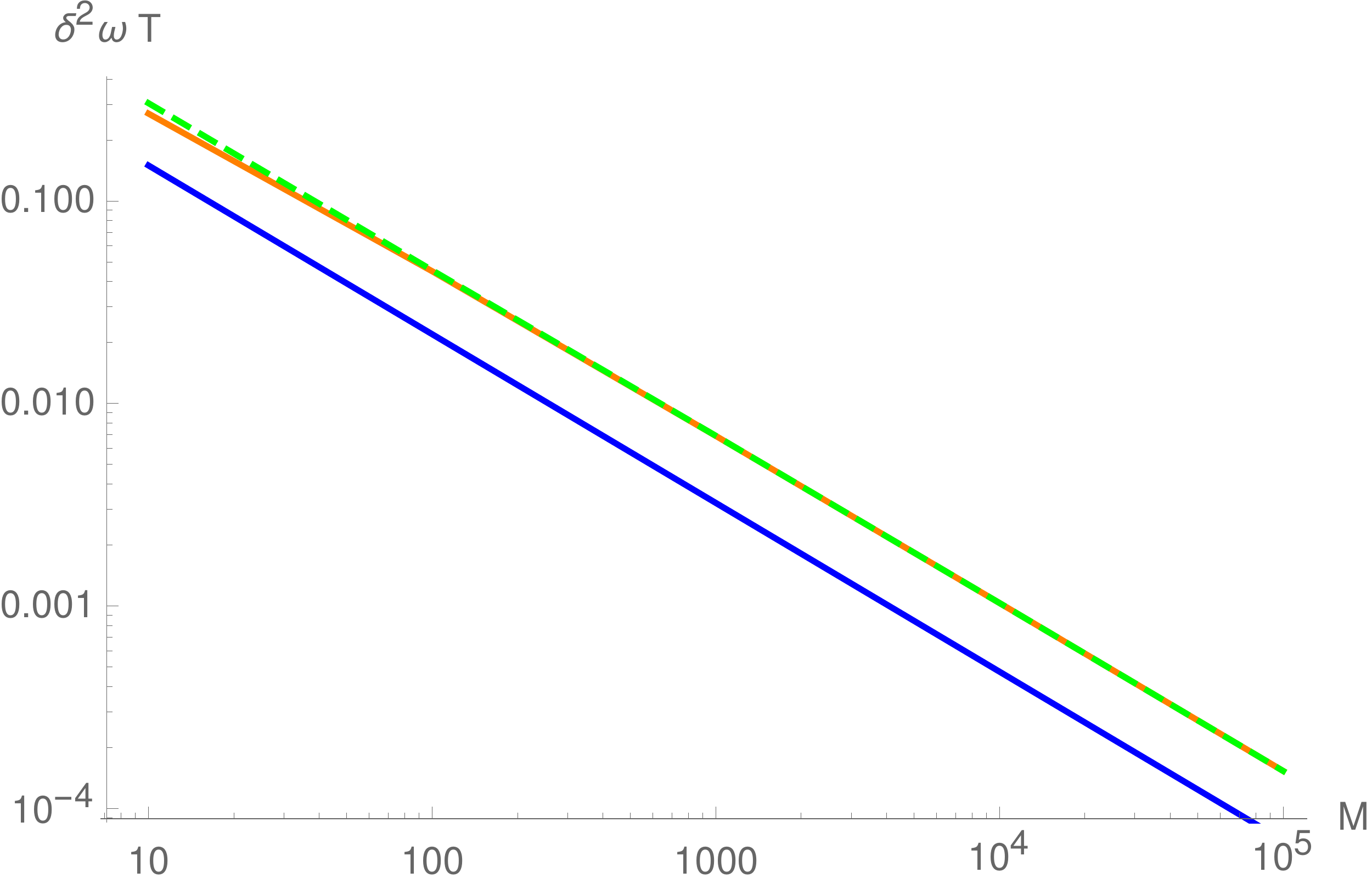}}\caption{\textbf{(a)} $\left(\langle QFI_{k}\rangle/t_{opt}\right)^{-1}$ (orange)
and $-\left(\left\langle \partial_{\omega}^{2}Coh\left[\tilde{\tau}_{k}\right]\right\rangle /t_{opt}\right)^{-1}$(green)
as a function of the number of qubits $M$, for $\omega=\gamma=1$.
The blue curve is the predicted bound $\left(QFI/t_{opt}\right)^{-1}=(9/8\ \gamma\omega^{2})^{1/3}N^{-5/3}$\textbf{(b)}
$\left(QFI_{\xi}/t_{opt}\right)^{-1}$ (orange) and $\left(-\partial_{\omega}^{2}Coh\left[\xi_{\lambda}\right]/t_{opt}\right)^{-1}$(green)
of the reduced qubit state $\xi_{\lambda}$ as a function of the number
of qubits $M$, for $\omega=\gamma=1$. The blue curve is the precision
bound $\left(QFI/t_{opt}\right)^{-1}=(9/8\ \gamma\omega^{2})^{1/3}N^{-5/3}$.
\label{fig:QFIscaling}}
\end{figure}

In order to grasp how the estimation process physically works we give
the following explanation. By using the description of the solution
given by Eq. (\ref{eq: Acin global state}) and by projecting the
master equation onto the $k$ subspaces (see the Appendix \ref{sec:Appendix F QFI GHZ states noisy}
) one obtains a system of coupled differential equations 
\begin{equation}
\partial_{t}\tilde{\tau}_{k}=\mathcal{H}_{k}+\mathcal{L}_{k}\label{eq:MasterEqk}
\end{equation}
that is equivalent to the master equation. For each of the (un-normalized)
single qubit density matrices $\tilde{\tau}_{k}(\omega,t)=p_{k}(t)\tau_{k}(\omega,t)$
one has that the coherent part of the evolution is dictated by 
\begin{eqnarray}
\mathcal{H}_{k} & = & -\frac{i\omega}{2}\left(M-2|k|\right)[S{}_{x},\tilde{\tau}_{k}]\label{eq: Differntial equation for rho_k_M tranverse noise: Coherent part}
\end{eqnarray}
where $|k|$ is the number of ones in the binary representation of
$k$, and the decoherent part by 
\begin{eqnarray}
\mathcal{L}_{k} & =- & \frac{\gamma}{2}\left[M\tilde{\tau}_{k}-S_{z}\tilde{\tau}_{k'(M)}S_{z}-\sum_{h=1}^{M-1}\tilde{\tau}_{k'(h)}\right]\label{eq: Differntial equation for rho_k_M tranverse noise: Decoherent part}
\end{eqnarray}
where $k'(h)$ is an $h$-dependent permutation of the $k$ sectors
(see Appendix \ref{sec:Appendix F QFI GHZ states noisy} for details).
The role of the decoherent part (\ref{eq: Differntial equation for rho_k_M tranverse noise: Decoherent part})
is to couple different sectors: each $k$ is coupled to the sectors
$k'\left(h\right)$ whose binary representation is obtained by flipping
just one of the $M$ bits in $k$. \\
In order to study how the process evolves we focus on the case $\omega=\gamma=1$
. Numerical results show that the system of equations (\ref{eq:MasterEqk})
is approximately solved by the following ansatz. Writing the unnormalized
single qubit states as 
\[
\tilde{\tau}_{k}=p_{k}(t)\left(\frac{\mathbb{I}_{2}+h_{k}\hat{n}_{k}\cdot\vec{S}}{2}\right)
\]
where $h_{k}\hat{n}_{k}$ are the corresponding Bloch vectors (lying
on the $yz$ plane in $TPS^{R}$ representation), an approximate solution
is obtained by replacing each $\hat{n}_{k}$ with the ansatz 
\begin{eqnarray}
\hat{a}_{k} & = & \left(0,\sin\left[(m_{k}-1)\omega t\right],\cos\left[(m_{k}-1)\omega t\right]\right)\label{eq: Transverse noise Approximate Bloch vector}
\end{eqnarray}
where $m_{k}=max(|k|,M-|k|)$. Thus, the overall process can be pictured
as follows. The evolution starts in the $k=0$ sector with the initial
state $\ket{\psi_{0}}=\ket{GHZ_{0}^{+}}=\frac{1}{\sqrt{2}}(\ket{+}+\ket{-})\tilde{\otimes}\ket{0}$.
The result of the evolution is given by the interplay of the coherent
$\mathcal{H}_{k}$ (\ref{eq: Differntial equation for rho_k_M tranverse noise: Coherent part})
and decoherent $\mathcal{L}_{k}$ (\ref{eq: Differntial equation for rho_k_M tranverse noise: Decoherent part})
part. The action of the ``transverse'' decoherence is twofold. On
one hand it progressively populates the sectors $k\in K(t)=\left\{ k:\ m_{k}\lesssim M)\right\} $.
Once these sectors are populated, the corresponding states undergo
the action of the coherent evolution $\mathcal{H}_{k}$. As it can
be inferred from the approximate Bloch vector (\ref{eq: Transverse noise Approximate Bloch vector}),
the latter is similar to the multi-round single qubit evolution described
in Section \ref{sub: Main Examples:-GHZ-states}. The actual phase
(frequency $\omega$) imprinted onto the state is $(m_{k}-1)\omega\lesssim M\omega$.
The process continues up to $t_{opt}$ where one can check that only
the sectors $k\in K(t)$ are substantially populated. During the evolution
each $k\in K(t)$ state undergoes a decoherence process. However the
choice of $t_{opt}\ll1$ guarantees that: $i)$ the effect of decoherence
is not too relevant; $ii)$ the multi round single qubit processes
has sufficient time to impress the phase $(m_{k}-1)\omega\lesssim M\omega$
onto each single qubit state. Should the measurement time be $\ll t_{opt}$
the latter effect would not be seen.

While this qualitative picture is based on an approximate solution
devised for $\omega=\gamma=1$, one can find similar solutions for
other values of $\omega,\gamma$. The relevant point is that the above
discussion shows how the overall estimation process can be seen as
a \textit{parallel multi round estimation one,} enacted onto (a fraction)
of the single qubits $\tilde{\tau}_{k}$ such that $k\in K(t)$. In
terms of the overall $QFI$, each relevant sector contributes with
a $QFI_{k}$ that is fairly well approximated by the second order
derivative of the relative coherence function: 
\[
QFI_{k}\approx QFI_{k}^{Q}\approx-\partial_{\omega}^{2}Coh\left[\tau_{k}(\omega,t=t_{opt})\right]
\]
Since in each relevant sector the process is a multi-round single
qubit one, the super-SQL scaling is the result of a change of the
coherence of the eigenbasis of the SLD ($L_{k}$) pertaining to each
sector.

We conclude this section by analyzing how possible bounds on $QFI$
that can be obtained within the $TPS^{R}$ used. In Section \ref{sub: QFI and Coherence in  N dimensional-case}
we have seen that given a $TPS^{R}$ one has 
\begin{eqnarray*}
FI_{2}\le & QFI(Tr_{\mathcal{H}_{N/2}}\left[\rho_{(t,\omega)}\right]) & \le QFI(\rho_{(t,\omega)}).
\end{eqnarray*}
Now if one aims at computing $FI_{2}$ in principle one needs to determine
the $TPS^{R}(t)$ induced by the time dependent SLD $L_{\omega}(t)$.
  However, as noticed above the $TPS^{R}(0)$ we have used so far
not only is easier to use but it allows to analytically compute $\xi\left(t,\omega\right)=Tr_{\mathcal{H}_{N/2}}\left[\rho_{\lambda}\right]$
and $QFI_{\xi}\left(t,\omega\right)$. The latter turns out to be
a meaningful lower bound to $QFI(\rho_{(t,\omega)})$. Indeed, as
shown in Fig. \ref{fig:QFIscaling}(b) one can verify that also in
this case 
\begin{eqnarray*}
\left(QFI_{\xi}(t=t_{opt},\omega)/t_{opt}\right)^{-1} & \approx & \left(-\partial_{\omega}^{2}Coh\left[\xi(t=t_{opt},\omega_{1})\right]/t_{opt}\right)_{\omega_{1}=\omega}^{-1}
\end{eqnarray*}
and the precision that can be obtained by using $\xi(\omega,t)$ as
probe state allows to attain the usual super-SQL scaling with $M$.
This result can be neatly interpreted as follows. If one traces over
$\mathcal{H}_{2^{M-1}}$ the general master equation, which amounts
to sum over $k$ the equations (\ref{eq:MasterEqk}) , one obtains
the following equation for $\xi\left(t,\omega\right)$ 
\begin{eqnarray*}
\partial_{t}\xi\left(t,\omega\right) & = & \frac{-i\omega}{2}\sum_{k}\left(M-2|k|\right)\left[S_{x},\tilde{\tau}_{k}\right]+\frac{\gamma}{2}\left(\xi\left(t,\omega\right)-S_{z}\xi\left(t,\omega\right)S_{z}\right).
\end{eqnarray*}
If now supposes that $M$ is large enough one can numerically check
that the sectors involved by the evolution on the time scale given
by $t_{opt}$ are only those such that $|k|\ll M$ then the previous
equation can be approximated by 
\begin{eqnarray*}
\partial_{t}\xi\left(t,\omega\right) & \approx & \frac{-i\omega}{2}M\left[S_{x},\xi\left(t,\omega\right)\right]+\frac{\gamma}{2}\left(\xi\left(t,\omega\right)-S_{z}\xi\left(t,\omega\right)S_{z}\right)
\end{eqnarray*}
and one directly sees that, while the coherent part acts with $MS_{x}$
the decoherence part is only proportional to $\gamma/2$. While the
latter is just a rough approximation, it gives an intuitive way to
understand why the estimation shows a super-SQL scaling: compared
to the parallel case, on the time scale given by $t_{opt}$ the single
qubit is only marginally touched by decoherence.

\section{Coherence, Quantum Phase Transitions and Estimation\label{sec:Coherence,-Quantum-Phase}}

We finally analyze the case of criticality-enhanced quantum estimation
processes. The theory is based on the fidelity approach to QPTs \cite{GuFidelityReview}.
The scenario is the following: suppose $H_{\lambda}=H_{0}+\lambda V,\ \lambda\in\mathbb{R}$
is a family of Hamiltonians. The corresponding manifold of ground
states $\left\{ \ket{0^{\lambda}}\right\} _{\lambda}$ of $H_{\lambda}$
can be adiabatically generated by means of the unitary operator \cite{GiordaQPTFidelity}
\[
O_{\lambda}=\sum_{k}\ket{n^{\lambda+\delta\lambda}}\bra{n^{\lambda}}
\]
where $\left\{ \ket{n^{\lambda}},E_{n}^{\lambda}\right\} $ are the
eigenstates and the corresponding eigenvalues of $H_{\lambda}$. The
QFI for a given $\lambda$ reads\cite{estimationcriticality}: 
\begin{equation}
QFI(\lambda)=4\sum_{n>0}\frac{\left|\bra{n^{\lambda}}V\ket{0^{\lambda}}\right|^{2}}{\left(E_{0}^{\lambda}-E_{n}^{\lambda}\right)^{2}}=4g_{\lambda}^{FS}\label{eq: QFI for CEQE}
\end{equation}
and the estimation precision is proportional to the Fubini-Study metric.
Suppose the system described by $H_{\lambda}$ undergoes a QPT when
$\lambda=\lambda_{c}$. If one aims at estimating the parameter $\lambda$
with the highest possible precision, in proximity of the critical
point $\lambda_{c}$ one can exploit the scaling behavior of the $QFI$
with respect the size of the system $L$. The scaling is determined
by the critical exponents that define the given QPT \cite{GuFidelityReview,estimationcriticality}.
Indeed 
\[
g_{\lambda}\sim L^{-\nu\Delta_{Q}+d}
\]
where $\Delta_{Q}=2\Delta_{V}-2\zeta-d$ is a function of the scaling
exponent $\Delta_{V}$ of the operator $V$ that drives the QPT, the
dynamical exponent $\zeta$ and the scaling exponent of the correlation
length $\nu$. If $\Delta_{V}$ is such that $\Delta_{Q}<0$ i.e.,
if $V$ is ``sufficiently'' relevant, $g_{\lambda}$ scales in a
super-extensive way and so does the $QFI$, thus allowing for an enhancement
of the estimation precision. If $V$ is ``insufficiently'' relevant
i.e., $\Delta_{V}$ is such that $\Delta_{Q}>0$, as for example in
Berezinskii-Kosterlitz-Thouless type of QPTs, one cannot take advantage
of the super-extensive behaviour to estimate $\lambda$. \\
It is easy to show that Equation (\ref{eq: QFI for CEQE}) is a particular
case of (\ref{eq: Cohernce vs Fubini Study -pure states}). Indeed,
given $|0^{\lambda+\delta\lambda}\rangle=O_{\lambda}|0^{\lambda}\rangle$
i.e., the ground state for $\lambda+\delta\lambda$, using perturbative
expansion $ $$|0^{\lambda+\delta\lambda}\rangle\approx|0^{\lambda}\rangle+\ket{\vec{v}^{\lambda}}$,
one writes the the first order correction in $\delta\lambda$ as 
\[
\ket{\vec{v}^{\lambda}}=\sum_{n>0}\frac{\bra{n^{\lambda}}V\ket{0^{\lambda}}}{\left(E_{0}^{\lambda}-E_{n}^{\lambda}\right)}\ket{n^{\lambda}}
\]
with $\ket{\hat{v}^{\lambda}}=\ket{\vec{v}^{\lambda}}/|\vec{v}|$.
Consider now the following orthonormal basis 
\[
\mathcal{B}_{0,v}=\left\{ \left(\ket{0^{\lambda}}\pm\ket{\hat{v}^{\lambda}}\right)/\sqrt{2}\right\} \bigcup\left\{ \ket{\alpha_{n}}\right\} _{n=2}^{L^{d}}
\]
where $\left\{ \ket{\alpha_{n}}\right\} _{n=2}^{L^{d}}$ is a generic
set of orthonormal vectors. Then, as proven in Appendix \ref{sec: Appendix Coherence-and-QPTs},
\begin{equation}
4g_{\lambda}^{FS}=\left[-\partial_{\delta\lambda}^{2}Coh_{\mathcal{B}_{0,v}}\left(\ket{0^{\lambda}}\right)\right]_{\delta\lambda=0}
\end{equation}
i.e., the metric coincides with the second order variation of the
coherence of $\mathcal{B}_{0,v}$ with respect to the ground state
$\ket{0^{\lambda}}$. In accordance with (\ref{eq: Cohernce vs Fubini Study -pure states})
the first consequence of the previous result is that the geometry
of the manifold of ground states is determined by the coherence properties
of $\mathcal{B}_{0,v}$. Secondly, the non-analiticities and scaling
properties of $g_{\lambda}^{FS}$, that on one hand signal the presence
of a QPT and on the other hand are at the basis of the CEQE, are those
pertaining to the physical quantity $Coh_{\mathcal{B}_{0,v}}\left(\ket{0^{\lambda}}\right)$,
and in particular its second order variation. The latter is single-qubit
in nature since it pertains the subspace $span\left\{ \ket{0^{\lambda}},\ket{\hat{v}^{\lambda}}\right\} $.

Quantum Phase transitions can in general be signaled by several different
properties of the underlying system. For example, by focusing on subsystems
such as one- or two-site density matrices for spin chains, one can
find several QPTs signature by analyzing the non-analiticities of
correlations and coherence measures \cite{ChenFanQPT,Malvezzi,LiLinQPT,entanglement&QPTs,KarpatCoherenceQPTs}.
Our approach instead focuses on the single qubit subspace $span\left\{ \ket{0^{\lambda}},\ket{\hat{v}^{\lambda}}\right\} $
and on the variation of the relevant coherence impressed by $O_{\lambda}(V)$.
The fidelity approach fails to signal the QPTs and. correspondingly
the criticality does not allow to enhance the estimation precision
in CEQE, whenever the operator $V$ is ``insufficiently'' relevant
i.e., $\Delta_{V}$ is such that $\Delta_{Q}>0$. In our picture this
can be interpreted as the consequence of the fact that the variation
of the relevant coherence impressed by $O_{\lambda}(V)$ is in these
cases too weak and one cannot take advantage of the super-extensive
behavior to estimate $\lambda$ \cite{estimationcriticality}.

\section{Conclusions}

Coherence is one of the fundamental features that distinguish the
quantum from the classical realm. The perspective adopted in this
work allows to link coherence (its second order variation in a specific
basis) to the geometry of quantum states and their statistical distinguishability,
and thus to the Quantum Fisher Information. The connection allows
to establish a framework that encompasses a wide variety of single
parameter estimation processes: noiseless and noisy quantum phase
estimation based on single/multi-qubit probes, and criticality enhanced
quantum estimation. Overall our findings show how to quantify the
notion that coherence is the resource that must be engineered, controlled
and preserved in these quantum estimation processes.

As for quantum phase estimation, the use of specific factorizations
of the underlying quantum system, i.e., specific tensor product structures,
allows to express the Quantum Cramer-Rao bound to the estimation precision
in terms of two contributions: the Fisher Information of a single
qubit; and the second order variation of the classical correlations
between the observables defined by the main object of the theory i.e.,
the Symmetric Logarithmic Derivative. The adopted perspective thus
allows to discuss the role of (quantum) correlations in estimation
processes. In several relevant cases (quantum) correlations in the
state probe  are not intrinsically required and the estimation is
effectively equivalent to a  process based on a single qubit interacting
with a possibly much larger system. In particular we show how various
relevant protocols based on different strategies, such as multi-round
application of phase shifts to a single qubit or protocols based on
pure and mixed GHZ and NOON states, are formally equivalent and are
based on the exploitation of the very same resource: the variation
of a single qubit coherence. In doing so we provide an example of
$M$-qubit based evolution in which the Heisenberg limit in the estimation
precision can be attained with the use of an uncorrelated $M$-qubit
probe state. 

As for noisy estimation processes, we have focused on a prototypical
example based on GHZ states that achieves a quasi-Heisenberg scaling
of the precision. We have discussed the protocol and we have shown
how within our perspective: $i)$ the estimation procedure can be
described as a parallelization of the single qubit multi-round strategy,
where the quasi-Heisenberg scaling is rooted in single-qubit coherences
variations; $ii)$ one can analytically derive, even for complex multi-qubit
noisy evolutions, meaningful lower bounds to the Quantum Fisher Information
that allow to infer its scaling behaviour. The approach is suitable
to be extended and applied to other relevant noisy estimation processes.

We have finally discussed criticality-enhanced quantum estimation
processes. In doing so we have recognized the role of a specific kind
of (global) coherence in quantum phase transitions. The non-analiticities
of such coherence are at the basis of the sensitivity scaling of criticality-enhanced
estimation protocols and they correspond to the global signatures
of zero-temperature phase transitions found within the fidelity approach.

While in laying down our framework we have focused on particular kinds
of single-parameter estimation protocols, our approach is suitable
to be extended to more general estimation processes.

\acknowledgements

We thank Dr. Marco Genoni for his precious feedback. 

We thank Dr. Giorgio Villosio for his illuminating comments as well
as his enduring hospitality at the Institute for Women and Religion,
Turin (``oblivio c{*}e soli a recta via nos avertere possunt'').

\section{Fubini-Study metric and coherence\label{sec: Appendix Fubini-Study-metric-and}}

We show below how the Fubini-Study (FS) metric can be related to variation
of the coherence of a generic eigenbasis $\mathcal{B}_{\boldsymbol{\mathbf{\alpha}}}^{\lambda}$
of $L_{\lambda}$. Suppose one has a one-parameter family of states
$\{|\psi_{\lambda}\rangle\in\mathbb{C}^{N},\ \lambda\in\mathcal{I\subset\mathbb{R}}\}$
and one wants to estimate $\lambda$. The estimation problem was solved
in \cite{BraunsteinCaves} as follows. Assume that the states $|\psi_{\lambda}\rangle$
are normalized, and the curve $|\psi_{\lambda}\rangle$ in $\mathcal{H}_{N}$
is of class $C_{2}$. Then, in a infinitesimal neighborhood $\lambda+d\lambda$
around to the generic $\lambda$, one can expand 
\begin{equation}
|\psi_{\lambda+d\lambda}\rangle=|0\rangle+|v\rangle d\lambda+|w\rangle d\lambda^{2}+\mathcal{O}(d\lambda^{3})
\end{equation}
with $|0\rangle\equiv|\psi_{\lambda}\rangle$, $|v\rangle\equiv\left(\frac{d}{d\lambda}|\psi_{\lambda+d\lambda}\rangle\right)_{d\lambda=0}$
and $|w\rangle\equiv\left(\frac{d^{2}}{d\lambda^{2}}|\psi_{\lambda+d\lambda}\rangle\right)_{d\lambda=0}$.
In Ref. \cite{BraunsteinCaves}, it was shown that the SLD in $d\lambda=0$
can be written as 
\begin{equation}
L_{\lambda}=|0\rangle\langle v^{\perp}|+|v^{\perp}\rangle\langle0|
\end{equation}
where $|v^{\perp}\rangle=\ket{v}-\langle0|v\rangle|0\rangle$, and
one gets the quantum Fisher information (QFI) 
\begin{equation}
QFI=4\langle v^{\perp}|v^{\perp}\rangle=4\big(\langle v|v\rangle-|\langle v|0\rangle|^{2}\big)
\end{equation}
which is seen to coincide with the Fubini-Study metric\cite{Uhlmann}.
The latter provides both the geometric distance and a measure of the
statistical distinguishability of two infinitesimally closed pure
states. Due to the form of $L_{\lambda}$ the estimation problem pertains
a single qubit subspace $\mathcal{H}_{2}=span\left\{ \ket{0},\ket{v^{\perp}}\right\} $.
The optimal measurement basis that allows to attain $QFI$ is uniquely
defined only in $\mathcal{H}_{2}$ where 
\begin{equation}
|\pm\rangle=\frac{1}{\sqrt{2}}(|0\rangle\pm\frac{1}{\langle v{}^{\perp}|v^{\perp}\rangle^{1/2}}|v^{\perp}\rangle)
\end{equation}
give the eigenbasis $\mathcal{B}_{\pm}=\left\{ |\pm\rangle\right\} $
of $L_{\lambda}$ pertaining to the only non zero eigevalues $\pm\left|\langle v^{\perp}|v^{\perp}\rangle\right|$
(in the following we drop for simplicity the eigenvectors' dependence
on $\lambda$). Suppose now we choose a generic basis for the kernel
of $L_{\lambda}$ $\mathcal{B}_{Ker}=\left\{ \ket{n}\right\} _{n=3}^{N}$
such that $\left\langle n|\pm\right\rangle =0,\ \forall n$. Then
for all bases $\mathcal{B}_{\boldsymbol{\mathbf{\alpha}}}^{\lambda}=\mathcal{B}_{\pm}\bigcup\mathcal{B}_{Ker}$
of the whole of $\mathcal{H}_{N}$ we have that the probabilities
$p_{\pm}^{\lambda+d\lambda}=\left|\left\langle \pm|\psi^{\lambda+d\lambda}\right\rangle \right|^{2}$
evaluated up to order $\mathcal{O}(d\lambda^{2})$ read 
\begin{equation}
p_{\pm}^{\lambda+d\lambda}=\frac{1}{2}(1\pm2\langle v^{\perp}|v^{\perp}\rangle^{1/2}d\lambda\pm2Re\frac{\langle w|v^{\perp}\rangle}{\langle v^{\perp}|v^{\perp}\rangle^{1/2}}d\lambda^{2}+\mathcal{O}(d\lambda^{3}))
\end{equation}
where we have used $\langle v|v{}^{\perp}\rangle=\langle v^{\perp}|v{}^{\perp}\rangle$
and the conditions $Re[\langle v|0\rangle]=0$ and $2Re\langle w|0\rangle=-\langle v|v\rangle$
(implied by the normalization condition $\langle\psi_{\lambda}|\psi_{\lambda}\rangle=1$
at first and second order in $d\lambda$). Thus, we obtain 
\begin{equation}
\left(p_{\pm}^{\lambda+d\lambda}\right)_{d\lambda=0}=p_{\pm}^{\lambda}=1/2,\quad\left(\partial_{d\lambda}p_{\pm}^{\lambda+d\lambda}\right)_{d\lambda=0}=\pm\langle v^{\perp}|v^{\perp}\rangle^{1/2},\quad\left(\partial^{2}p_{\pm}^{\lambda+d\lambda}\right)_{d\lambda=0}=\pm Re\frac{\langle w|v^{\perp}\rangle}{\langle v^{\perp}|v^{\perp}\rangle^{1/2}}
\end{equation}
As for $\mathcal{B}_{Ker}$ one has $p_{n}^{\lambda+d\lambda}=|\langle\psi^{\lambda}|n\rangle|^{2}=\mathcal{O}(d\lambda^{4})$
and what matters is that they are $o(d\lambda^{2})$. Consequently,
if one considers the coherence function $Coh_{\mathcal{B}_{\boldsymbol{\mathbf{\alpha}}}}\left(|\psi{}_{\lambda}\rangle\right)$
for a generic $\mathcal{B}_{\boldsymbol{\mathbf{\alpha}}}^{\lambda}$
one has the two relations: 
\begin{equation}
\left[\partial_{d\lambda}Coh_{\mathcal{B}_{\boldsymbol{\mathbf{\alpha}}}^{\lambda}}\left(|\psi{}_{\lambda+d\lambda}\rangle\right)\right]_{d\lambda=0}=0
\end{equation}

\begin{equation}
-\left[\partial_{d\lambda}^{2}Coh_{\mathcal{B}_{\boldsymbol{\mathbf{\alpha}}}^{\lambda}}\left(|\psi{}_{\lambda+\lambda}\rangle\right)\right]_{d\lambda=0}=\sum_{i=\pm}\frac{\left(\partial_{d\lambda}p_{\pm}^{\lambda+d\lambda}\right)_{d\lambda=0}}{p_{i}^{\lambda}}=QFI
\end{equation}
while $f\left(|\psi{}_{\lambda}\rangle\langle\psi_{\lambda}|\right)=\sum_{i=\pm}\left(\partial^{2}p_{i}^{\lambda+d\lambda}\right)_{d\lambda=0}\log_{2}p_{i}^{\lambda}=0$.
Therefore the FS metric can in general be expressed as a curvature
of the coherence $Coh_{\mathcal{B}_{\boldsymbol{\mathbf{\alpha}}}^{\lambda}}$
of a generic eigenbasis $\mathcal{B}_{\boldsymbol{\mathbf{\alpha}}}^{\lambda}$
of $L_{\lambda}$ with respect to $|\psi{}_{\lambda}\rangle$ around
a maximum. Thus, in terms of coherence, what matters for the estimation
process and for the statistical distinguishability between two neighboring
pure states is indeed the variation of the coherence within the single-qubit
subspace $\mathcal{H}_{2}$ spanned by $\mathcal{B}_{\pm}=\left\{ |\pm\rangle\right\} $.

\section{Results for $N=2$\label{sec:Appendix II -Results for N=00003D00003D2}}

In this section, we derive the results presented in the main text
for the qubit case $N=2$. Without loss of generality we choose the
single qubit state 
\begin{equation}
\rho_{0}=(1+\vec{z}\cdot\boldsymbol{\sigma})/2
\end{equation}
where $\vec{z}=z\hat{z}=z(0,0,1)$, $0\le z\le1$ and $\boldsymbol{\sigma}=\left(\sigma_{x},\sigma_{y},\sigma_{z}\right)$
is the vector of Pauli matrices. The phase generator is 
\begin{equation}
G=\gamma\left(\hat{\gamma}\cdot\boldsymbol{\sigma}\right)
\end{equation}
with $\hat{\gamma}=(\sin\delta,0,\cos\delta)$, such that its eigenbasis
lies in the $\hat{x}\hat{z}$ plane, forming an angle $0\leq\delta\leq\frac{\pi}{2}$
with $\hat{z}$. The strength of $G$ is measured by its norm $Tr[G^{2}]=2\gamma^{2}$,
where $\gamma>0$. A generic measurement basis $\mathcal{B}_{\hat{b}}$
is defined by the projectors 
\begin{equation}
\Pi_{\pm}^{\hat{b}}=(1\pm\hat{b}\cdot\boldsymbol{\sigma})/2
\end{equation}
with $\hat{b}=\{\sin\theta\cos\phi,\sin\theta\sin\phi,\cos\theta\}$
a generic Bloch vector. The state 
\begin{equation}
\rho_{\lambda}=e^{-i\lambda\vec{\gamma}\cdot\boldsymbol{\sigma}}\rho_{0}e^{i\lambda\vec{\gamma}\cdot\boldsymbol{\sigma}}
\end{equation}
is given as $\rho_{\lambda}=(1+\vec{z}_{\lambda}\cdot\boldsymbol{\sigma})/2$
with 
\begin{equation}
\vec{z}_{\lambda}=\cos2\gamma\lambda\ \vec{z}+\sin2\gamma\lambda\ (\vec{z}\times\vec{\gamma})+(1-\cos2\gamma\lambda)\vec{\gamma}(\hat{\gamma}\cdot\vec{z})
\end{equation}
Thus, the probabilities $p_{\pm}^{\hat{b}}(\lambda)=Tr[\rho_{\lambda}\Pi_{\pm}^{\hat{b}}]$
are obtained as 
\begin{equation}
p_{\pm}^{\hat{b}}(\lambda)=1/2Tr[(1\pm\hat{b}\cdot\boldsymbol{\sigma})\rho_{\lambda}]=1/2\Big(1\pm\hat{b}\cdot\vec{z}_{\lambda}\Big)
\end{equation}
Their derivatives are computed as 
\begin{equation}
\partial_{\lambda}p_{\pm}^{\hat{b}}(\lambda)=\pm1/2\hat{b}\cdot\partial_{\lambda}\vec{z}_{\lambda},\qquad\partial_{\lambda}^{2}p_{\pm}^{\hat{b}}(\lambda)=\pm1/2\hat{b}\cdot\partial_{\lambda}^{2}\vec{z}_{\lambda}
\end{equation}
with $\partial_{\lambda}\vec{z}_{\lambda}=2\gamma(-\sin2\gamma\lambda\ \vec{z}+\cos2\gamma\lambda\ (\vec{z}\times\vec{\gamma}))$
and $\partial_{\lambda}^{2}\vec{z}_{\lambda}=4\gamma^{2}(-\cos2\gamma\lambda\ \vec{z}-\sin2\gamma\lambda\ (\vec{z}\times\hat{\gamma}))$.
In $\lambda=0$, one gets 
\begin{eqnarray}
p_{\pm}^{\hat{b}}(\lambda=0)= & 1/2\Big(1\pm\hat{b}\cdot\vec{z}\Big)\label{eq: probsqubit}\\
(\partial_{\lambda}p_{\pm}^{\hat{b}}){}_{\lambda=0}= & \pm\gamma(\vec{z}\times\hat{\gamma})\cdot\hat{b} & =\pm(\vec{z}\times\vec{\gamma})\cdot\hat{b}\label{eq:probsqubit2}\\
(\partial_{\lambda}^{2}p_{\pm}^{\hat{b}}){}_{\lambda=0}= & \pm2\gamma^{2}\ \hat{b}\cdot\vec{z}\label{eq:probsqubit3}
\end{eqnarray}
The Fisher Information for $\mathcal{B}_{\theta,\phi}$ in $\lambda=0$
is computed as 
\begin{eqnarray*}
FI(\mathcal{B}_{\hat{b}},\rho_{0},G)= & \frac{((\partial_{\lambda}p_{+}^{\hat{b}})_{\lambda=0})^{2}}{p_{+}^{\hat{b}}(\lambda=0)}+\frac{((\partial_{\lambda}p_{-}^{\hat{b}})_{\lambda=0})^{2}}{p_{-}^{\hat{b}}(\lambda=0)} & =\\
= & 2((\vec{z}\times\vec{\gamma})\cdot\hat{b})^{2}\Big(\frac{1}{\Big(1+\hat{b}\cdot\vec{z}\Big)}+\frac{1}{\Big(1-\hat{b}\cdot\vec{z}\Big)}\Big) & =4\frac{(\vec{\gamma}\times\vec{z}\cdot\hat{b})^{2}}{1-(\vec{z}\cdot\hat{b})^{2}}
\end{eqnarray*}
In terms of $\theta,\phi$ and $\delta$ the latter can be written
as 
\[
FI(\mathcal{B}_{\theta,\phi},\rho_{0},G)=4\gamma^{2}z^{2}\sin^{2}\delta\frac{\sin^{2}\theta\sin^{2}\phi}{1-z^{2}\cos^{2}\theta}
\]
For mixes states $z<1$, the latter is maximized when $\hat{z},\hat{\gamma},\hat{b}$
form an orthogonal triple. The maximization over $\hat{b}$ can be
performed by finding the maximum with respect to $\theta,\phi$ .
The maximum over $\phi$ is obviously $\phi=\pi/2$, while the maximum
over $\theta$ can be easily found by computing the critical points
of $\frac{\sin^{2}\theta}{1-z^{2}\cos^{2}\theta}$, which gives $\theta=\pi/2$.
Therefore, the maximization over $\hat{b}$ results in the choice
$\hat{b}=\{0,1,0\}\propto\hat{\gamma}\times\hat{z}$. In turn, the
symmetric logarithmic derivative in $\lambda=0$ can be shown to be
\begin{equation}
L_{0}=-2(\hat{\gamma}\times\hat{z})\cdot\boldsymbol{\sigma}
\end{equation}
Indeed, one has 
\begin{equation}
\partial_{\lambda}\rho_{\lambda}|_{\lambda=0}=-i[G,\rho_{0}]=\vec{\gamma}\times\vec{z}\cdot\boldsymbol{\sigma}
\end{equation}
and one can immediately verify that $\frac{1}{2}(L_{0}\varrho_{o}+\varrho_{o}L_{0})=\vec{\gamma}\times\vec{z}\cdot\boldsymbol{\sigma}$.
Thus, the eigenbasis of $L_{0}$ corresponds to $\hat{\alpha}=\vec{\gamma}\times\vec{z}/|\vec{\gamma}\times\vec{z}|=\{0,1,0\}$,
which coincides with the optimal measurement. \\

Furthermore, from the above formulas (\ref{eq: probsqubit},\ref{eq:probsqubit2},\ref{eq:probsqubit3}),
when considering the coherence function $Coh_{\mathcal{B}_{\alpha}}(\rho_{\lambda})=-\mathcal{V}(\rho_{\lambda})+\sum_{i=\pm}p_{i}^{\hat{\alpha}}\log p_{i}^{\hat{\alpha}}$,
one obtains with some simple algebra 
\begin{equation}
\left[\partial_{\lambda}Coh_{\mathcal{B}_{\alpha}}(\rho_{\lambda})\right]_{\lambda=0}=0,\qquad QFI=-\left(\partial_{\lambda}^{2}Coh_{\mathcal{B}_{\alpha}}(\rho_{\lambda})\right)_{\lambda=0}\label{eq: single mixed qubit QFI as a maximum}
\end{equation}
The above results hold in particular for the limiting case of pure
states ($z=1$), if one measures on the eigenbasis $\mathcal{B}_{\alpha}$
of the SLD. The latter is not the only basis that allows to attain
the $QFI.$ Indeed, for pure states the Fisher information for a generic
measurement basis is independent of the angle $\theta$ and reads
\[
FI(\mathcal{B}_{\theta,\phi},\rho_{0},G)=4\gamma^{2}z^{2}\sin^{2}\delta\sin^{2}\phi
\]
such that the bound $QFI$ can in principle be achieved by any basis
such that $\phi=\pi/2$. However, such bases in fact not all equivalent.
Indeed the logic of the estimation process as described within the
Crámer-Rao formalism is the following. In general one needs to know
in advance with some precision the value of $\lambda$. This can be
achieved with a (non-optimal) pre-estimation process on a subset of
the probes, which leads to a value $\lambda_{est}$. Then one applies
to the initial state the shift $U_{\delta\lambda}=\exp-i\delta\lambda G$
with $\delta\lambda=\lambda-\lambda_{est}\ll1$. Only then the choice
of the optimal measurement basis becomes meaningful; in particular,
if one supposes that $\delta\lambda=0$ the actual precision for a
generic basis $\mathcal{B}_{\hat{b}}$ is given by 
\begin{equation}
F(\mathcal{B}_{\hat{b}},\ket{\psi_{\delta\lambda}}\bra{\psi_{\delta\lambda}})\approx4\gamma^{2}\sin^{2}\phi-\gamma^{3}\left(16\cos^{2}\phi\sin\phi\cot\theta\right)\delta\lambda.\label{eq: Fisher pure states in deltalambda}
\end{equation}
The latter now \emph{does depend on} $\theta$, and if $\phi$ is
only approximately equal to $\pi/2$, for example due to imprecision
in the measurement apparatus, the bases corresponding to different
values of $\theta$ are no longer equivalent. For example, if $\theta\approx0,\pi$
it can happen that $F(\mathcal{B}_{\theta,\phi},\ket{\psi_{\delta\lambda}}\bra{\psi_{\delta\lambda}})\ll QFI$.
Instead, for $\theta\approx\pi/2$ i.e., $\mathcal{B}_{\hat{b}}\approx\mathcal{B}_{\hat{\alpha}}$,
this problem can be avoided. The choice of $\hat{b}=\hat{\alpha}$
becomes of fundamental importance when $Tr[G^{2}]$ is very large
(e.g. for estimation protocols that are based on a multi-round procedures,
where $\gamma\gg1$) since as shown in (\ref{eq: Fisher pure states in deltalambda})
the first order correction in $\delta\lambda$ would be amplified
by a factor $\gamma^{3}$. Furthermore, if the initial state is even
slightly impure ($z=1-\epsilon,\ \epsilon\ll1)$ the choice $\hat{b}=\hat{\alpha}$
becomes the only for which the bound can be fully attained. The above
reasoning can be summarized as follows. On one hand the condition
$\hat{b}=\hat{\alpha}$ guarantees that $F(\mathcal{B}_{\hat{b}},\ket{\psi_{\delta\lambda}}\bra{\psi_{\delta\lambda}})$
has a maximum equal to the QFI i.e., it guarantees the highest sensitivity
in the variation of $\lambda$. On the other hand, the condition $\hat{b}=\hat{\alpha}$
allows to have the lowest sensitivity with respect to small variations
of the measurement angles $\delta\theta,\delta\phi$ and the purity
$z\lesssim1$.

\section{SLD and coherence for N-dimensional states\label{sec Appendix :SLD-and-coherence}}

In the following we give the demonstration of result $3.1$ in Proposition
3. We will use the following notations: given $\ket{\alpha_{\pm,k}}$,
the eigenstates of $L_{0}$, we define the probabilities 
\begin{equation}
p_{\pm,k}^{\lambda}=\left\langle \alpha_{\pm,k}\left|\rho_{\lambda}\right|\alpha_{\pm,k}\right\rangle ,\quad p_{\pm}^{\lambda}=\sum_{k}p_{\pm,k}^{\lambda},\quad p_{k}^{\lambda}=\sum_{i=\pm}p_{i,k}^{\lambda}
\end{equation}
Under the following hypotheses: 
\begin{itemize}
\item $N$ is even; 
\item the initial diagonal state $\rho_{0}=\sum_{n}p_{n}\ket{n}\bra{n}$
is full rank 
\item $\langle n|G|m\rangle\in\mathbb{R}\ \forall n,m$ i.e., $G$ has purely
real matrix elements when expressed in the eigenbasis of $\rho_{0}$ 
\item $L_{\lambda=0}$ is full rank. 
\end{itemize}
Under the above hypotheses, it holds that: 
\begin{enumerate}
\item the eigenvalues of $L_{0}$ are opposite in pairs, 
\begin{equation}
L_{0}\ket{\alpha_{\pm,k}}=\pm\alpha_{k}\ket{\alpha_{\pm,k}}\label{eq:eigenvectorsL}
\end{equation}
and the Quantum Fisher Information reads 
\begin{eqnarray}
QFI & = & 2\sum_{k=1}^{N/2}\left(\alpha_{+,k}\right)^{2}p_{+,k}^{0}\label{eq: QFI - N-dim mixed states-1}
\end{eqnarray}

\item The coherence function of the eigenbasis $\mathcal{B}_{\pm,k}=\left\{ \ket{\alpha_{\pm,k}}\right\} _{k=1}^{N/2}$
with respect to the state $\rho_{\lambda}$ reads 
\begin{equation}
Coh_{\mathcal{B}_{\pm,k}}(\rho_{\lambda})=-V(\rho_{\lambda})-2\sum_{k}p_{+,k}^{\lambda}\log_{2}p_{+,k}^{\lambda}\label{eq: Coherence N-dim mixed-1}
\end{equation}
The Quantum Fisher Information is attained in correspondence of a
critical point of $Coh_{\mathcal{B}_{\pm,k}}(\rho_{\lambda})$ and
\begin{eqnarray*}
-\left[\partial^{2}Coh_{\mathcal{B}_{\pm,k}}(\rho_{\lambda})\right]_{\lambda=0} & = & QFI+f(\rho_{\lambda=0})
\end{eqnarray*}
with 
\begin{eqnarray*}
f(\rho_{\lambda=0}) & = & \sum_{k}\left(\partial_{\lambda}^{2}p_{k}^{\lambda}\right){}_{\lambda=0}\log p_{k}^{0}
\end{eqnarray*}

\end{enumerate}
\textbf{\emph{Proof}}\emph{. }We start by analyzing the eigendecomposition
of $L_{0}$, the symmetric logarithmic derivative in $\lambda=0$,
and proving 1). $L_{0}$ is in the form\cite{ParisEstimationBook}\emph{
\[
\langle n|L_{0}|m\rangle=2i\langle n|G|m\rangle(p_{m}-p_{n})/(p_{m}+p_{n})
\]
} \begin{proof} When $G$ is real in the eigenbasis of $\rho_{0}$,
$\langle n|G|m\rangle\in\mathbb{R}$, then $L_{0}$ is purely imaginary
($L_{0}=-L_{0}^{*}$), and thus it can be written as $L_{0}=i\tilde{L}_{0}$,
with $\tilde{L}_{0}$ real ($\langle n|\tilde{L}_{0}|m\rangle\in\mathbb{R}$)
and antisymmetric ($\langle n|\tilde{L}_{0}|m\rangle=-\langle m|\tilde{L}_{0}|n\rangle$).
Therefore, there exists a real orthogonal matrix $O,\ \langle n|O|m\rangle\in\mathbb{R}$
implementing a change of basis $\ket{l_{n}}=O\ket{n}$ such that $O^{T}\tilde{L}_{0}O$
is in a standard form (see e.g. \cite{skewsymmetric}), i.e., it is
block diagonal and composed by $N/2$ blocks of dimension $2\times2$
of the form 
\[
\left(\begin{array}{cc}
0 & -\alpha_{k}\\
\alpha_{k} & 0
\end{array}\right)=-\alpha_{k}\ket{l_{2k-1}}\bra{l_{2k}}+\alpha_{k}\ket{l_{2k-1}}\bra{l_{k}}
\]
Correspondingly, the SLD is in a block diagonal form $L_{0}=\oplus_{k}\left(\alpha_{k}\sigma_{y}^{k}\right)$
in the basis of the $\ket{l_{n}}$. Each block can be diagonalized
by the same kind of unitary transformation 
\[
\left(\mathbb{I}+i\sigma_{x}^{k}\right)/\sqrt{2}=\left[\ket{l_{2k-1}}\bra{l_{2k-1}}+\ket{l_{2k}}\bra{l_{2k}}+i\left(\ket{l_{2k-1}}\bra{l_{2k}}+\ket{l_{2k-1}}\bra{l_{2k}}\right)\right]/\sqrt{2}
\]
i.e., the whole matrix can be diagonalized by means of the block-diagonal
unitary operator $U=\frac{1}{\sqrt{2}}\oplus_{k}\left(\mathbb{I}+i\sigma_{x}^{k}\right)$.
The eigenvectors of $L_{0}$ can be expressed as 
\begin{equation}
\ket{\alpha_{\pm,k}}=(\ket{l_{2k-1}}\pm i\ket{l_{2k}})/\sqrt{2}
\end{equation}
with $k=1,..,N/2$ and we obtain the result in Eq. (\ref{eq:eigenvectorsL}),
\[
L_{0}\ket{\alpha_{\pm,k}}=\pm\alpha_{k}\ket{\alpha_{\pm,k}}
\]
The eigenvalues of $L_{0}$ are opposite in pairs, $\alpha_{\pm,k}=\pm\alpha_{k},\ \alpha_{k}>0$.
We now show that 
\begin{equation}
p_{+,k}^{0}=p_{-,k}^{0}\label{eq:ppluspminusk}
\end{equation}
The reality of $G$ implies that $\left[\rho_{0},G\right]^{T}=-\left[\rho_{0},G\right]$,
thus the commutator is itself anti-symmetric. Since the change of
basis $O$ is real and it preserves the anti-symmetry of $\left[\rho_{0},G\right]$,
the diagonal elements of $\left[\rho_{0},G\right]$ in the $\{\ket{l_{n}}\}$
basis are zero: 
\[
\bra{l_{2k-1}}\left[\rho_{0},G\right]\ket{l_{2k-1}}=0
\]
Therefore, taking into account that $\left[\rho_{0},G\right]=-i(L_{0}\rho+\rho L_{0})$,
we also have 
\[
\bra{l_{2k-1}}(L_{0}\rho+\rho L_{0})\ket{l_{2k-1}}=2Re\left\{ \bra{l_{2k-1}}L_{0}\rho_{0}\ket{l_{2k-1}}\right\} =0
\]
If we now express $\ket{l_{2k-1}}$ in terms of the respective $\ket{\alpha_{\pm,k}}$
we have that 
\begin{eqnarray*}
2Re\left\{ \bra{l_{2k-1}}L_{0}\rho_{0}\ket{l_{2k-1}}\right\}  & = & \alpha_{+}^{k}p_{+,k}^{0}-\alpha_{+}^{k}p_{-,k}^{0}=0
\end{eqnarray*}
since the ``cross term'' 
\[
\alpha_{+,k}\bra{\alpha_{+,k}}\rho_{0}\ket{\alpha_{-,k}}/2-\alpha_{+,k}\bra{\alpha_{-,k}}\rho_{0}\ket{\alpha_{+,k}}/2=i\alpha_{+,k}Im\left\{ \bra{\alpha_{+,k}}\rho_{0}\ket{\alpha_{-,k}}\right\} 
\]
is purely imaginary. Thus, we finally obtain the result in Eq. (\ref{eq:ppluspminusk})
\[
p_{+,k}^{0}=p_{-,k}^{0}
\]
From this result one can easily derive some relations for the marginal
probabilities $p_{\pm}=\sum_{k}p{}_{\pm,k}$ and $p_{k}=\sum_{i=\pm}p_{\pm,i}$.
Since $\sum_{k,i=\pm}p_{i,k}^{0}=2\sum_{k,i=\pm}p_{i,k}^{0}=2p_{+}^{0}=1$,
we get 
\begin{equation}
p_{+}^{0}=p_{-}^{0}=1/2\label{eq:ppluspminus}
\end{equation}

Moreover, 
\begin{equation}
p_{k}=\sum_{i=\pm}p_{i,k}^{0}=2p_{+,k}^{0}\label{eq: p0k}
\end{equation}

From Eqs. (\ref{eq:ppluspminus}) and (\ref{eq: p0k}) one also obtains
that the probability distribution is factorized in $\lambda=0$, 
\begin{equation}
p_{+,k}^{0}=p_{+}^{0}p_{k}^{0}\label{eq:pproduct}
\end{equation}
We are now ready to derive Eq. (\ref{eq: QFI - N-dim mixed states-1}).
Given $\Pi_{\pm,k}=\ket{\alpha_{\pm,k}}\bra{\alpha_{\pm,k}}$, the
derivatives of the $p_{\pm,k}^{\lambda}$ are 
\begin{eqnarray*}
\left(\partial_{\lambda}p_{\pm,k}^{\lambda}\right){}_{\lambda=0} & = & Tr\left\{ \Pi_{\pm,k}\left(\partial_{\lambda}\rho^{\lambda}\right){}_{\lambda=0}\right\} =iTr\left\{ \Pi_{\pm,k}\left[\rho_{0},G\right]\right\} =\\
=Re\left\{ Tr[\rho_{0}\Pi_{\pm,k}L_{0}]\right\}  & = & Re\left\{ \pm\alpha_{k}^{+}Tr[\rho_{0}\Pi_{\pm,k}]\right\} =\pm\alpha_{+,k}p_{\pm,k}^{0}
\end{eqnarray*}

so that 
\begin{equation}
\left(\partial_{\lambda}p_{\pm,k}^{\lambda}\right){}_{\lambda=0}=\pm\alpha_{+,k}p_{\pm,k}^{0}\label{eq:dlambdappluspminus}
\end{equation}
and the $QFI$ reads 
\begin{eqnarray*}
QFI & = & \sum_{i=\pm,k=1}^{N/2}\left(\partial_{\lambda}p_{\pm,k}^{\lambda}\right){}_{\lambda=0}^{2}/p_{i,k}^{0}=\\
 & = & 2\sum_{k=1}^{N/2}\left(\partial_{\lambda}p_{+,k}^{\lambda}\right){}_{\lambda=0}^{2}/p_{+,k}^{0}=\\
 & = & 2\sum_{k=1}^{N/2}\left(\alpha_{+,k}\right){}^{2}p_{+,k}^{0}
\end{eqnarray*}
where between the first and the second line we have used Eq. (\ref{eq: p0k})
and from the second to the third line Eq. (\ref{eq:dlambdappluspminus}).
\\

We now prove 2). The coherence function $Coh_{\mathcal{B}_{\pm,k}}(\rho_{\lambda})$
reads by definition 
\[
Coh_{\mathcal{B}_{\pm,k}}(\rho_{\lambda})=-V(\rho_{\lambda})-\sum_{k,i=\pm}p_{i,k}^{\lambda}\log_{2}p_{i,k}^{\lambda}
\]

By considering Eqs (\ref{eq:ppluspminusk}) and (\ref{eq:dlambdappluspminus}),
one obtains 
\begin{eqnarray*}
\left(\partial_{\lambda}Coh_{\mathcal{B}_{\pm,k}}(\rho_{\lambda})\right){}_{\lambda=0} & = & -\sum_{\pm,k}\left(\partial_{\lambda}p_{\pm,k}^{\lambda}\right){}_{\lambda=0}\log\left(p_{\pm,k}^{0}\right)\\
 & = & -\sum_{k}\alpha_{+,k}p_{+,k}^{0}\big(\log\left(p_{+,k}^{0}\right)-\log\left(p_{-,k}^{0}\right)\big)=0
\end{eqnarray*}

i.e., the coherence function for the basis $\mathcal{B}_{\pm,k}$
has a critical point in $\lambda=0$. 
\begin{eqnarray*}
\left[\partial_{\lambda}^{2}Coh_{\mathcal{B}_{\pm,k}}(\rho_{\lambda})\right]{}_{\lambda=0} & = & -\sum_{i=\pm,,k}\frac{\left(\partial_{\lambda}p_{i,k}^{\lambda}\right)_{\lambda=0}^{2}}{p_{i,k}^{0}}-\sum_{i=\pm,k}\left(\partial_{\lambda}^{2}p_{i,k}^{\lambda}\right)_{\lambda=0}\log\left(p_{i,k}^{0}\right)\\
 & = & -QFI-\sum_{,k}\left(\left(\partial_{\lambda}^{2}p_{+,k}^{\lambda}\right)_{\lambda=0}+\left(\partial_{\lambda}^{2}p_{-,k}^{\lambda}\right)_{\lambda=0}\right)\log\left(p_{+,k}^{0}\right)\\
 & = & -QFI-\sum_{k}\left(\partial_{\lambda}^{2}p_{k}^{\lambda}\right){}_{\lambda=0}\log p_{+,k}^{0}\\
 & = & -QFI-\sum_{k}\left(\partial_{\lambda}^{2}p_{k}^{\lambda}\right){}_{\lambda=0}\log p_{k}^{0}
\end{eqnarray*}
where: in the second last line we have used $p_{+,k}^{0}=p_{-,k}^{0}$;
in the third line $\left(\partial_{\lambda}^{2}p_{+,k}^{\lambda}\right)_{\lambda=0}+\left(\partial_{\lambda}^{2}p_{-,k}^{\lambda}\right)_{\lambda=0}=\left(\partial_{\lambda}^{2}\left(p_{+,k}^{\lambda}+p_{-,k}^{\lambda}\right)\right)=\left(\partial_{\lambda}^{2}p_{k}^{\lambda}\right)_{\lambda=0}$;
in the last line we have used $p_{+,k}^{0}=p_{+}^{0}p_{,k}^{0}=p_{k}^{0}/2$
and $\sum_{k}\left(\partial_{\lambda}^{2}p_{k}^{\lambda}\right){}_{\lambda=0}\log2=0$.
\end{proof}

\section{SLD-induced TPS for N-dimensional states\label{sec: Appendix SLD-induced-TPS-for N dimensional states}}

When is even $N$, $G$ has real matrix elements in the eigenbasis
of $\rho_{0}$ and $L_{0}$ is full rank the eigenstates of the SLD
provide a natural way to introduce a proper tensor product structure
$TPS^{R}$ that allows to relate the $QFI$ to a specific kind of
classical correlations.

We first show how the result $3.2)$ in Proposition 3 can be derived.

Given two subalgebras $\mathcal{A}_{A},\mathcal{A}_{B}\subset End(\mathcal{H})$
they induce a tensor product structure \cite{ObservableInduced} if
the following conditions are satisfied: i) independence, $[\mathcal{A}_{A},\mathcal{A}_{B}]=0$
ii) completeness, $\mathcal{A}_{A}\lor\mathcal{A}_{B}=End(\mathcal{H})$.
In our case, given the $N/2$ pairs of eigenstates of the SLD 
\[
\ket{\alpha_{\pm,k}}=\left(\ket{l_{2k-1}}\pm i\ket{l_{2k}}\right)/\sqrt{2},\ k=1,2,..,N/2
\]
one can identify a new TPS that will split the Hilbert in a ``qubit''
and an $N/2$-dimensional space, $\mathcal{H}_{N}\sim\mathcal{H}_{2}\tilde{\otimes}\mathcal{H}_{N/2}$
and will allow writing $\ket{\alpha_{\pm,k}}=\ket{\pm}\tilde{\otimes}\ket{k}$.
The subalgebras of Hermitian operators $\mathcal{A}_{2},\mathcal{A}_{N/2}$
acting locally on $\mathcal{H}_{2}$ and $\mathcal{H}_{N/2}$ are
identified as follows. We choose $\mathcal{A}_{2}=span\left\{ \sigma_{0},\sigma{}_{x},\sigma_{y},\sigma_{z}\right\} \cong u(2)$
where 
\begin{eqnarray}
\sigma_{x}\equiv\sum_{k=1}^{N/2}\left(\ket{l_{2k-1}}\bra{l_{2k}}+\ket{l_{2k}}\bra{l_{2k-1}}\right) & = & \sum_{k=1}^{N/2}\left(\ket{\alpha_{+,k}}\bra{\alpha_{-,k}}+\ket{\alpha_{-,k}}\bra{\alpha_{+,k}}\right)\nonumber \\
\sigma_{y}\equiv-i\sum_{k=1}^{N/2}\left(\ket{l_{2k-1}}\bra{l_{2k}}-\ket{l_{2k}}\bra{l_{2k-1}}\right) & = & \sum_{k=1}^{N/2}\left(\ket{\alpha_{+,k}}\bra{\alpha_{+,k}}-\ket{\alpha_{-,k}}\bra{\alpha_{-,k}}\right)\nonumber \\
\sigma_{z}\equiv\sum_{k=1}^{N/2}\left(\ket{l_{2k-1}}\bra{l_{2k-1}}-\ket{l_{2k}}\bra{l_{2k}}\right) & = & i\sum_{k=1}^{N/2}\left(\ket{\alpha_{+,k}}\bra{\alpha_{-,k}}-\ket{\alpha_{-,k}}\bra{\alpha_{+,k}}\right)\nonumber \\
\sigma_{0}\equiv\sum_{k=1}^{N/2}\left(\ket{\alpha_{+,k}}\bra{\alpha_{+,k}}+\ket{\alpha_{-,k}}\bra{\alpha_{-,k}}\right) & = & \sum_{k=1}^{N/2}\left(\ket{\alpha_{+,k}}\bra{\alpha_{+,k}}+\ket{\alpha_{-,k}}\bra{\alpha_{-,k}}\right)\label{eq: definition of sigmas in new TPS}
\end{eqnarray}
The other subalgebra $\mathcal{A}_{N/2}\cong u(N/2)$ can be constructed
in an analogous way by starting from the following general definition
of the operators that form a basis of $u(N/2)$ 
\[
\mathcal{A}_{N/2}\equiv span\left\{ \ket{k}\bra{h}+\ket{h}\bra{k},\quad-i\ket{k}\bra{h}+i\ket{h}\bra{k},\quad\ket{h}\bra{h},\ h\neq k=1,\dots,N/2\right\} 
\]
where, in order to adapt the result to our specific case one has to
use 
\[
\ket{k}\bra{h}\equiv\ket{l_{2k-1}}\bra{l_{2h-1}}+\ket{l_{2k}}\bra{l_{2h}}=\ket{\alpha_{+,k}}\bra{\alpha_{+,h}}+\ket{\alpha_{-,k}}\bra{\alpha_{-,h}},\qquad k\neq h\in{1,2,..,N/2}
\]

\begin{equation}
\ket{h}\bra{h}\equiv\ket{l_{2h-1}}\bra{l_{2h-1}}+\ket{l_{2h}}\bra{l_{2h}}=\ket{\alpha_{+,h}}\bra{\alpha_{+,h}}+\ket{\alpha_{-,h}}\bra{\alpha_{-,h}}\qquad h\in{1,2,..,N/2}\label{eq: operators kh in TPSR}
\end{equation}
One has that $\left[\mathcal{A}_{2},\mathcal{A}_{N/2}\right]=0$,
$\mathcal{A}_{1}\lor\mathcal{A}_{2}=u(N)$ and therefore these subalgebras
identify a well-defined TPS $\mathcal{H}_{N}\sim\mathcal{H}_{2}\tilde{\otimes}\mathcal{H}_{N/2}$,
correspondingly the SLD eigenvectors can be written as 
\begin{equation}
\ket{\alpha_{\pm,k}}=\ket{\pm}\tilde{\otimes}\ket{k}
\end{equation}
In the new TPS, we can write the operators in (\ref{eq: definition of sigmas in new TPS})
as 
\begin{eqnarray}
\sigma_{x} & = & S_{x}\tilde{\otimes}\mathbb{I}_{N/2}\qquad\sigma_{y}=S_{y}\tilde{\otimes}\mathbb{I}_{N/2}\qquad\sigma_{z}=S_{z}\tilde{\otimes}\mathbb{I}_{N/2}\\
\sigma_{0} & = & \mathbb{I}_{2}\tilde{\otimes}\mathbb{I}_{N/2}
\end{eqnarray}
where $S_{x},S_{y},S_{z}$ are Pauli operators acting on the single
qubit factor $\mathcal{H}_{2}$. The operators in (\ref{eq: operators kh in TPSR})
can be written as 
\begin{equation}
\ket{k}\bra{h}\rightarrow\mathbb{I}_{2}\tilde{\otimes}|k\rangle\langle h|,\qquad\ket{h}\bra{h}\rightarrow\mathbb{I}_{2}\tilde{\otimes}|h\rangle\langle h|
\end{equation}
and they for a basis for the Hermitian operators acting on $\mathcal{H}_{N/2}$.
For all $O_{2}\in\mathcal{A}_{2},O_{N/2}\in\mathcal{A}_{N/2}$ the
composition of the operators in $\mathcal{H}_{N}$ is given by $O_{2}O_{N/2}$
that now can be written as $O_{2}O_{N/2}\simeq O_{2}\tilde{\otimes}O_{N/2}$;
onto the basis states one has $O_{2}O_{N/2}\ket{\alpha_{\pm,k}}=O_{2}\ket{\pm}\otimes O_{N/2}\ket{k}$.

Before passing to the rest of the proof, we notice that even if the
full controllability of the single $End\left(\mathcal{H}_{2}\right),End\left(\mathcal{H}_{N/2}\right)$
is not practically at hand, in order to carry over the estimation
process one needs only to be able to implement the measurement process
identified by $\mathcal{B}_{\pm,k}$, which amounts to experimentally
observing the probabilities $p_{\pm,k}=Tr\left[\ket{\alpha_{\pm,k}}\bra{\alpha_{\pm,k}}\rho\right]=Tr\left[\Pi_{\pm}\otimes\Pi_{k}\rho\right]$
i.e., the joint probabilities of an experiment carried over onto the
entire $\mathcal{H}_{N}$. And thus the probabilities pertaining to
the local observables $\left(\Pi_{\pm}\otimes\mathbb{I}_{N/2},\mathbb{I}_{2}\otimes\Pi_{k}\right)$
i.e., the marginals $p_{\pm},p_{k}$, can be easily derived.

The SLD in the $TPS^{R}$ can be written in terms of the new product
basis $\mathcal{B}_{\mathbf{\alpha=\pm,k\textrm{}}}=\left\{ \ket{\pm}\tilde{\otimes}\ket{k}\right\} $
as 
\[
L_{0}=\sum_{k=1,..,N/2}\alpha_{+,k}\left(\Pi_{+}\tilde{\otimes}\Pi_{k}-\Pi_{-}\tilde{\otimes}\Pi_{k}\right)
\]
where 
\begin{eqnarray}
\Pi_{+}\tilde{\otimes}\mathbb{I}_{N/2} & \equiv & \sum_{k=1}^{N/2}\ket{\alpha_{+,k}}\bra{\alpha_{+,k}}=\label{eq: newtps}\\
 & = & \left[\mathbb{I}_{N}+\sum_{k=1}^{N/2}\left(\ket{\alpha_{+,k}}\bra{\alpha_{+,k}}-\ket{\alpha_{-,k}}\bra{\alpha_{-,k}}\right)\right]/2\\
 & = & \frac{\left(\mathbb{I}_{2}+S_{y}\right)}{2}\tilde{\otimes}\mathbb{I}_{N/2}
\end{eqnarray}
where in the second line we have used $\sum_{k=1}^{N/2}\ket{\alpha_{+,k}}\bra{\alpha_{+,k}}=\mathbb{I}_{N}-\sum_{k=1}^{N/2}\ket{\alpha_{-,k}}\bra{\alpha_{-,k}}$.
Analogously 
\begin{eqnarray*}
\Pi_{-}\tilde{\otimes}\mathbb{I}_{N/2} & = & \frac{\left(\mathbb{I}_{2}-S_{y}\right)}{2}\tilde{\otimes}\mathbb{I}_{N/2}.
\end{eqnarray*}
On the other hand

\begin{equation}
\mathbb{I}_{2}\otimes\Pi_{k}\equiv\sum_{i=\pm}\ket{\alpha_{i,k}}\bra{\alpha_{i,k}}=\mathbb{I}_{2}\tilde{\otimes}|k\rangle\langle k|
\end{equation}
Given the previous definitions, the probabilities defined in the previous
section read 
\begin{eqnarray*}
p_{\pm,k}^{\lambda} & = & \bra{\alpha_{\pm,k}}\rho_{\lambda}\ket{\alpha_{\pm,k}}=Tr[\Pi_{\pm}\otimes\Pi_{k}\rho_{\lambda}]\\
p_{\pm}^{\lambda} & = & \sum_{k=1}^{N/2}\bra{\alpha_{\pm,k}}\rho_{\lambda}\ket{\alpha_{\pm,k}}=Tr[\Pi_{\pm}\otimes\mathbb{I}_{N/2}\rho_{\lambda}]\\
p_{k}^{\lambda} & = & \sum_{i=\pm}\bra{\alpha_{i,k}}\rho_{\lambda}\ket{\alpha_{i,k}}=Tr[\mathbb{I}_{2}\otimes\Pi_{k}\rho_{\lambda}]
\end{eqnarray*}
and they correspond to an experiment with joint $\left(\Pi_{\pm}\tilde{\otimes}\Pi_{k}\right)$
vs local $\left(\Pi_{\pm}\tilde{\otimes}\mathbb{I}_{N/2}\right),$$\left(\mathbb{I}_{2}\tilde{\otimes}\Pi_{k}\right)$
measurements onto $\rho_{\lambda}$. In general, the set of probabilities
$p_{\pm,k}^{\lambda},p_{\pm}^{\lambda},p_{,k}^{\lambda}$ are \textit{those
generated by the measurement of any observable $O$ commuting with
}$L_{0}$ onto $\rho_{\lambda}$. And the correlations relative to
those observables can be expressed by the mutual information.

We are now ready to derive result $4.3$ in Proposition 4. Given the
definition of mutual information 
\begin{equation}
\mathcal{M}_{L_{0}}^{\lambda}\equiv\mathcal{H}(p_{\pm}^{\lambda})+\mathcal{H}(p_{k}^{\lambda})-\mathcal{H}(p_{\pm,k}^{\lambda})
\end{equation}
one has from Eq. (\ref{eq:pproduct}) that $p_{\pm,k}^{0}=p_{\pm}^{0}p_{k}^{0}$
and therefore 
\[
\mathcal{M}_{L_{0}}^{0}=\mathcal{H}(p_{\pm}^{0})+\mathcal{H}(p_{k}^{0})-\mathcal{H}(p_{\pm,k}^{0})=0
\]
i.e., the observables that commute with $L_{0}$ are uncorrelated.
In terms of the probabilities the coherence function can be written
as 
\begin{equation}
Coh_{\mathcal{B}_{\pm,k}}(\rho_{\lambda})=-\mathcal{V}(\rho_{\lambda})+\mathcal{H}(p_{\pm,k}^{\lambda})\label{eq: Coherence as a function of MI 1}
\end{equation}
or alternatively 
\begin{equation}
Coh_{\mathcal{B}_{\pm,k}}(\rho_{\lambda})=-\mathcal{V}(\rho_{\lambda})+\mathcal{H}(p_{\pm}^{\lambda})+\mathcal{H}(p_{k}^{\lambda})-\mathcal{M}_{L_{0}}^{\lambda}.\label{eq: Coherence as function of MI 2}
\end{equation}
If now one computes the $\left[\partial_{\lambda}^{2}Coh_{\mathcal{B}_{\pm,k}}(\rho_{\lambda})\right]_{\lambda=0}$,
from (\ref{eq: Coherence as a function of MI 1}) one has that 
\begin{eqnarray*}
\left[\partial^{2}Coh_{\mathcal{B}_{\pm,k}}(\rho_{\lambda})\right]_{\lambda=0} & = & -\sum_{k}\left(\partial^{2}p_{k}^{\lambda}\right)_{\lambda=0}\log p_{k}^{0}-QFI
\end{eqnarray*}
while from (\ref{eq: Coherence as function of MI 2}) 
\begin{eqnarray*}
\left[\partial^{2}Coh_{\mathcal{B}_{\pm,k}}(\rho_{\lambda})\right]_{\lambda=0} & = & -\sum_{i}\frac{\left(\partial p_{i}^{\lambda}\right)_{\lambda=0}^{2}}{p_{i}^{0}}-\sum_{k}\left(\partial^{2}p_{k}^{\lambda}\right)_{\lambda=0}\log p_{k}^{0}+\\
 & - & \left(\partial^{2}\mathcal{M}^{\lambda}\right)_{\lambda=0}
\end{eqnarray*}
The results can be obtained using the above found relations (\ref{eq:ppluspminusk}),
(\ref{eq:pproduct}), and (\ref{eq:dlambdappluspminus}), for the
probabilities and their derivatives in $\lambda=0$. Equating the
previous two expression for the second order derivative of the coherence
one obtains 
\begin{eqnarray*}
QFI & = & FI_{2}+\left(\partial_{\lambda}^{2}\mathcal{M}_{L_{0}}^{\lambda}\right)_{\lambda=0}\\
 & = & \sum_{i}\frac{\left(\partial_{\lambda}p_{i}^{\lambda}\right)_{\lambda=0}^{2}}{p_{i}^{0}}+\left(\partial_{\lambda}^{2}\mathcal{M}_{L_{0}}^{\lambda}\right)_{\lambda=0}
\end{eqnarray*}
i.e., the result (\ref{eq: QFI in terms of partial2 MI}) in the main
text. According to the latter, the $QFI$ is composed by two contributions.
The first term is the single qubit Fisher Information $FI_{2}$ that
one would obtain by measuring $\Pi_{\pm}$ onto the single qubit reduced
density matrix $\xi_{\lambda}=Tr_{N/2}\left[\rho_{\lambda}\right]$.
The second term is given by the second order variation of the mutual
information $\mathcal{M}_{L_{0}}^{\lambda}$ between the relevant
observables $O$ that commute with the SLD $L_{0}$. Since we now
that $\mathcal{M}_{L_{0}}^{\lambda=0}=0$, the point $\lambda=0$
is a minimum for $\mathcal{M}_{L_{0}}^{\lambda}$ and therefore $\left(\partial_{\lambda}^{2}\mathcal{M}_{L_{0}}^{\lambda}\right)_{\lambda=0}>0$.

\section{$N$-dimensional mixed states maximal $QFI$\label{sec: Appendix Examples N dimensional-mixed states Maximal QFI}}

Suppose $\rho_{0}=\sum_{n}p_{n}\ket{n}\bra{n}$ is diagonal and the
$p_{n}$ are in \textit{decreasing order}. The QFI for general $N$-dimensional
mixed states reads 
\[
QFI=2\sum_{i\neq j}\frac{(p_{i}-p_{j})^{2}}{(p_{i}+p_{j})}|G_{ij}|^{2}=4\sum_{i<j}\frac{(p_{i}-p_{j})^{2}}{(p_{i}+p_{j})}|G_{ij}|^{2}
\]

\begin{prop} The problem of optimizing the QFI over all $G$ such
that $Tr[G^{2}]=\sum_{ij}|G_{ij}|^{2}\leq2\gamma^{2}$ has the following
solution: 
\[
\max_{Tr[G^{2}]\leq g^{2}}QFI=4\gamma^{2}\frac{(p_{1}-p_{N})^{2}}{(p_{1}+p_{N})}
\]
where the optimal $G$ has $|G_{1N}|=|G{}_{N1}|=\gamma$ and all the
remaining $|G|_{ij}=0$ (including $G_{ii}$). \end{prop}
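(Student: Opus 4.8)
The plan is to treat this as a linear optimization over the squared magnitudes of the matrix elements of $G$. Starting from the expression for the QFI of a mixed state quoted in the main text,
\[
QFI = 4\sum_{i<j}\frac{(p_i-p_j)^2}{(p_i+p_j)}\,|G_{ij}|^2,
\]
I would first note that the diagonal entries $G_{ii}$ never appear in the QFI, yet they do consume part of the norm budget, since $Tr[G^2] = \sum_i G_{ii}^2 + 2\sum_{i<j}|G_{ij}|^2$. Hence any maximizer must have $G_{ii}=0$, and the constraint $Tr[G^2]\le 2\gamma^2$ collapses to $\sum_{i<j}|G_{ij}|^2 \le \gamma^2$.

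Next, introducing nonnegative variables $x_{ij}=|G_{ij}|^2$ for $i<j$ and coefficients $c_{ij}=(p_i-p_j)^2/(p_i+p_j)$, the problem becomes the linear program: maximize $4\sum_{i<j}c_{ij}\,x_{ij}$ subject to $x_{ij}\ge 0$ and $\sum_{i<j}x_{ij}\le\gamma^2$. The feasible region is a simplex with vertices at the origin and at the points $\gamma^2 e_{(i,j)}$, and since the objective is linear with nonnegative coefficients, its maximum is attained at the vertex that concentrates the whole budget $\gamma^2$ on the single pair with the largest coefficient, giving the value $4\gamma^2\max_{i<j}c_{ij}$.

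The core of the argument, and the step I expect to require the most care, is to show that $c_{ij}$ is maximized at $(i,j)=(1,N)$. I would establish this by studying $f(a,b)=(a-b)^2/(a+b)$ on the region $a\ge b\ge 0$. A short computation gives
\[
\partial_a f = \frac{(a-b)(a+3b)}{(a+b)^2}\ge 0, \qquad \partial_b f = -\frac{(a-b)(3a+b)}{(a+b)^2}\le 0,
\]
so $f$ is nondecreasing in its larger argument and nonincreasing in its smaller one. Because the eigenvalues are ordered, any admissible pair satisfies $p_i\le p_1$ and $p_j\ge p_N$, and along the relevant paths the first slot stays $\ge$ the second; applying the two monotonicities in turn yields $c_{ij}=f(p_i,p_j)\le f(p_1,p_j)\le f(p_1,p_N)$.

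Finally, I would exhibit the optimizer explicitly: taking $\bar G=\gamma\bigl(\ket{1}\bra{N}+\ket{N}\bra{1}\bigr)$ gives $Tr[\bar G^2]=2\gamma^2$, saturating the constraint, and reproduces $QFI=4\gamma^2(p_1-p_N)^2/(p_1+p_N)$, so the bound is achieved. The only edge case worth a remark is that $p_1>0$ always, since $Tr[\rho_0]=1$, whence $p_1+p_N>0$ and $f$ is well defined throughout; if $\rho_0$ fails to be full rank one simply has $p_N=0$ and $f(p_1,0)=p_1$, with the conclusion unchanged.
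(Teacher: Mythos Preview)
Your proof is correct and follows essentially the same route as the paper: eliminate the diagonal entries, recast the maximization as a linear program over the simplex $\sum_{i<j}|G_{ij}|^2\le\gamma^2$, and then identify the extremal pair $(1,N)$ via a monotonicity argument on $(a-b)^2/(a+b)$. The only cosmetic difference is that the paper handles the last step by the substitution $x=p_j/p_i$ and the monotonicity of $(1-x)^2/(1+x)$, whereas you compute the two partial derivatives directly; both yield the same conclusion.
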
 \begin{proof}
Since the QFI only depends on the off-diagonal terms of $G$ when
represented in the $\rho_{0}$ eigenbasis, the optimization can be
done by considering operators $G$ such that, in the same basis, $G_{ii}=0,\ i=1,..N$.
Then the optimization problem can be written as follows: 
\[
\max\sum_{k=1}^{M}a_{k}x_{k}\quad\mbox{over}\sum_{k=1}^{M}x_{k}\leq\gamma^{2},x_{k}>0
\]
where $M=N(N-1)/2$; $\{a_{k}\}=\left\{ \frac{(p_{i}-p_{j})^{2}}{(p_{i}+p_{j})},i<j\right\} $
for $1\leq k\leq M$; and $\{x_{k}\}=\{|G_{ij}|^{2},i<j\}$ for $1\leq k\leq M$.
This is a simple \emph{linear program}\cite{Boyd}.\emph{ }The optimal
solution is found on a vertex of the feasible region defined by $\sum_{k=1}^{M}x_{k}\leq\gamma^{2},x_{k}>0$.
The vertices are the $M$ points $v_{1}=\{x_{1}=\gamma^{2},0,\dots,0\},\dots,v_{M}=\{0,\dots,0,x_{M}=\gamma^{2}\}$.
The maximum is the found at $v_{\ell}$ where $a_{\ell}=\max a_{k}$
and it is unique if $\max a_{k}$ is unique. We then have 
\[
\max_{Tr[G^{2}]\leq g^{2}/2}QFI=4\gamma^{2}\max_{ij}\left(\frac{(p_{i}-p_{j})^{2}}{(p_{i}+p_{j})}\right)
\]
It can be easily seen that $\max_{ij}\Big(\frac{(p_{i}-p_{j})^{2}}{(p_{i}+p_{j})}\Big)=\frac{(p_{1}-p_{N})^{2}}{(p_{1}+p_{N})}$
. Indeed for each pair $i,j$ with one has $\frac{(p_{i}-p_{j})^{2}}{(p_{i}+p_{j})}=p_{i}\frac{(1-x)^{2}}{(1+x)}$
where $x=p_{j}/p_{i}$ and we assume (without restriction of generality)
that $p_{i}>p_{j}$. Now, $\frac{(1-x)^{2}}{(1+x)}$ is a monotonically
decreasing function of $x$, so it attains it maximum for the minimum
$x$, given by $p_{N}/p_{1}$. Then, since $p_{i}\leq p_{1}$, we
have $\frac{(p_{i}-p_{j})^{2}}{(p_{i}+p_{j})}\leq\frac{(p_{1}-p_{N})^{2}}{(p_{1}+p_{N})}$.
$\square$ \end{proof} Let us now assume that the dimension $N$
is even. The optimal $G$ is $G_{1N}=\gamma$, which corresponds to
$G=\gamma\sigma_{x}$ in the $|1\rangle,|N\rangle$ subspace. We have
\[
L_{0}=i\gamma\frac{(p_{1}-p_{N})}{(p_{1}+p_{N})}(|1\rangle\langle N|-|N\rangle\langle1|)
\]
The eigenvalues of $L_{0}$ are 
\[
\alpha_{\pm,1}=\pm\gamma\frac{(p_{1}-p_{N})}{(p_{1}+p_{N})},\alpha_{\pm,k}=0\ \forall k=2,\dots,N/2
\]
As for the optimal measurement basis $\mathcal{B}_{\boldsymbol{\mathbf{\alpha}}}=\left\{ |\alpha_{\pm,1}\rangle\right\} \bigcup\left\{ |\alpha_{\pm,k}\rangle\right\} _{k=2}^{N/2}$
one has that 
\[
|\alpha_{\pm,1}\rangle=\frac{1}{\sqrt{2}}(|1\rangle\mp i|N\rangle)
\]
while since the kernel of $L_{0}$ has dimension $N-2$, one has a
lot of freedom in the choice of remaining part of the basis $\left\{ |\alpha_{\pm,k}\rangle\right\} _{k=2}^{N/2}$.
Whatever the choice of $\mathcal{B}_{\boldsymbol{\mathbf{\alpha}}}$
one has 
\[
p_{\pm,1}^{\lambda}=\frac{p_{1}+p_{N}}{2}\pm\frac{p_{1}-p_{N}}{2}\sin(2\lambda\gamma)
\]
and since $p_{\pm,k}^{\lambda}$ are independent of $\lambda$ for
any $k\ge2$, it follows that 
\begin{eqnarray*}
\left[\partial_{\lambda}Coh_{\mathcal{B}_{\boldsymbol{\mathbf{\alpha}}}}(\rho_{\lambda})\right]_{\lambda=0} & = & \sum_{i=\pm}\left(\partial_{\lambda}p_{i,1}^{\lambda}\right)_{\lambda=0}\log p_{i,1}^{0}=0
\end{eqnarray*}
i.e., the coherence has a critical point in $\lambda=0$. Moreover,
since $\left(\partial_{\lambda}^{2}p_{\pm,k}^{\lambda}\right)_{\lambda=0}=0\ \forall k$,
we get 
\begin{equation}
-\left[\partial_{\lambda}^{2}Coh_{\mathcal{B}_{\boldsymbol{\mathbf{\alpha}}}}(\rho_{\lambda})\right]_{\lambda=0}=QFI=4g_{\lambda}^{Bures}
\end{equation}
Therefore the $QFI$ is identically equal to the second order variation
of $Coh_{\mathcal{B}_{\boldsymbol{\mathbf{\alpha}}}}(\rho_{\lambda})$.
As for the decomposition of $QFI$ (\ref{eq: QFI in terms of partial2 MI})
will vary depending on the choice of the kernel's basis, since the
value of $p_{\pm}^{0}=\sum_{k}p_{\pm,k}^{0}$depends on the actual
choice and 
\begin{eqnarray*}
FI_{2} & = & \left(\partial_{\lambda}p_{+,1}^{\lambda}\right)_{\lambda=0}^{2}\left(\frac{1}{p_{+}^{0}\left(1-p_{+}^{0}\right)}\right)\\
\left(\partial_{\lambda}^{2}\mathcal{M}_{L_{0}}^{\lambda}\right)_{\lambda=0} & =QFI-FI_{2}= & \left(\partial_{\lambda}p_{+,1}^{\lambda}\right)_{\lambda=0}^{2}\left(\frac{2}{p_{+,1}^{0}}-\frac{1}{p_{+}^{0}\left(1-p_{+}^{0}\right)}\right)
\end{eqnarray*}
Since $p_{+}^{0}\geq p_{+,1}^{0}$, and $p_{+}^{0}\left(1-p_{+}^{0}\right)\leq1/4$
one has 
\[
4\left(\partial_{\lambda}p_{+,1}^{\lambda}\right)_{\lambda=0}^{2}\leq FI_{2}\leq\left(\partial_{\lambda}p_{+,1}^{\lambda}\right)_{\lambda=0}^{2}\left(\frac{1}{p_{+,1}^{0}\left(1-p_{+,1}^{0}\right)}\right)=QFI
\]
\[
\left(\partial_{\lambda}p_{+,1}^{\lambda}\right)_{\lambda=0}^{2}\left(\frac{2}{p_{+,1}^{0}}-4\right)\geq\left(\partial_{\lambda}^{2}\mathcal{M}_{L_{0}}^{\lambda}\right)_{\lambda=0}\geq0
\]
For all bases $\mathcal{B}_{\boldsymbol{\mathbf{\alpha}}}$ one always
has $\left(\partial_{\lambda}^{2}\mathcal{M}_{L_{0}}^{\lambda}\right)_{\lambda=0}>0$.
In particular, if the choice is such that $p_{\pm}^{0}=1/2$ one has
that the single qubit contribution $FI_{2}$ is minimal while $\left(\partial_{\lambda}^{2}\mathcal{M}_{L_{0}}^{\lambda}\right)_{\lambda=0}$
is maximal. For example one can choose for the kernel of $L_{0}$
the basis

\begin{eqnarray*}
|\alpha_{\pm,k}\rangle & = & \frac{1}{\sqrt{2}}(|2k-2\rangle\pm|2k-1\rangle)\ k=2,\dots,N/2
\end{eqnarray*}
where $|2k-2\rangle,|2k-1\rangle\ k=2,\dots,N/2$ are eigenstates
of $\rho_{0}$. Accordingly one can define the $TPS^{R}$ (\ref{eq: newtps})
$\mathcal{H}\sim\mathcal{H}_{2}\tilde{\otimes}\mathcal{H}_{N/2}$.
With respect with such representation the state reads 
\[
\rho=\sum_{k}p_{k}(\mathbb{I}_{2}+h_{k}S_{z})\tilde{\otimes}|k\rangle\langle k|
\]
with $h_{k}=\frac{p_{2k-2}-p_{2k-1}}{p_{2k-2}+p_{2k-1}},k>1$ and
$h_{1}=\frac{p_{1}-p_{N}}{p_{1}+p_{N}}$. On the other hand $G=S_{x}\tilde{\otimes}|1\rangle\langle1|$
i.e., $G$ acts as a conditional rotation on the single qubit. The
probabilities for measurement in the defined $|\alpha_{\pm,k}\rangle$
basis are

\[
p_{\pm,1}^{\lambda}=\frac{p_{1}+p_{N}}{2}\pm\frac{p_{1}-p_{N}}{2}\sin2\lambda\gamma,\qquad p_{\pm,k}^{\lambda}=\frac{p_{2k-2}+p_{2k-1}}{2},\quad k=2,\dots,N/2
\]
Thus, one obtains 
\[
p_{\pm}^{\lambda}=\frac{1}{2}\pm\frac{p_{1}-p_{N}}{2}\sin2\gamma\lambda,\qquad p_{k=1}^{\lambda}=p_{1}+p_{N},\qquad p_{k>1}^{\lambda}=p_{2k-2}+p_{2k-1}
\]
from which we have $\partial_{\lambda}p_{\pm}^{\lambda}=\pm2\gamma\frac{p_{1}-p_{N}}{2}\cos\lambda\gamma$
and finally 
\[
FI_{2}=\sum_{i=\pm}\frac{\left(\partial_{\lambda}p_{i}^{\lambda}\right)_{\lambda=0}^{2}}{p_{i}^{0}}=4\gamma^{2}(p_{1}-p_{N})^{2}=(p_{1}+p_{N})QFI
\]

\[
\left(\partial_{\lambda}^{2}\mathcal{M}_{L_{0}}^{\lambda}\right)_{\lambda=0}=4\gamma^{2}(p_{1}-p_{N})^{2}(\frac{1-p_{1}-p_{N}}{p_{1}+p_{N}})=QFI(1-p_{1}-p_{N})
\]
i.e., the result reported in the main text. The (maximal) value $\left(\partial_{\lambda}^{2}\mathcal{M}_{L_{0}}^{\lambda}\right)_{\lambda=0}$
vanishes in the limit of $p_{1}\rightarrow1$ and $p_{n}\rightarrow0,\ \forall n>1$
i.e., in the limiting case of a pure state.

\subsection*{Class of separable states}

Consider the (separable but generally discordant) states 
\begin{equation}
\rho_{0}=\sum_{k=1}^{N}p_{k}\tau_{k}\otimes\ket{k}\bra{k}\label{eq: disc state 1}
\end{equation}
where $\tau_{k}=\left(\mathbb{I}+\vec{n}_{k}\cdot\vec{\sigma}\right)/2$
are pure states in the $xy$ plane, $\vec{n}_{k}=(\cos\delta_{k},\sin\delta_{k},0)$,
and $G=\sigma_{z}\otimes\mathbb{I}_{N/2}$ . The SLD reads $L_{0}=\oplus_{k}L_{k}$
with $L_{k}=2\hat{\alpha}_{k}\cdot\vec{\sigma}$ and $\hat{\alpha}_{k}=\hat{n}_{k}\times\hat{z}$.
The eigenvectors of $L_{0}$ are 
\[
|\alpha_{\pm,k}\rangle=|\pm\hat{\alpha}_{k}\rangle\otimes|k\rangle
\]
where $|\pm\hat{\alpha}_{k}\rangle$ are the states corresponding
to the Bloch vectors $\pm\hat{\alpha}_{k}\cdot\vec{\sigma}$. The
$TPS^{R}$ construction, which allows writing $|\pm\hat{\alpha}_{k}\rangle\otimes|k\rangle=|\pm\rangle\tilde{\otimes}|k\rangle$
is nontrivial. However, as for applying Proposition 4, one needs only
to compute the joint marginal and probabilities for an experiment
in the SLD eigenbasis. We find 
\begin{eqnarray*}
p_{\pm,k}^{\lambda} & = & \frac{1}{2}(1\pm\sin2\lambda)p_{k}\\
p_{\pm}^{\lambda} & = & \frac{1}{2}(1\pm\sin2\lambda)
\end{eqnarray*}
from which we obtain $p_{\pm,k}^{\lambda}=p_{\pm}^{\lambda}p_{k}^{0}$
such that $\mathcal{M}_{L_{0}}^{\lambda}=0$ for all $\lambda$ and
\begin{eqnarray*}
QFI & = & \sum_{i=\pm,k}\left(\partial_{\lambda}p_{i,k}^{\lambda}\right)_{\lambda=0}^{2}/p_{i,k}^{0}=\sum_{k}p_{k}\sum_{i=\pm}\left(\partial_{\lambda}p_{i}^{\lambda}\right)_{\lambda=0}^{2}/p_{i}^{0}=\sum_{k}p_{k}QFI_{k}=\sum_{i=\pm}\left(\partial_{\lambda}p_{i}^{\lambda}\right)_{\lambda=0}^{2}/p_{i}^{0}=4
\end{eqnarray*}
The overall estimation precision is a weighted (in terms of the $p_{k}$)
sum of single qubit estimation precisions $QFI_{k}=\sum_{i=\pm}\left(\partial_{\lambda}p_{i}^{\lambda}\right)_{\lambda=0}^{2}/p_{i}^{0},\ \forall k$.
What matters is the variation of the coherence 
\[
-\left[\partial_{\lambda}^{2}Coh_{|\alpha_{\pm,k}\rangle}(\tau_{k})\right]_{\lambda=0}=QFI_{k}
\]
enacted by $G$ for each single qubit state $\tau_{k}$. The above
analysis holds for any generic state of the class (\ref{eq: disc state 1});
therefore it holds for any dimension $N$, for whatever probability
distribution $\left\{ p_{k}\right\} $ and thus for whatever value
of the discord between the subsystem $\mathcal{H}_{2}$ and $\mathcal{H}_{N/2}$.\\

If now for example one of the $\vec{n}_{h}=(\cos\delta_{h},0,\sin\delta_{h})$
does not lie in the $xy$ plane, the state has still discord, and
the above analysis still holds except that now for the specific $h$
the $QFI_{h}=4\sin^{2}\delta_{h}<4$ and thus the overall $QFI$ decreases
with respect to the previous case.\\
 As specific illustrative example of the above reasoning we choose
the following two-qubit states: 
\[
\rho_{1}=\left(\ket{0_{x}}\bra{0_{x}}\otimes\ket{0_{x}}\bra{0_{x}}+\ket{1_{x}}\bra{1_{x}}\otimes\ket{1_{x}}\bra{1_{x}}\right)/2
\]
\[
\rho_{2}=\left(\ket{0_{x}}\bra{0_{x}}\otimes\ket{0_{x}}\bra{0_{x}}+\ket{1_{z}}\bra{1_{z}}\otimes\ket{1_{x}}\bra{1_{x}}\right)/2
\]
where $\ket{0_{x,z}}$, $\ket{1_{x,z}}$ are an eigenstates of $\sigma_{x,z}$.
While $\rho_{1}$ has discord zero, $\rho_{2}$ has discord different
from zero. As generator of the phase shift we choose $G=\sigma_{z}\otimes\mathbb{I}_{2}$.
The estimation is a single qubit one and the overall $QFI$ is equal
to $4$ for $\rho_{1}$, while it is equal to $2$ for $\rho_{2}$.
Notice that presence of discord is not detrimental per se; it is detrimental
for the estimation procedure because, such kind of quantum correlations
are due to the presence of $\ket{1}_{zz}\bra{1}$ in $\rho_{2}$ which
however does not contribute to the estimation process.

\subsection*{GHZ state\label{sub:GHZ-state}}

The definition of the $TPS^{R}$ has been explicitly given in the
main text. The eigenvectors of $L_{0}$ are 
\begin{eqnarray*}
\ket{\pm}\tilde{\otimes}\ket{k} & = & \left(\ket{GHZ_{k}^{+}}\pm i\ket{GHZ_{k}^{-}}\right)/\sqrt{2}
\end{eqnarray*}
We first write the operator $G=\sum_{h}\sigma_{z}^{h}$ in $TPS^{R}$.
Each $\sigma_{z}^{h}$ acts on $M$-qubits states o\textit{f the computational
basis }$\{\ket{k}_{M}=\ket{k_{M},..,k_{1}}\}$ as: 
\[
\sigma_{z}^{h}\ket{k}_{M}=\left(-1\right)^{k_{h}}\ket{k}_{M}
\]
where $k_{h}$ is the $h$-th digit of the binary representation of
$k$. One has 
\begin{eqnarray*}
\sigma_{z}^{h}\ket{\pm}\tilde{\otimes}\ket{k} & = & \left(-1\right)^{k_{h}}\left(\pm i\right)\ket{\mp}
\end{eqnarray*}
and therefore $\sigma_{z}^{h}$ it can be represented within the $k$-th
sector as $\left(-1\right)^{k_{h}}S_{x}\otimes\Pi_{k}$ and on the
overall state space as 
\begin{equation}
\sigma_{z}^{h}=S_{x}\tilde{\otimes}\sum_{k}\left(-1\right)^{k_{h}}\Pi_{k}\label{eq: sigma_z in TPS}
\end{equation}
Consequently the whole Hamiltonian acts as 
\begin{equation}
\sum_{h}\sigma_{z}^{h}=S_{x}\tilde{\otimes}\sum_{k}\left[\left(\sum_{h}\left(-1\right)^{k_{h}}\right)\Pi_{k}\right]\label{eq: Hamiltonian  in TPS}
\end{equation}
where $\sum_{h}\left(-1\right)^{k_{h}}=M-2|k|$ is the difference
between the number of zeros $M-|k|$ and the number of ones $|k|$
present in the $M$ digits binary representation of $k$. Therefore
over the whole state 
\begin{eqnarray}
G=\sum_{h}\sigma_{z}^{h} & = & S_{x}\tilde{\otimes}\sum_{k}\left(M-2|k|\right)\Pi_{k}\label{eq: G in TPSR for GHZ}
\end{eqnarray}
The action of $U_{\lambda}=\exp-i\lambda G$ onto the initial state
$\rho_{0}=\sum_{k}p_{k}\ket{GHZ_{k}^{+}}\bra{GHZ_{k}^{+}}=\ket{0}_{zz}\bra{0}\tilde{\otimes}\sum_{k}p_{k}\Pi_{k}$
gives 
\begin{eqnarray*}
\rho_{\lambda} & = & \sum_{k}p_{k}\tau_{k}^{\lambda}\tilde{\otimes}\Pi_{k}.
\end{eqnarray*}
with $\tau_{k}^{\lambda}=e^{-i\lambda(M-2|k|)S_{x}}\ket{0}_{zz}\bra{0}e^{i\lambda(M-2|k|)S_{x}}$.
In each sector $k$ the state $\tau_{k}^{\lambda}$ is pure and its
Bloch vector is given by $(0,\sin(2\lambda(M-2|k|)),\cos(2\lambda(M-2|k|)))$.
Therefore the measurement onto the eigenstates of $S_{y}\otimes\Pi_{k}$
in each sector $k$

\[
p_{\pm,k}^{\lambda}=\frac{1}{2}\left[1\pm\sin2\lambda(M-2|k|)\right]p_{k}
\]
such that $p_{\pm,k}^{0}=p_{k}/2$ and since 
\[
p_{\pm,}^{\lambda}=\frac{1}{2}\sum_{k}\left[1\pm\sin2\lambda(M-2|k|)\right]p_{k}
\]
one has $p_{\pm}^{0}=1/2$. Furthermore

\begin{eqnarray*}
\left(\partial_{\lambda}p_{\pm,k}^{\lambda}\right)_{\lambda=0} & = & \pm(M-2|k|)p_{k}.
\end{eqnarray*}
and $QFI$ therefore is given by 
\begin{eqnarray*}
QFI & = & \sum_{k}(M-2|k|)^{2}p_{k}
\end{eqnarray*}
Furthermore from

\[
\left(\partial_{\lambda}p_{\pm,}^{\lambda}\right)_{\lambda=0}=\pm\sum_{k}(M-2|k|)p_{k}.
\]
one gets 
\begin{eqnarray*}
FI_{2} & = & 4\left(\sum_{k}(M-2|k|)p_{k}\right)^{2}.
\end{eqnarray*}

\section{QFI and Coherence for the GHZ state under noise\label{sec:Appendix F QFI GHZ states noisy}}

\subsection*{Noise map and its action on the GHZ state.}

The solution of the master equation (\ref{eq: MasterEquationLocal})
was given in Ref.\cite{Acin} and we report it here for the sake of
completeness. The single-qubit map $\Lambda_{\gamma,\omega}$ can
be written in Kraus form as $\Lambda_{\gamma,\omega}(\rho)=\sum_{i,j=\{0,x,y,z\}}S_{ij}\sigma_{i}\rho\sigma_{j}$
with $S_{00}=a+b$, $S_{xx}=d+f$, $S_{yy}=d-f$, $S_{zz}=a-b$, $S_{0z}=S_{z0}^{\ast}=ic$
with

\begin{eqnarray*}
a & = & e^{-\gamma/2t}\cosh\gamma t\\
b & = & e^{-\gamma/2t}\cos(\zeta_{\omega,\gamma}t)\\
c & = & 2\omega/\zeta_{\omega,\gamma}e^{-\gamma/2t}\sin(\zeta_{\omega,\gamma}t)\\
d & = & e^{-\gamma/2t}\sinh\gamma t\\
f & = & \gamma/\zeta_{\omega,\gamma}e^{-\gamma/2t}\sin(\zeta_{\omega,\gamma}t)
\end{eqnarray*}
with $\zeta_{\omega,\gamma}=\sqrt{4\omega^{2}-\gamma^{2}}$.

As shown in Ref.\cite{Acin}, acting on each qubit of the GHZ state
$\rho_{0}=\ket{GHZ_{0}^{+}}\bra{GHZ_{0}^{+}}$ with 
\[
\ket{GHZ_{0}^{\pm}}=\left(\ket{00\dots0}\pm\ket{11\dots1}\right)/\sqrt{2}
\]
the map yields a state $\rho_{\omega,\gamma}(t)$ that is block-diagonal
with 2-dimensional blocks. Indeed, the only nonzero off-diagonal elements
are 
\[
_{M}\bra{k}\rho_{\omega,\gamma}(t)\ket{\bar{k}}_{M}=\left(_{M}\bra{\bar{k}}\rho_{\omega,\gamma}(t)\ket{k}_{M}\right)^{\ast}
\]
where $\ket{k}_{M}\equiv\ket{k_{M},..,k_{1}},\ \ket{\bar{k}}_{M}\equiv\ket{\bar{k}_{M},..,\bar{k}_{1}},\ k=0,..,2^{M-1}-1$
is the computational basis of the global Hilbert space. One has 
\[
_{M}\bra{k}\rho_{\omega,\gamma}(t)\ket{\bar{k}}_{M}=\frac{1}{2}\left[f^{|k|}(b-ic)^{M-|k|}+f^{M-|k|}(b+ic)^{|k|}\right]
\]
where $|k|$ is the number of ones in the string $k_{1}\dots k_{M}$,
while the diagonal elements are 
\[
_{M}\bra{k}\rho_{\omega,\gamma}(t)\ket{k}_{M}=\frac{1}{2}\left[d^{|k|}a{}^{M-|k|}+a^{M-|k|}d{}^{|k|}\right]=_{M}\bra{\bar{k}}\rho_{\omega,\gamma}(t)\ket{\bar{k}}_{M}
\]
As a result, the state can be written as 
\begin{equation}
\rho_{\omega,\gamma}(t)=\sum_{k}r_{k}\left(\ket{k}_{MM}\bra{k}+\ket{\bar{k}}_{MM}\bra{\bar{k}}\right)+\left(s_{k}\ket{k}_{MM}\bra{\bar{k}}+h.c.\right)\label{eq:rhotlong}
\end{equation}

with $r_{k}=_{M}\bra{k}\rho_{\omega,\gamma}(t)\ket{k}_{M}$ and $s{}_{k}=_{M}\bra{k}\rho_{\omega,\gamma}(t)\ket{\bar{k}}_{M}$.

\subsection*{$TPS^{R}$ notation}

In the following we explicitly develop the calculations that allow
to write Eqs. (\ref{eq: master equation for rho_k_0 parallel noise}),
(\ref{eq: Differntial equation for rho_k_M tranverse noise: Coherent part}),
(\ref{eq: Differntial equation for rho_k_M tranverse noise: Decoherent part}).
We first start by writing the state $\rho_{\omega,\gamma}(t)$ in
the $TPS^{R}$ corresponding to the noisless case, see (\ref{eq: GHZ TPSR}).
The basis states are 
\begin{eqnarray}
\left(\ket{GHZ_{k}^{+}}\pm i\ket{GHZ_{k}^{-}}\right)/\sqrt{2} & = & \left((1\pm i)\ket{k}_{M}+(1\mp i)\ket{\bar{k}}_{M}\right)/2=\\
 & = & \ket{\pm}\tilde{\otimes}\ket{k}
\end{eqnarray}
where now $\mathcal{H}_{2^{M-1}}=span\left\{ \ket{k}\right\} $. The
initial state of the evolution is $\ket{GHZ_{0}^{+}}=\frac{\left(\ket{+}+\ket{-}\right)}{\sqrt{2}}\ket{0}$
while the state (\ref{eq:rhotlong}) can be written as 
\[
\rho_{\omega,\gamma}(t)=\sum_{k}p_{k}(t)\tau_{k}(\omega,t)\tilde{\otimes}|k\rangle\langle k|
\]
with 
\[
p_{k}(t)\tau_{k}(\omega,t)=\left(\begin{array}{cc}
r_{k}+Re(s_{k}) & -i\ Im(s_{k})\\
i\ Im(s_{k}) & r_{k}-Re(s_{k})
\end{array}\right)
\]

\subsection*{Parallel noise}

We now exploit the description of $\rho\left(\omega,t\right)$ and
$G$ (\ref{eq: G in TPSR for GHZ})in $TPS^{R}$ in order to write
the coherent part of the evolution (\ref{eq: Coherent part GHZ multi qubit})
as 
\begin{eqnarray}
-i\frac{\omega}{2}\left[\sum_{h}\sigma_{z}^{h},\rho\left(\omega,t\right)\right] & =\nonumber \\
-i\frac{\omega}{2}\sum_{k}\left[S_{x},\tau_{k}(\omega,t)\right]\tilde{\otimes}\left(N-2|k|\right)\Pi_{k}.\label{eq: Coherent part in TPS}
\end{eqnarray}
where $\tilde{\tau}_{k}(\omega,t)$ is the un-normalized single qubit
state pertaining to the sector $k$, each of which enjoys a coherent
dynamics described by 
\begin{equation}
-i\frac{\omega}{2}\left(N-2|k|\right)\left[S_{x},\tau_{k}(\omega,t)\right].
\end{equation}
We now focus on the decoherent part of the master equation (\ref{eq: Decoherent part GHZ multi qubit})
for the case of parallel noise i.e., $\alpha_{z}=1,\alpha_{x}=\alpha_{y}=0$
and $\mathcal{L}(\rho)=-\frac{\gamma}{2}\sum_{h}\left[\rho-\sum_{h}\sigma_{z}^{h}\rho\sigma_{z}^{h}\right]$.
Given the representation of $\sigma_{z}^{h}$ operators (\ref{eq: sigma_z in TPS}),
one finds that 
\begin{eqnarray}
\sum_{h}\sigma_{z}^{h}\rho\sigma_{z}^{h} & = & \sum_{k}S_{x}\left[p_{k}(t)\tau_{k}(\omega,t)\right]S_{x}\tilde{\otimes}\sum_{h}\left(-1\right)^{2k^{h}}\Pi_{k_{M}}=\nonumber \\
 & = & M\sum_{k}S_{x}\left[p_{k}(t)\tau_{k}(\omega,t)\right]S_{x}\tilde{\otimes}\Pi_{k}\label{eq: decoherent part in TPS parallel noise}
\end{eqnarray}
since $\sum_{h}\left(-1\right)^{2k^{h}}=M$ for all $k$'s. Therefore,
in the parallel noise case the decoherent part does not couple the
various sectors $k$. This together with the fact that the initial
state is $\frac{\left(\ket{+}+\ket{-}\right)}{\sqrt{2}}\ket{0}$ shows
that the noisy evolution takes place in the $k=0$ sector only. By
using (\ref{eq: Coherent part in TPS}) and (\ref{eq: decoherent part in TPS parallel noise})
the master equation reduces to the single differential equation for
the single qubit state $\tau_{0}$ reported in the main text i.e.,
\begin{eqnarray*}
\partial_{t}\tau_{0} & = & -\frac{iM\omega}{2}[S_{x},\tau_{0}]+\\
 & - & \frac{M\gamma}{2}\left[\tau_{0}-S_{x}\tau_{0}S_{x}\right]
\end{eqnarray*}

\subsection*{Transverse noise.}

In order to describe the representation of the master equation in
the case of transverse noise in the above introduced TPS, we first
give the representation of $\sigma_{x}^{h}$. For $h<M$, the latter
acts onto the computational basis states as: 
\[
\sigma_{x}^{h}\ket{k}_{M}=\ket{k_{M},..,k_{h+1},\bar{k}_{h},k_{h-1},..k_{1}}\equiv\ket{k'(h)}_{M}
\]
with $k'(h)\in[0,\dots,2^{M-1}-1]$ and analogously 
\[
\sigma_{x}^{h}\ket{\bar{k}}_{M}=\ket{\bar{k}_{M},..,\bar{k}_{h+1},k_{h},\bar{k}_{h-1},..\bar{k}_{1}}\equiv\ket{\overline{k'(h)}}_{M}
\]
where $k_{h}$ is the $h$-th digit of the binary representation of
$k$ and $\bar{k}_{h}$ its negated value. We have $k'(h)=k+(-1)^{k_{h}}2^{h-1}$
and the number of ones in the binary representation of $k'(h)$ is
given by $|k'(h)|=|k|+(-1)^{k_{h}}=|k|\pm1$. Therefore $\sigma_{x}^{h}$,
$h<M$ has the effect of a permutation of the $k$ sectors. 
\begin{eqnarray*}
\sigma_{x}^{h}\ket{\pm}\tilde{\otimes}\ket{k} & = & \ket{\pm}\tilde{\otimes}\ket{k(h)}
\end{eqnarray*}
For $h=M$, one gets 
\[
\sigma_{x}^{M}\ket{k}_{M}=\ket{\bar{k}_{M},k_{M-1}..,k_{1}}\equiv\ket{\bar{k}'(M)}_{M}
\]

\[
\sigma_{x}^{M}\ket{\bar{k}}_{M}=\ket{k_{M},\bar{k}_{M-1},..\bar{k}_{1}}\equiv\ket{k'(M)}_{M}
\]
such that 
\begin{eqnarray*}
\sigma_{x}^{M}\ket{\pm}\tilde{\otimes}\ket{k} & = & \ket{\mp}\tilde{\otimes}\ket{k'(M)}
\end{eqnarray*}
Here, $k'(M)=2^{M-1}-k-1$ and the number of ones in the binary representation
of $k'(M)$ is given by $|k'(M)|=M-1-|k|$. Each $k$ sector is coupled,
by means of $\sigma_{x}^{h}$'s, to the sectors $k'\left(h\right)$.
The representation of $\sigma_{x}^{h}$ in the $TPS^{R}$ is for $h<M$
\begin{eqnarray}
\sigma_{x}^{h} & = & \mathbb{I}_{2}\tilde{\otimes}O_{h}\label{eq: sigma_x in TPS h neq 1-1}
\end{eqnarray}
where the traceless unitary operator $O_{h}=\left(\sum_{k=0}\ket{k'(h)}\bra{k}+\ket{k}\bra{k'(h)}\right)=O_{h}^{\dagger}$
enacts a permutation on the basis states $\ket{k}$, while 
\begin{eqnarray}
\sigma_{x}^{M} & = & S_{z}\tilde{\otimes}O_{M}\label{eq: sigma_x in TPS h =00003D00003D 1}
\end{eqnarray}
with $O_{M}=\sum_{k=0}\ket{k'(M)}\bra{k}+\ket{k}\bra{k'(M)}$.

Let us check how decoherence works for a single $\rho_{k}$. As we
have already seen the coherent part of the evolution can be written
as 
\begin{eqnarray*}
Tr\left[\mathbb{I}_{2}\tilde{\otimes}\Pi_{k}\left(\frac{-i\omega}{2}\left[H,\rho(\omega,t)\right]\right)\right] & =\\
=\frac{-i\omega\left(M-2|k|\right)}{2}\left[S_{x},\tilde{\tau}_{k}\right].
\end{eqnarray*}
If one instead takes the whole trace $Tr_{\mathcal{H}_{2^{M-1}}}$,
one has to sum up the last relation for all $k$ obtaining 
\begin{equation}
\frac{-i\omega}{2}\sum_{k}\left(M-2|k|\right)\left[S_{x},\tilde{\tau}_{k}\right]\label{eq: Hamiltonian part reduced density matrix single qubit}
\end{equation}
As for the decoherent part we have to first write the term $\sigma_{x}^{h}\rho\sigma_{x}^{h}$.
One has that 
\begin{eqnarray*}
O_{h}\Pi_{k}O_{h}^{\dagger} & = & \Pi_{k'(h)}\\
O_{h}\Pi_{k'(h)}O_{h}^{\dagger} & = & \Pi_{k}
\end{eqnarray*}
and therefore for each $h<M$ 
\begin{eqnarray*}
\tilde{\tau}_{k}(\omega,t)\tilde{\otimes}\Pi_{k} & \rightarrow & \tilde{\tau}_{k}(\omega,t)\tilde{\otimes}\Pi_{k'(h)}\\
\tilde{\tau}_{k'(h)}(\omega,t)\tilde{\otimes}\Pi_{k'(h)} & \rightarrow & \tilde{\tau}_{k'(h)}(\omega,t)\tilde{\otimes}\Pi_{k}
\end{eqnarray*}
Therefore, one gets 
\begin{eqnarray*}
\sigma_{x}^{h}\rho\sigma_{x}^{h} & = & \sum_{k}\tilde{\tau}_{k}(\omega,t)\tilde{\otimes}\Pi_{k'(h)}=\sum_{k}\tilde{\tau}_{k'(h)}(\omega,t)\tilde{\otimes}\Pi_{k}
\end{eqnarray*}
The effect is therefore to reshuffle the original state upon exchanging
$\tilde{\tau}_{k}(\omega,t)\leftrightarrow\tilde{\tau}_{k'(h)}(\omega,t)$.
For $h=M$ one has

\begin{eqnarray*}
\sigma_{x}^{M}\rho\sigma_{x}^{M} & = & \sum_{k}S_{z}\tilde{\tau}_{k'(M)}S_{z}\tilde{\otimes}\Pi_{k}\\
\sigma_{x}^{M}\rho\sigma_{x}^{M} & = & \sum_{k}S_{z}\tilde{\tau}_{k}S{}_{z}\tilde{\otimes}\Pi_{k'(M)}
\end{eqnarray*}
The effect is thus to reshuffle the original state upon exchanging
$\tilde{\tau}_{k}(\omega,t)\leftrightarrow\tilde{\tau}_{k'(M)}(\omega,t)$
and apply a $S_{z}$ rotation. Taking everything into account, the
differential equation for a single $\tilde{\tau}_{k}(\omega,t)$ can
be written as 
\begin{eqnarray*}
\partial_{t}\tilde{\tau}_{k}(\omega,t) & = & -\frac{i\omega\left(M-2|k|\right)}{2}\left[S_{x},\tilde{\tau}_{k}\right]-\frac{\gamma}{2}\left(M\tilde{\tau}_{k}-\left[S_{z}\tilde{\tau}_{k'(M)}S_{z}+\sum_{h=1}^{M-1}\tilde{\tau}_{k'(h)}\right]\right)
\end{eqnarray*}
To obtain the evolution of the reduced state $\xi\left(t,\omega\right)$,
one should take the trace over $\mathcal{H}_{2^{M-1}}$ which just
corresponds to summing over $k$. Since$\sum_{k}\tilde{\tau}_{k}(\omega,t)=\xi\left(t,\omega\right)$
but also $\sum_{k}\tilde{\tau}_{k'(h)}=\xi\left(t,\omega\right)$,
and $\sum_{k'}\left(\sum_{h=1}^{M-1}\tilde{\tau}_{k'(h)}\right)=\left(M-1\right)\xi\left(t,\omega\right)$
we finally get

\begin{eqnarray*}
\partial_{t}\xi\left(t,\omega\right) & = & \frac{-i\omega}{2}\left[S_{x},\sum_{k}\left(M-2|k|\right)\tilde{\tau}_{k}\right]+\frac{\gamma}{2}\left(\xi\left(t,\omega\right)-S_{z}\xi\left(t,\omega\right)S_{z}\right)
\end{eqnarray*}

\section{Coherence and QPTs\label{sec: Appendix Coherence-and-QPTs}}

We consider a family of states $|0_{\lambda}\rangle$ that are the
ground states of the generic Hamiltonian $H_{\lambda}=H_{0}+\lambda V$
labeled by a continuous parameter $\lambda$. In the same notation
of Methods xx, we can write, to first order in $\delta\lambda$, $\ket{0^{\lambda+\delta\lambda}}=\ket{0^{\lambda}}+\delta\lambda\ket{v}$
where $\ket{v}$ is the first order correction one can obtain with
standard perturbative analysis \cite{Cohen}: 
\[
|v\rangle=|v^{\perp}\rangle=\sum_{n\neq0}\frac{\langle0^{\lambda}|V|n^{\lambda}\rangle}{(E_{n}^{\lambda}-E_{0}^{\lambda})}|n^{\lambda}\rangle
\]
with $|n^{\lambda}\rangle,E_{n}^{\lambda}$ eigenvectors and eigenvalues
of $H_{\lambda}$. It holds $\left\langle v|v\right\rangle =\sum_{n\neq0}\frac{\left|\langle0^{\lambda}|V|n^{\lambda}\rangle\right|^{2}}{\left(E_{n}^{\lambda}-E_{0}^{\lambda}\right)^{2}}$
and we define $\ket{\hat{v}}=\ket{v}/\sqrt{\left\langle v|v\right\rangle }$;
by construction $\left\langle 0^{\lambda}|\hat{v}\right\rangle =0$.
As eigenbasis of the SLD we can choose $\mathbf{\mathcal{B}_{\boldsymbol{\mathbf{\alpha}}}=}\left\{ |\alpha_{\pm}\rangle=\frac{1}{\sqrt{2}}(|0^{\lambda}\rangle\pm|v\rangle)\right\} \bigcup\left\{ |2\rangle,\dots|N\rangle\right\} $
with the only requirement that $\langle\alpha_{\pm}|n\rangle=0\ \forall n\geq2$.
By again using the same notations of Methods xx we obtain the measurement
probabilities 
\[
p_{\pm}^{\lambda+\delta\lambda}=\left|\left\langle 0^{\lambda+\delta\lambda}|\alpha_{\pm}\right\rangle \right|^{2}=\frac{1}{2}\left(1\pm2|v|\delta\lambda\right)+\mathcal{O}(\delta\lambda^{2})
\]
and 
\[
p_{n}^{\lambda}=\left|\left\langle 0^{\lambda+\delta\lambda}|n\right\rangle \right|^{2}=\mathcal{O}(\delta\lambda^{3})\ \forall n\geq2
\]
.Consequently we obtain the desired result 
\begin{eqnarray*}
QFI & = & -\left(\partial_{\delta\lambda}^{2}Coh_{\mathcal{B}_{\boldsymbol{\mathbf{\alpha}}}}\right)_{\delta\lambda=0}\\
 & = & \sum_{i=\pm}\frac{\left(\partial_{\delta\lambda}p_{i}^{\delta\lambda}\right)_{\delta\lambda=0}}{p_{i}^{0}}=4|v|^{2}=\\
 & = & 4\sum_{n\neq0}\frac{|\langle0^{\lambda}|V|n^{\lambda}\rangle|^{2}}{(E_{n}^{\lambda}-E_{0}^{\lambda})^{2}}=4g_{\lambda}^{FS}.
\end{eqnarray*}
Notice that, although the choice of $\left\{ |2\rangle,\dots|N\rangle\right\} $
is not unique the result holds for any of the possible choices as
long as $\langle\alpha_{\pm}|n\rangle=0\ \forall n\geq2$. The scaling
properties when $\lambda\rightarrow\lambda_{c}$ follow from the those
of $g_{\lambda}^{FS}$. 
\end{document}